\documentclass[journal ]{new-aiaa}
\usepackage[utf8]{inputenc}
\usepackage{textcomp}
\newtheorem{theorem}{Theorem}[section]
\newtheorem{lemma}[theorem]{Lemma}

\newtheorem{definition}[theorem]{Definition}

\newenvironment{proof}{\paragraph{Proof:}}{\hfill$\square$}

\usepackage{algorithm}
\usepackage[noend]{algpseudocode}
\usepackage{amsmath}
\usepackage{subcaption}
\usepackage{mathtools}
\usepackage{amscd}
\usepackage{upgreek}
\usepackage{amsfonts}
\usepackage{graphicx}
\usepackage{epstopdf}
\usepackage{caption}
\usepackage{setspace}
\usepackage{tikz}
\usepackage{color}
\usepackage{graphicx}
\usepackage{scalerel,stackengine}
\stackMath
\newcommand\reallywidehat[1]{%
	\savestack{\tmpbox}{\stretchto{%
			\scaleto{%
				\scalerel*[\widthof{\ensuremath{#1}}]{\kern-.6pt\bigwedge\kern-.6pt}%
				{\rule[-\textheight/2]{1ex}{\textheight}}
			}{\textheight}%
		}{0.5ex}}%
	\stackon[1pt]{#1}{\tmpbox}%
}
\usepackage{comment}
\usepackage{amsmath}
\usepackage[version=4]{mhchem}
\usepackage{siunitx}
\usepackage{longtable,tabularx}
\title{Model Predictive Static Programming for Discrete-Time Optimal Control on Lie Groups}

\author{Akhil B Krishna 
	 \footnote{Research Engineer, Center for Artificial Intelligence and Robotics (CAIR), New York University Abu Dhabi, bkakhil24@gmail.com} and
	Mangal Kothari 
	\footnote{Mangal Kothari was with the Department of Aerospace Engineering, Indian Institute of Technology Kanpur, Kanpur, India. He is currently with ADSAI, EDGE Group, Abu Dhabi, UAE. This work was performed while he was with the Indian Institute of Technology Kanpur.  Email: mangalgnc@gmail.com } }
\begin{document}
	
\maketitle

\begin{abstract}
This paper extends the Model Predictive Static Programming (MPSP) framework for nonlinear systems evolving on Euclidean spaces to the simple mechanical systems evolving on Lie groups. Classical optimal control approaches based on Pontryagin’s Maximum Principle (PMP) lead to nonlinear two-point boundary value problems (TPBVPs), whose numerical solution becomes particularly challenging on nonlinear configuration spaces. To overcome this difficulty, the proposed Lie-group MPSP framework reformulates the finite-horizon optimal control problem as a sequence of static quadratic programs that admit closed-form control updates, thereby avoiding the need to solve TPBVPs directly. The development relies on left-trivialized variations, intrinsic linearization on Lie groups, and a recursive computation of terminal sensitivity matrices, which together enable computationally efficient real-time implementation. The proposed method is demonstrated through optimal flipping maneuvers of a variable-pitch quadrotor (VPQ) and a single-main-rotor helicopter (SMRH), both of which are capable of generating negative thrust. For validation, continuous-time necessary and sufficient optimality conditions are derived, and the corresponding TPBVP solutions are compared against the trajectories generated by the proposed MPSP method in numerical simulations. In addition, the proposed algorithm is systematically compared with the iterative Linear Quadratic Regulator (iLQR) method, and a detailed numerical study is presented to highlight the relative performance and computational features of the two approaches.

\end{abstract}

\section{Introduction}\label{Sec:Introduction}

Optimal control of nonlinear dynamical systems evolving on Lie groups has attracted significant attention due to its direct relevance in aerospace, robotics, and mechanical systems. In many practical applications, such as rigid body attitude dynamics on $\mathrm{SO}(3)$ and coupled pose dynamics on $\mathrm{SE}$(3), the configuration space possesses a nonlinear geometric structure that cannot be accurately captured using Euclidean approximations. The Pontryagin maximum principle (PMP) provides first-order necessary conditions for optimal control problems on Euclidean spaces. However, the configuration space of many mechanical systems is not a Euclidean space but a curved space, typically modeled as a smooth manifold. For instance, nonlinear oscillators may evolve on higher-dimensional unit spheres. In such settings, conventional calculus on Euclidean spaces is not directly applicable. A smooth manifold is formed by a collection of overlapping open sets, called charts, each diffeomorphic to an open subset of a Euclidean space; thus, a smooth manifold is locally Euclidean. A simple mechanical system on a smooth manifold is characterized by the manifold as its configuration space, a smooth Lagrangian, and a set of external forces acting as control inputs. Motivated by this structure, several intrinsic controllers have been proposed in the literature for trajectory tracking of simple mechanical systems, without recourse to local coordinate charts \cite{FB-ADL:04,Bullo1999TrackingFF,maithripala2006almost,maithripala2015intrinsic,sanyal2009inertia,lee2006optimal,lee2012robust,lee2008time}. A Lie group is both a smooth manifold and a topological group equipped with smooth group operations. For example, the configuration space of a quadrotor is the special Euclidean group $\mathrm{SE}(3)$, while that of a rigid body is the special orthogonal group $\mathrm{SO}(3)$.

Several works have addressed optimal control on specific Lie groups. Discrete-time necessary and sufficient conditions for optimality have been derived for $\mathrm{SO}(3)$ in \cite{lee2008time,kulumani2017constrained} and for $\mathrm{SE}(3)$ in \cite{lee2006optimal}. More general formulations were developed in \cite{kobilarov2011discrete}, which extended optimality conditions to arbitrary matrix Lie groups \cite{joshi2021robust} and in \cite{phogat2018discrete}, which incorporated discrete mechanics. Subsequent works  further expanded these ideas and applied them to problems such as optimal attitude maneuvers of a swivelling biplane \cite{jirwankar2022discrete}. Despite these advances, PMP-based approaches invariably lead to two-point boundary value problems (TPBVPs). Solving such TPBVPs on nonlinear manifolds requires Lie‑group shooting methods, which are computationally intensive and sensitive to initialization, making them unsuitable for real-time applications.

In parallel, several real-time trajectory optimization methods have been developed for nonlinear systems. Differential Dynamic Programming (DDP) \cite{mayne1966ddp,jacobson1970ddp} and its simplified variant, iterative Linear Quadratic Regulation (iLQR) \cite{li2004ilqr,tassa2012synthesis}, are widely used in robotics and aerospace applications. These methods compute locally optimal control sequences through second-order expansions of the dynamics and backward Riccati recursions. While geometric variants of DDP have been proposed for systems evolving on Lie groups, they remain computationally demanding and sensitive to initialization, particularly for aggressive finite-time maneuvers. This motivates the search for alternative real-time optimal control methods that retain computational efficiency while respecting the underlying Lie-group structure.

Model Predictive Static Programming (MPSP), introduced in \cite{kothari2007hybrid,padhi2009model,kothari2010nonlinear}, offers an attractive alternative. MPSP combines ideas from nonlinear model predictive control and approximate dynamic programming, but avoids costate propagation and Riccati-type recursions. The key structural feature enabling this computational efficiency is that the MPSP cost penalizes only the control effort, while the terminal state is enforced as a hard constraint. This eliminates the need to propagate costates or solve Riccati equations, causing the optimality conditions to collapse into a static quadratic program with closed-form control updates. Consequently, MPSP intentionally trades exact optimality for real-time tractability. Unlike PMP-based methods, which require solving a nonlinear TPBVP, MPSP relies on local sensitivity information to generate fast closed-form control updates. Although many systems of interest, such as those evolving on $\mathrm{SO}(3)$ and $\mathrm{SE}(3)$, admit matrix representations, the formulation developed in this paper applies MPSP intrinsically on Lie groups and does not rely on any specific coordinate embedding.

MPSP has been successfully applied to a variety of aerospace guidance and control problems \cite{kothari2007hybrid,padhi2009model,kothari2010nonlinear}. However, existing formulations are confined to Euclidean spaces and do not incorporate the intrinsic geometry of nonlinear configuration spaces such as $\mathrm{SO}(3)$ and $\mathrm{SE}(3)$. Beyond its role as a real-time suboptimal control strategy, MPSP has also proven useful as an initialization tool for the nonlinear two-point boundary value problems arising from Pontryagin’s Maximum Principle (PMP). Since MPSP can generate a feasible trajectory with small terminal error at relatively low computational cost, it offers a high-quality initial guess that improves the robustness and convergence of shooting-based methods. This advantage is particularly significant on Lie groups, where such boundary value problems are often highly sensitive to initialization.

More broadly, the landscape of nonlinear optimal control methods reflects a fundamental trade-off between computational efficiency and geometric fidelity. Approaches based on PMP typically require the solution of a nonlinear two-point boundary value problem, making them computationally intensive and sensitive to the initial guess. Methods such as Differential Dynamic Programming (DDP) and iterative Linear Quadratic Regulator (iLQR) avoid explicit boundary value solvers, but rely on repeated linearizations and backward Riccati recursions, which may reduce robustness in strongly nonlinear regimes. In contrast, the proposed MPSP framework on Lie groups avoids both costate propagation and Riccati recursion, resulting in a computationally efficient closed-form control update. Moreover, by showing that the resulting update is equivalent to a Sequential Quadratic Programming (SQP) step, the method admits a rigorous optimization-based interpretation and inherits improved convergence properties. Most importantly, the proposed framework is developed intrinsically on Lie groups, thereby preserving the underlying geometric structure without resorting to local parameterizations.
Although some advances have been made in the geometric formulation of optimal control on matrix Lie groups, the majority of existing works are primarily concerned with theoretical development rather than computational efficiency. Methods based on Pontryagin’s Maximum Principle or discrete variational principles typically lead to two-point boundary value problems or iterative schemes that require backward–forward passes, making them computationally intensive and sensitive to initialization. Even in approaches that extend trajectory optimization techniques to Lie groups, the resulting algorithms often involve repeated linearizations and Riccati-type recursions. As a result, the literature lacks computationally efficient, structure-preserving algorithms that are explicitly designed for real-time optimal control on Lie groups, highlighting a critical gap that this work aims to address. Our earlier work on robust geometric control of a variable‑pitch quadrotor (VPQ) using the Super Twisting Algorithm \cite{krishna2022robust} focused on disturbance rejection and inverted flight capability. Building on that robustness‑oriented foundation, the present paper extends the MPSP framework to systems evolving on Lie groups, addressing optimality and computational efficiency. Many robotic, aerospace, and underwater vehicles can be modeled as simple mechanical systems (SMS), and intrinsic nonlinear control designs for such systems have been studied in \cite{FB-ADL:04,maithripala2006almost,maithripala2015intrinsic}.

The contributions of this work are summarized below:
\begin{enumerate}
    \item An intrinsic formulation of Model Predictive Static Programming for discrete-time systems evolving on Lie groups, based on left-trivialized variations and Lie algebra representations.
    \item A closed-form control update law that avoids both costate propagation and Riccati recursion, leading to reduced computational complexity.
    \item A theoretical connection between the proposed update law and Sequential Quadratic Programming, providing insight into convergence properties.
    \item The MPSP algorithm is compared with another algorithm iterative linear quadratic regulator (iLQR) and compared through numerical simulations.
    \item Application of the proposed algorithm to optimal flipping maneuvers of a variable‑pitch quadrotor (VPQ) and a single‑main‑rotor helicopter (SMRH). Unlike VPQ, where a single rigid‑body assumption suffices, the SMRH requires explicit modeling of rotor–fuselage coupling, motivating the extension of MPSP to handle actuator dynamics and parametric uncertainty.
    \item Derivation of continuous‑time necessary and sufficient optimality conditions for these systems, leading to TPBVPs that are used to benchmark the proposed MPSP approach through numerical simulations.
\end{enumerate}
The key novelty of this work lies in showing that MPSP can be formulated intrinsically on Lie groups using left-trivialized variations, leading to a computational structure that avoids both costate propagation and Riccati recursions. Furthermore, we establish a direct equivalence between the resulting update law and a Sequential Quadratic Programming (SQP) step, providing theoretical convergence guarantees that have not been previously established for MPSP in geometric settings. This work complements earlier robustness‑focused results on VPQ control \cite{krishna2022robust} by addressing optimality and computational efficiency through Lie‑group MPSP. The combination of these approaches lays the groundwork for future integration of robustness and optimality in real‑time control of nonlinear aerospace systems.

The rest of the paper is organized as follows. Section \ref{Sec:Mathematical_Preliminaries} reviews the relevant mathematical preliminaries. Section \ref{Sec:MPSP_on_Lie_groups} presents the proposed MPSP algorithm on Lie groups. Section \ref{Sec:VPQ_SMRH} discusses its application to VPQ and SMRH attitude maneuvers, while Section \ref{Sec:VPQ_SMRH_CT} provides the continuous-time optimality conditions. Numerical comparisons are presented in Section \ref{Sec:Numerical_simulation}, and concluding remarks are given in Section \ref{Sec:Conclusion}.

\section{Mathematical Preliminaries}\label{Sec:Mathematical_Preliminaries}

The geometric framework adopted in this section follows the intrinsic formulation from~\cite{FB-ADL:04}, which provides a standard left‑trivialized representation of simple mechanical systems on Lie groups. We summarize only the definitions and notation required for the development of the proposed Lie‑group MPSP algorithm. Throughout the paper, we use the notation of~\cite{FB-ADL:04}. The set of real numbers is represented by $\mathbb{R}$. The set of non-negative real numbers and the set of strictly positive real numbers are denoted by $
\mathbb{R}_{\ge 0}$ and $\mathbb{R}_{> 0}$ respectively. The set of real $n$ vectors is denoted by $\mathbb{R}^n$ and the set of $m \times n$ real matrices is denoted by $\mathbb{R}^{m \times n}$. The set of $m \times n$ zero matrix is denoted by $\mathbf{0}_{m \times n}$. The identity matrix of order $n$ is denoted by $\mathbf{I}_n$. The operations $\operatorname{det}\left(.\right)$ and $\operatorname{tr}\left(.\right)$ represent the determinant and trace of a matrix respectively. The operation $\Vert .\Vert$ is used to represent the Euclidean norm. 

Consider a smooth manifold $Q$ and let a point $q \in Q$, then $\mathrm{T}_qQ$ represents the tangent space at $q$. Let $X \in \Gamma^{\infty}\left(\mathrm{T}Q\right)$ be a smooth vector field, then the integral curve for $X$ is a curve $\gamma : I \to Q$ at $q_0 \in Q$ satisfying $\gamma'\left(t\right)=X\left(\gamma \left(t\right)\right),~~ t \in I$. Now the flow of $X$ is defined as $\Phi^{X}_t\left(q_0\right)=\gamma\left(t\right)$.  For $X \in \Gamma^{\infty}\left(\mathrm{T}Q\right)$ and $f \in C^{\infty}\left(Q\right)$ be a smooth function on Q, then Lie derivative with respect to $X$ and denote by $\mathcal{L}_X : C^{\infty}\left(Q\right)\to \mathbb{R}$ defined as $\mathcal{L}_Xf\left(q\right)=\left\langle \operatorname{d}f\left(q\right);X\left(q\right) \right\rangle$, where $\operatorname{d}f\left(q\right)\in \mathrm{T}^{*}_qQ $ and $\mathrm{T}^{*}_qQ$ represents the dual of $\mathrm{T}_qQ$. For every smooth vector fields $X, Y \in \Gamma^{\infty}\left(\mathrm{T}Q\right)$, a vector field $\left[X,Y\right]\in \Gamma^{\infty}\left(\mathrm{T}Q\right)$ that satisfies $\mathcal{L}_{\left[X,Y\right]}f=\mathcal{L}_X \mathcal{L}_Yf-\mathcal{L}_Y\mathcal{L}_Xf,~~f\in C^{\infty}\left(Q\right)$. An affine connection on $Q$ is a mapping $\left(X,Y\right)\to \nabla_X Y \in \Gamma^{\infty}\left(\mathrm{T}Q\right) $, for $X,Y \in \Gamma^{\infty}\left(\mathrm{T}Q\right)$,
	\begin{enumerate}
		\item the mapping $\left(X,Y\right)\to \nabla_X Y$ is $\mathbb{R}-$bilinear.
		\item $\nabla_{fX}Y=f\nabla_X Y,~~f\in C^{\infty}\left(Q\right)$.
		\item $\nabla_X\left(fY\right)=f\nabla_XY+\left(\mathcal{L}_Xf\right)Y$.
	\end{enumerate}
	A Riemannian manifold is denoted by the $2-$ tuple $\left(Q,\mathbb{G}\right)$, where $Q$ is a smooth connected manifold and $\mathbb{G}$ is the metric on $Q$. The flat map $\mathbb{G}^{\flat}:\mathrm{T}_qQ \to \mathrm{T}^{*}_qQ$ is given by $\mathbb{G}\left(v_1,v_2\right)=\left\langle \mathbb{G}^\flat\left(v_1\right);v_2\right\rangle$ for $v_1,v_2 \in \mathrm{T}_qQ$ and the sharp map is its dual $\mathbb{G}^\sharp:\mathrm{T}^{*}_qQ\to \mathrm{T}_qQ$ and given by $\mathbb{G}^{-1}\left(w_1,w_2\right)=\left\langle \mathbb{G}^{\sharp}\left(w_1\right),w_2 \right\rangle$ where $w_1,w_2 \in \mathrm{T}^{*}_qQ$. For a Riemannian metric $\mathbb{G}$, there exists a unique affine connection $\stackrel{\mathbb{G}}{\nabla}$, the Levi-Civita connection satisfying
	\begin{enumerate}
		\item $\mathcal{L}_Z \left(\mathbb{G}\left(X,Y\right)\right)=\mathbb{G}\left(\stackrel{\mathbb{G}}{\nabla}_ZX,Y\right)+\mathbb{G}\left(X,\stackrel{\mathbb{G}}{\nabla_ZY}\right)$.
		\item $\stackrel{\mathbb{G}}{\nabla}_X Y-\stackrel{\mathbb{G}}{\nabla}_Y X=\left[X,Y\right]$.
	\end{enumerate}
	Let $\left(G,\star\right)$ be a Lie group where $\star$ represents the binary operation. Let $\mathfrak{g}$ be its Lie algebra and $e$ be the identity element of $G$. The adjoint operator $\operatorname{ad}_\xi : \mathfrak{g} \to \mathfrak{g}$ is defined as $\operatorname{ad}_{\xi}\eta=\left[\xi,\eta\right],~~\forall \xi,\eta \in \mathfrak{g}$, where $\left[ \cdot,\cdot \right]$ is the Lie bracket operation on $\mathfrak{g}$. Next, defining the left translation map $\mathscr{L}_g: G\to G,~~g\in G$ as $\mathscr{L}_g\left(h\right)=g \star h,~~h \in G$. Similarly the right translation map $\mathscr{R}_g: G\to G,~~g\in G $ is defined as $\mathscr{R}_g\left(h\right)=h \star g$. A vector field $X$ on $G$ is left-invariant if $X\left(g \star h\right)=\mathrm{T}_h\mathscr{L}_g\left(X\left(h\right)\right),~~\forall g,h \in G$, where $\mathrm{T}_h\mathscr{L}_g: \mathrm{T}_hG \to \mathrm{T}_{\mathscr{L}_g\left(h\right)}G$ represents the tangent map to $\mathscr{L}_g$ at $h$. From the above definition, it can be inferred that left-invariant vector fields are identified by their value at the identity via the equality $X\left(g\right)=\mathrm{T}_e\mathscr{L}_g\xi$, where $\mathrm{T}_eG \in \xi=X\left(e\right)$. Thus a left-invariant vector field can be defined by the map $g \mapsto \mathrm{T}_e\mathscr{L}_g\left(\xi\right)$. A left-invariant vector field on $G$ is defined by $\xi\left(e\right)=\xi$. Given $g \in G$, define the conjugation map $\mathrm{I}_g:G \to G$ as $h \mapsto g \star h \star g^{-1}$, i.e, $\mathrm{I}_g=\mathscr{L}_g \circ \mathscr{R}_{g^{-1}}$. Consider the tangent map to $\mathrm{I}_g$ at $e$, $\mathrm{T}_e \mathrm{I}_g$ and is called as adjoint map $\operatorname{Ad}_g :\mathfrak{g} \to \mathfrak{g}$ defined as $\mathrm{I}_g \left(\operatorname{exp}\left(\xi\right)\right) = \operatorname{exp}\left(\operatorname{Ad}_g \left(\xi\right)\right)$ and $\left.\frac{\operatorname{d}}{\operatorname{dt}}\right|_{t=0}\operatorname{Ad}_{\operatorname{exp}\left(t\xi\right)}\left(\eta\right)=\operatorname{ad}_{\xi}\eta$. For $g \in G$, the tangent map $\mathrm{T}_e\mathscr{L}_g:\mathrm{T}_eG\to \mathrm{T}_gG$ is a natural isomorphism between $\mathrm{T}_eG$ and $\mathrm{T}_gG$, i.e, $\mathrm{T}_gG \simeq \mathrm{T}_eG$. Thus there also exists an isomorphism between $\mathrm{T}G$ and $G \times \mathrm{T}_eG$ given by $v_g \mapsto \left(g,\mathrm{T}_g\mathscr{L}_{g^{-1}}\left(v_g\right)\right)$ or in other words $\mathrm{T}G \simeq G \times \mathrm{T}_eG$.
	The set $\mathrm{GL}\left(n;\mathbb{R}\right)$ called the real general linear group which represents invertible $n \times n$ matrices with real entries is a Lie group with respect to the operation of matrix multiplication. The identity element is $\mathbf{I}_n$ and the inverse element to $A \in \mathrm{GL}\left(n;\mathbb{R}\right)$ is $A^{-1}$. A matrix Lie group is a subgroup of $\mathrm{GL}\left(n;\mathbb{R}\right)$. Let $\mathfrak{gl}\left(n;\mathbb{R}\right)$ be the Lie algebra of $\mathrm{GL}\left(n;\mathbb{R}\right)$ and we have $\mathfrak{gl}\left(n;\mathbb{R}\right)\simeq \mathbb{R}^{n \times n}$. Consider $A \in \mathfrak{gl}\left(n;\mathbb{R}\right)$, then for all $g \in G$, we have $\mathrm{T}_{\mathbf{I}_n}\mathscr{L}_g\left(A\right)=gA \in \mathrm{T}_gG$. A Riemannian metric $\mathbb{G}$ on a Lie group $\left(G,\star\right)$ is left-invariant if , $	\mathbb{G}\left(g\right)\cdot\left(X_g,Y_g\right)=\mathbb{G}\left(h \star g\right)\cdot\left(\mathrm{T}_g\mathscr{L}_h\left(X_g\right),\mathrm{T}_g\mathscr{L}_h\left(Y_g\right)\right)$ for all $g,h \in G$ and $X_g,Y_g \in \mathrm{T}_gG$. An inner product $\mathbb{I}$ on $\mathfrak{g}$ determines a smooth left-invariant Riemannian metric $\mathbb{G}_{\mathbb{I}}$ on $G$ via left translation $	\mathbb{G}_{\mathbb{I}}\left(g\right)\cdot \left(X_g,Y_g\right)=\mathbb{I}\left(\mathrm{T}_g\mathscr{L}_{g^{-1}}\left(X_g\right),\mathrm{T}_g\mathscr{L}_{g^{-1}}\left(Y_g\right)\right)$. An affine connection on $G$ is left-invariant if and only if the co-variant derivative of any two left-invariant if and only if the co-variant derivative of any two left-invariant vector fields on $G$ is a left-invariant vector field on $G$. Also given a left-invariant affine connection $\nabla$, there exists a unique bilinear map $B : \mathfrak{g} \times \mathfrak{g} \to \mathfrak{g}$ such that for all $\xi,\eta \in \mathfrak{g}$, we have $\nabla_{\xi_L}\eta_L= \left(B\left(\xi,\eta\right)\right)_L$. The converse of the above statement is also true. Let $\mathbb{I}$ be an inner product on $\mathfrak{g}$ and let $\mathbb{G}_{\mathbb{I}}$ be its associated left-invariant Riemannian metric. The Levi-Civita connection $\stackrel{\mathbb{G}_{\mathbb{I}}}{\nabla}$ induced by $\mathbb{G}_{\mathbb{I}}$ is left-invariant and the corresponding bilinear map, denoted by $\stackrel{\mathfrak{g}}{\nabla}:\mathfrak{g}\times \mathfrak{g} \to \mathfrak{g}$ is given by
	\begin{equation}\label{Eq:Bilinear_map_eqn}
		\stackrel{\mathfrak{g}}{\nabla}_\xi\eta=\frac{1}{2}\left[\xi,\eta\right]-\frac{1}{2}\mathbb{I}^{\sharp}\left(\operatorname{ad}^{*}_\xi\mathbb{I}^{\flat}\left(\eta\right)+\operatorname{ad}^{*}_\eta\mathbb{I}^{\flat}\left(\xi\right)\right)
	\end{equation}
	where $\operatorname{ad}^{*}_\xi:\mathfrak{g}^{*}\to \mathfrak{g}$ is the dual operator of $\operatorname{ad}_\xi$ defined by $\left\langle\operatorname{ad}^{*}_\xi\alpha,\eta \right\rangle=\left\langle \alpha,\left[\xi,\eta\right]\right\rangle$ for all $\alpha \in \mathfrak{g}^{*}$. Next, we introduce the notion of a simple mechanical system whose configuration space is a Lie group, which provides the geometric setting for our optimal control formulation. In this context, the configuration of the system evolves on a finite‑dimensional Lie group $G$, while the kinetic energy is defined by a left‑invariant Riemannian metric induced by an inner product on the Lie algebra $\mathfrak{g}$. External and control forces enter through a fixed set of left‑invariant covectors, so that the dynamics can be written intrinsically in body coordinates without resorting to local parameterizations. This definition is standard in geometric mechanics and captures many rigid‑body models of interest, thereby allowing us to exploit the Lie‑group structure systematically in the design and analysis of the MPSP algorithm.

	\begin{definition}\cite{FB-ADL:04}
		A simple mechanical control system (SMS) on a Lie group is a 4-tuple $\left(G,\mathbb{I},\mathcal{F},\mathcal{U}\right)$, where
		\begin{enumerate}
			\item $G$ is an $n-$ dimensional Lie group, defining the configuration space.
			\item $\mathbb{I}$ is an inner product on the Lie algebra $\mathfrak{g}$, defining the kinetic energy metric via left translation.
			\item $\mathcal{F}=\left\lbrace f^{1},\dots,f^{m}\right\rbrace$ is a collection of covectors in $\mathfrak{g}^{*}$, defining a collection of left-invariant control forces.
			\item $\mathcal{U} \subset \mathbb{R}^n$
		\end{enumerate}
	\end{definition}
	Let $\gamma:\mathbb{R}_{\ge 0}\to G$ be the controlled trajectory and define the body velocity as the curve $t \mapsto v\left(t\right)=\mathrm{T}_{\gamma\left(t\right)}\mathscr{L}_{\gamma\left(t\right)}\left(\gamma'\left(t\right)\right)\in \mathfrak{g}$. The controlled equations of motion are
	\begin{equation}\label{Eq:Dynamics_SMS_on_Lie_Group}
		\begin{split}
			\dot{\gamma}\left(t\right)&=\mathrm{T}_e\mathscr{L}_{\gamma\left(t\right)}\left(v\left(t\right)\right)\\
			\dot{v}\left(t\right)-\mathbb{I}^{\sharp}\left(\operatorname{ad}^{*}_{v\left(t\right)}\mathbb{I}^{\flat}\left(v\left(t\right)\right)\right)&= \mathbb{I}^{\sharp}\left(f\left(t,\gamma\left(t\right),v\left(t\right)\right)\right)
		\end{split}
	\end{equation}
	where state feed back force $t \mapsto f\left(t,\gamma\left(t\right),v\left(t\right)\right)\in \mathfrak{g}^{*}$ is defined by control input $u^a: \mathbb{R}_{\ge 0}\times G \times \mathfrak{g} \to \mathbb{R}^n$ by the expression
	\begin{equation*}
		f\left(t,g,v\right)=\sum_{a=1}^{n}f^au^a\left(t,g,v\right)
	\end{equation*}
	Next, we develop the MPSP algorithm for systems evolving on Lie groups, formulated entirely in terms of left‑trivialized states and intrinsic Lie‑algebra variations. Building on the geometric preliminaries and the simple mechanical system model, we derive a discrete‑time deviation dynamics on the Lie algebra that captures how perturbations in the control history affect the terminal state on the manifold. This structure is then used to construct, at each iteration, a static quadratic program in the control variations, where the cost penalizes control effort and the terminal state is enforced as a hard equality constraint through the terminal sensitivity matrices. The resulting subproblem admits a closed‑form solution, avoiding Riccati recursions and costate propagation, and yields an efficient update rule that respects the underlying Lie‑group geometry while enabling real‑time implementation.
	\section{Model Predictive Static Programming on Lie Groups}\label{Sec:MPSP_on_Lie_groups}

    Let $G$ be a finite-dimensional Lie group with identity $e$ and Lie algebra
	$\mathfrak g := \mathrm{T}_eG$. For $g\in G$ denote the left-translation by
	$\mathscr{L}_g(h):=gh$ and its differential by $\mathrm{T}_h\mathscr{L}_g:\mathrm{T}_hG\to \mathrm{T}_{gh}G$.
	The \emph{left Maurer-Cartan form} $\omega^\mathscr{L}\in\Omega^1(G;\mathfrak g)$ is defined as $\omega^\mathscr{L}_g(\xi_g):=\mathrm{T}_g\mathscr{L}_{g^{-1}}\xi_g,\forall\,\xi_g\in \mathrm{T}_gG$, \cite{lee2003smooth} .Thus, $\omega^\mathscr{L}_g$ identifies a tangent vector at $g$ with an element of $\mathfrak g$ by translating it to the identity through left multiplication. Consider a smooth two-parameter family $g:\mathcal U\subset\mathbb R^2\to G$,
	$(t,\epsilon)\mapsto g(t,\epsilon)$, and define the $\mathfrak g$-valued maps
	\begin{equation}\label{eq:v_eta_eps}
		v(t,\epsilon):=\omega^\mathscr{L}_{g(t,\epsilon)}\!\left(\frac{\partial g}{\partial t}(t,\epsilon)\right),
		\qquad
		\eta(t,\epsilon):=\omega^\mathscr{L}_{g(t,\epsilon)}\!\left(\frac{\partial g}{\partial \epsilon}(t,\epsilon)\right).
	\end{equation}
	The map $v(t,\epsilon)\in\mathfrak g$ is the \emph{left-trivialized (body) velocity} along $t$, while $\eta(t,\epsilon)\in\mathfrak g$ is the \emph{left-trivialized variation} along $\epsilon$. In particular, with $g(t):=g(t,0)$ we write
	\begin{equation}\label{eq:v_eta_nominal}
		v(t):=v(t,0)=\omega^\mathscr{L}_{g(t)}(\dot g(t))\in\mathfrak g,\qquad
		\eta(t):=\eta(t,0)=\omega^\mathscr{L}_{g(t)}(\delta g(t))\in\mathfrak g,
	\end{equation}
	where $\delta g(t):=\left.\frac{\partial g}{\partial\epsilon}(t,\epsilon)\right|_{\epsilon=0}$.
	Moreover, define the variation of the body velocity by
	\begin{equation}\label{eq:delta_v_def}
		\delta v(t):=\left.\frac{\partial}{\partial\epsilon}v(t,\epsilon)\right|_{\epsilon=0}\in\mathfrak g.
	\end{equation}
	The Maurer-Cartan structure equation for $\omega^\mathscr{L}$ reads
	\begin{equation}\label{eq:MC_structure_eq}
	\operatorname{	d}\omega^\mathscr{L}+\tfrac12[\omega^\mathscr{L},\omega^\mathscr{L}]=0,
	\end{equation}
	where $[\omega^\mathscr{L},\omega^\mathscr{L}](X,Y):=[\omega^\mathscr{L}(X),\omega^\mathscr{L}(Y)]$ for vector fields $X,Y$ on $G$ and
	$[\cdot,\cdot]$ is the Lie bracket on $\mathfrak g$. Pulling back \eqref{eq:MC_structure_eq} by $g(\cdot,\cdot)$ and setting
	$\alpha:=g^*\omega^\mathscr{L}\in\Omega^1(\mathcal U;\mathfrak g)$ yields
	\begin{equation}\label{eq:MC_pullback_eq}
		\operatorname{d}\alpha+\tfrac12[\alpha,\alpha]=0.
	\end{equation}
	Since $\alpha(\partial_t)=v(t,\epsilon)$ and $\alpha(\partial_\epsilon)=\eta(t,\epsilon)$ by \eqref{eq:v_eta_eps}, evaluating \eqref{eq:MC_pullback_eq} on the commuting coordinate vector fields $(\partial_\epsilon,\partial_t)$ gives
	\begin{align}\label{eq:MC_eval_plane}
		0
		&=(\operatorname{d}\alpha)(\partial_\epsilon,\partial_t)+\tfrac12[\alpha,\alpha](\partial_\epsilon,\partial_t)\nonumber=\partial_\epsilon\!\big(\alpha(\partial_t)\big)-\partial_t\!\big(\alpha(\partial_\epsilon)\big)
		-\alpha([\partial_\epsilon,\partial_t])
		+\tfrac12\Big([\alpha(\partial_\epsilon),\alpha(\partial_t)]-[\alpha(\partial_t),\alpha(\partial_\epsilon)]\Big)\nonumber\\
		&=\partial_\epsilon v(t,\epsilon)-\partial_t\eta(t,\epsilon)+[\eta(t,\epsilon),v(t,\epsilon)].
	\end{align}
	Using antisymmetry $[\eta,v]=-[v,\eta]$, \eqref{eq:MC_eval_plane} becomes
	\begin{equation}\label{eq:delta_v_identity_eps}
		\partial_\epsilon v(t,\epsilon)=\partial_t\eta(t,\epsilon)-[v(t,\epsilon),\eta(t,\epsilon)].
	\end{equation}
	Finally, evaluating \eqref{eq:delta_v_identity_eps} at $\epsilon=0$ and invoking \eqref{eq:delta_v_def} yields
	\begin{equation}\label{eq:delta_v_identity}
		\partial v(t)=\dot\eta(t)-[v(t),\eta(t)].
	\end{equation}
	Rearranging \eqref{eq:delta_v_identity} gives the intrinsic variation relation
	\begin{equation}\label{eq:eta_dot_final}
		\dot\eta(t)=[v(t),\eta(t)]+\partial v(t).
	\end{equation}
	Equivalently, with $\operatorname{ad}_v:\mathfrak g\to\mathfrak g$ defined by $\operatorname{ad}_v(\eta):=[v,\eta]$, we have
	\begin{equation}\label{eq:eta_dot_ad}
		\dot\eta(t)=\operatorname{ad}_{v(t)}\eta(t)+\partial v(t).
	\end{equation}
	Identity \eqref{eq:eta_dot_ad} is purely geometric and does not require any metric or invariance assumptions; it
	captures the non-commutation between time differentiation and variation under left trivialization. To implement MPSP, we represent Lie-algebra and dual Lie-algebra quantities as column vectors.	Let $\{e_1,\dots,e_n\}$ be a basis of $\mathfrak g$ and $\{e^1,\dots,e^n\}$ its dual basis of $\mathfrak g^*$.
	Then $\eta$ and $v$ admit the expansions $\eta=\eta^i e_i, v=v^i e_i,$ and a control input $u\in\mathfrak g^*$ can be written as $u=u_i e^i,$ where Einstein summation is used. Define the corresponding column vectors
	\begin{equation}\label{eq:column_vectors}
		\underline{\eta} := \begin{bmatrix}\eta^1&\cdots&\eta^n\end{bmatrix}^\top,\quad
		\underline{v} := \begin{bmatrix}v^1&\cdots&v^n\end{bmatrix}^\top,\quad
		\underline{u} := \begin{bmatrix}u_1&\cdots&u_n\end{bmatrix}^\top.
	\end{equation}
	Moreover, let the Lie bracket be encoded by structure constants $c^k_{ij}$ defined by $	[e_i,e_j]=c^k_{ij}\,e_k.$ Then, for $v=v^ie_i$ and $\eta=\eta^je_j$, we have $	[v,\eta]=v^i\eta^j c^k_{ij}\,e_k$. The term $	\underline{[v,\eta]}$ has components $([v,\eta])^k=v^i\eta^j c^k_{ij}$. Thus, for fixed $v$, the map $\eta\mapsto [v,\eta]$ is linear in $\eta$ and can be represented by a matrix constructed from $v$ and the structure constants. The dynamics of the system is given by 
	\begin{equation}\label{eq:v_dyn}
		\dot v(t)=\mathbb I^{\sharp}\!\Big(\operatorname{ad}_v^{*}\big(\mathbb I^{\flat}(v(t))\big)\Big)+\mathbb I^{\sharp}\big(u(t)\big),
	\end{equation}
	where $\mathbb I^{\flat}:\mathfrak g\to\mathfrak g^*$ is an invertible inertia operator and
	$\mathbb I^{\sharp}:=(\mathbb I^{\flat})^{-1}:\mathfrak g^*\to\mathfrak g$.
	The coadjoint operator $\operatorname{ad}_v^{*}:\mathfrak g^*\to\mathfrak g^*$ is defined by $\langle \operatorname{ad}_v^{*}\mu,\xi\rangle := \langle \mu,[\xi,v]\rangle, \forall\,\mu\in\mathfrak g^{*},~\forall\,\xi\in\mathfrak g.$ Let $u_0(\cdot)$ be the control input at the previous iteration and $v_0(\cdot)$ the corresponding velocity trajectory:
	\begin{equation}\label{eq:v0_dyn}
		\dot v_0(t)=\mathbb I^{\sharp}\!\Big(\operatorname{ad}_{v_0}^{*}\big(\mathbb I^{\flat}(v_0(t))\big)\Big)+\mathbb I^{\sharp}\big(u_0(t)\big).
	\end{equation}
	If $u_0$ does not satisfy the control objective, we update the control input and the velocity as
	\begin{equation}\label{eq:inc_def}
		u(t)=u_0(t)+\partial u(t),\qquad v(t)=v_0(t)+\partial v(t),
	\end{equation}
	so that $\dot v(t)=\dot v_0(t)+\partial\dot v(t)$.
	Substituting \eqref{eq:inc_def} into \eqref{eq:v_dyn} and subtracting \eqref{eq:v0_dyn} yields
	\begin{equation}\label{eq:dv_exact}
		\partial\dot v(t)=\Big(F(v_0(t)+\partial v(t))-F(v_0(t))\Big)+\mathbb I^{\sharp}\big(\partial u(t)\big),
	\end{equation}
	where $	F(v):=\mathbb I^{\sharp}\!\Big(\operatorname{ad}_v^{*}\big(\mathbb I^{\flat}(v)\big)\Big).$ Next, we discuss the first-order expansion of $F$ at $v_0$. Let $w\in\mathfrak g$ and consider $v=v_0+\epsilon w$. Since $\mathbb I^{\flat}$ is linear and
	$(v,\mu)\mapsto \operatorname{ad}_v^{*}\mu$ is bilinear, we have
	\begin{align}\label{eq:adstar_expand}
		\operatorname{ad}^{*}_{v_0+\epsilon w}\!\Big(\mathbb I^{\flat}(v_0+\epsilon w)\Big)
		&=\big(\operatorname{ad}_{v_0}^{*}+\epsilon\operatorname{ad}_{w}^{*}\big)\Big(\mathbb I^{\flat}(v_0)+\epsilon\mathbb I^{\flat}(w)\Big)\nonumber=\operatorname{ad}_{v_0}^{*}\big(\mathbb I^{\flat}(v_0)\big)
		+\epsilon\,\operatorname{ad}_{v_0}^{*}\big(\mathbb I^{\flat}(w)\big)
		+\epsilon\,\operatorname{ad}_{w}^{*}\big(\mathbb I^{\flat}(v_0)\big)
		+\mathcal O(\epsilon^2).
	\end{align}
		Applying $\mathbb I^{\sharp}$ gives
	\begin{equation}\label{eq:F_expand}
		F(v_0+\epsilon w)=F(v_0)
		+\epsilon\,\mathbb I^{\sharp}\!\Big(
	\operatorname{ad}_{v_0}^{*}\big(\mathbb I^{\flat}(w)\big)
		+\operatorname{ad}_{w}^{*}\big(\mathbb I^{\flat}(v_0)\big)
		\Big)
		+\mathcal O(\epsilon^2).
	\end{equation}
	Hence, the Fr\'echet derivative of $F$ at $v_0$ in the direction $w$ is
	\begin{equation}\label{eq:DF_general}
		(DF)_{v_0}[w]
		=
		\mathbb I^{\sharp}\!\Big(
	\operatorname{ad}_{v_0}^{*}\big(\mathbb I^{\flat}(w)\big)
		+\operatorname{ad}_{w}^{*}\big(\mathbb I^{\flat}(v_0)\big)
		\Big),
	\end{equation}
	Using \eqref{eq:F_expand} in \eqref{eq:dv_exact} with $w=\partial v$ and neglecting $\mathcal O(\|\partial v\|^2)$ yields
	\begin{equation}\label{eq:dv_lin_general}
		\partial\dot v(t)=
		\mathbb I^{\sharp}\!\Big(
	\operatorname{ad}_{v_0(t)}^{*}\big(\mathbb I^{\flat}(\partial v(t))\big)
		+\operatorname{ad}_{\partial v(t)}^{*}\big(\mathbb I^{\flat}(v_0(t))\big)
		\Big)
		+\mathbb I^{\sharp}\big(\partial u(t)\big).
	\end{equation}
Let $\{e_i\}_{i=1}^n$ be a basis of $\mathfrak g$ with structure constants $[e_i,e_j]=c^k_{ij}e_k$, and let $\{e^i\}_{i=1}^n$ be the dual basis of $\mathfrak g^{*}$.
Consider the components of $v , \mu \in \mathfrak{g}$ is represented as $v=v^ie_i$, $\mu=\mu_ie^i$. Then, the matrix representation of $\operatorname{ad}_v^{*}\mu$ can be defined as $\Big[\operatorname{ad}_v^{*}\mu\Big]_k = v^i c^{\,j}_{k i}\mu_j$. Define the matrix $C(\underline v)\in\mathbb R^{n\times n}$ by $C(\underline v)_{k j}:=v^i c^{\,j}_{k i}$ so that $\Big[\operatorname{ad}_v^{*}\mu\Big]=C(\underline v)\,\underline\mu$. Let $\Big[\mathbb{I}^{\flat}\Big]$ be the matrix of $\mathbb I^{\flat}$ in these bases, then the matrix representation of  $\mathbb{I}^{\sharp}$ is given by $\Big[\mathbb{I}^{\sharp}\Big]=\Big[\mathbb{I}^{\flat}\Big]^{-1}$. The dynamics of $\partial v\left(t\right)$ given in \eqref{eq:dv_lin_general} can be written in the matrix form as
\begin{equation}\label{eq:dv_lin_coord}
	\underline{\partial\dot v}
	=A_{v\left(t\right)}+\Big[\mathbb{I}^{\flat}\Big]^{-1}\underline{\partial u}.
\end{equation}
where $A_v =-\left[\mathbb{I}^{\sharp}\right]\left[\operatorname{ad}_{v}\right]\left[\mathbb{I}^{\flat}\right]+\left[\operatorname{ad}_{v}\right]$. In the discrete time setting, we have the fixed integration time step $h \in \mathbb{R}$, then the time variable is represented as $t=k h$, where $k=1, \dots, N$ and $N \in \mathbb{N}$ is the number of time steps. The discrete time evolution of the state $\partial X$ is given by
\begin{equation}\label{eq:lin_dev_dyn}
	\partial X_{k+1}=A_k \partial X_k+B_k \partial \underline{u}_k
\end{equation}
where
\begin{equation}\label{Eq:A_k_B_k_def}
	\begin{split}
		A_k & =\begin{bmatrix}
			A_{\eta} \left(\eta_{0,k},v_{0,k}\right)& h\mathbb{I}_n \\
			\mathbf{0}_{n \times n} & \mathbb{I}_n +h A_v\left(v_{0,k},\partial v_k\right)
           \end{bmatrix},~B_k=\begin{bmatrix}
		\mathbf{0}_{n \times n}\\
		h \left[\mathbb{I}^{\flat}\right]^{-1}
	\end{bmatrix}
	\end{split}
\end{equation}
The variation in the final time step $k=N$ can be expressed as
\begin{equation}
	\partial X_N=\prod_{k=N-1}^{1} A_k \partial X_1+\sum_{k=1}^{N-1}G_k \partial \underline{u}_k
\end{equation}
where
\begin{equation}\label{eq:Gk_def}
	\begin{split}
		G_k &=A_{N-1}A_{N-2}\dots A_{k+1}B_k, ~k=1, \dots, N-2\\
		G_{N-1}&=B_k
	\end{split}
\end{equation}
Since in the objective of the problem the end points are fixed, we have the property $\partial X_1 =\mathbf{0}_{2n \times 1}$ and $\partial X_N$ which can be expressed as
\begin{equation}\label{eq:terminal_constraint_mpsp}
	\partial X_N =\sum_{k=1}^{N-1}G_k \partial \underline{u}_k
\end{equation}
It should be noted that $G_k$ can be found recursively as
\begin{equation}\label{Eq:G_k_recursive_iteration}
	\begin{split}
		G_k &= G^{0}_k B_k,\\
		G^{0}_k&=G^{0}_{k+1}A_{k+1},~k=N-2,\dots,1\\
		G^{0}_{N-1}&=\mathbb{I}_{2n}
	\end{split}
\end{equation}
As mentioned earlier MPSP is an iterative procedure. Let $i \in \mathbb{N}$ be the current iteration number. Let $\left\lbrace \underline{u}^{i-1}_k\right\rbrace^{N-1}_{k=1}$ be the optimal control sequence found using MPSP at $\left(i-1\right)^{\text{th}}$ iteration which will not satisfy the hard constraints, i.e, $\partial X^{i-1}_{N} \neq \mathbf{0}_{2 n \times 1}$. We define optimal control sequence $\left\lbrace \underline{u}^{i}_k\right\rbrace^{N-1}_{k=1}$ for the current $i ^{\text{th}}$ iteration calculated as
\begin{equation}
	\underline{u}^{i}_k= \underline{u}^{i-1}_k -\partial \underline{u}^{i}_k
\end{equation}
The objective of MPSP is stated as below
\begin{equation}\label{Eq:Problem_MPSP}
	\begin{aligned}
		& \operatorname{minimize}_{\left\{\partial \underline{u}_k\right\}_{k=1}^{N-1}} \quad J:=\sum_{k=1}^{N-1} \frac{1}{2} \left(\underline{u}^{i-1}_k-\partial \underline{u}^{i}_k\right)^\top R_k \left(\underline{u}^{i-1}_k-\partial \underline{u}^{i}_k\right) \\
		& \text { subject to }\left\{\begin{array}{l}
			\partial X^{i-1}_N =\sum_{k=1}^{N-1}G_k\partial \underline{u}^{i}_k
		\end{array}\right.
	\end{aligned}
\end{equation}
In the problem objective given in \eqref{Eq:Problem_MPSP}, the goal is to minimize the total control effort measured by the deviation of the updated control sequence $\underline{u}^{i}_k$ from the previous iterate $\underline{u}^{i-1}_k$, that is, the norm of the correction $\underline{u}^{i-1}_k - \partial \underline{u}^{i}_k$. This reflects the MPSP philosophy of computing the smallest control adjustment that exactly enforces the desired terminal state, rather than re‑optimizing the entire control history from scratch at each iteration. The terminal constraint is imposed as a hard equality through the linearized endpoint relation, ensuring that the updated control sequence drives the terminal left‑trivialized state deviation to zero while keeping the incremental control cost minimal. The next theorem formalizes this optimization step by deriving the first‑order optimality conditions (in the form of Karush–Kuhn–Tucker conditions) for the resulting quadratic program on Lie groups, and provides the closed‑form expression for the optimal control update used in the MPSP algorithm.
\begin{theorem}\label{Th:MPSP_1}
	Let $\left\lbrace \stackrel{\circ}{\underline{u}_k}^{i}\right\rbrace^{N-1}_{k=1}$ be an optimal control sequence that solves the problem objective \eqref{Eq:Problem_MPSP}, then the necessary conditions for optimality are
	\begin{itemize}
		\item Primal feasibility
		\begin{equation}
			\partial X^{i-1}_N =\sum_{k=1}^{N-1}G_k \partial \stackrel{\circ}{\underline{u}_k}^{i}
		\end{equation}
		\item Optimal control
		\begin{equation}\label{Eq:MPSP_obj_1_control}
			\stackrel{\circ}{\underline{u}_k}^{i}=R^{-1}_k G_k\left(P^{-1}\left(Q-\partial X^{i-1}_N\right)\right)
		\end{equation}
		where $P=-\sum_{k=1}^{N-1}G_k R^{-1}_k G^\top_k$ and $Q=\sum_{k=1}^{N-1}G_k\stackrel{\circ}{\underline{u}_k}^{i}$.
	\end{itemize}
\end{theorem}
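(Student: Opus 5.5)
The plan is to treat \eqref{Eq:Problem_MPSP} as what it is once the sensitivity matrices $G_k$ are frozen at the $(i-1)^{\text{th}}$ iterate: a finite-dimensional, strictly convex quadratic program in the stacked variable $(\partial\underline u^i_1,\dots,\partial\underline u^i_{N-1})$ with a single linear equality constraint. No Lie-group machinery is needed at this stage. All geometric information has already been pushed into $G_k$ through the recursion \eqref{Eq:G_k_recursive_iteration} and the terminal relation \eqref{eq:terminal_constraint_mpsp}. Since $R_k\succ 0$ and the constraint is affine, linear independence of the constraint gradients holds as soon as the stacked matrix $[G_1\ \cdots\ G_{N-1}]$ has full row rank $2n$. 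Under that assumption the KKT conditions are necessary, and by convexity also sufficient.

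\textbf{Lagrangian and stationarity.} I would introduce a multiplier $\lambda\in\mathbb R^{2n}$ and form
\begin{equation*}
\mathcal L=\sum_{k=1}^{N-1}\tfrac12\big(\underline u^{i-1}_k-\partial\underline u^i_k\big)^\top R_k\big(\underline u^{i-1}_k-\partial\underline u^i_k\big)+\lambda^\top\Big(\partial X^{i-1}_N-\sum_{k=1}^{N-1}G_k\partial\underline u^i_k\Big).
\end{equation*}
Setting $\partial\mathcal L/\partial\lambda=0$ reproduces the primal feasibility condition. Setting $\partial\mathcal L/\partial(\partial\underline u^i_k)=0$ gives
\begin{equation*}
-R_k\big(\underline u^{i-1}_k-\partial\underline u^i_k\big)-G_k^\top\lambda=0 .
\end{equation*}
Because the updated control is $\underline u^i_k=\underline u^{i-1}_k-\partial\underline u^i_k$, this says $\underline u^i_k=-R_k^{-1}G_k^\top\lambda$. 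The factor written $G_k$ in \eqref{Eq:MPSP_obj_1_control} must therefore be read as $G_k^\top$, since $G_k\in\mathbb R^{2n\times n}$.

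\textbf{Eliminating the multiplier.} I would substitute $\partial\underline u^i_k=\underline u^{i-1}_k+R_k^{-1}G_k^\top\lambda$ back into the constraint. This gives
\begin{equation*}
\partial X^{i-1}_N=\sum_k G_k\underline u^{i-1}_k+\Big(\sum_k G_kR_k^{-1}G_k^\top\Big)\lambda=Q-P\lambda .
\end{equation*}
Here $P=-\sum_k G_kR_k^{-1}G_k^\top$ as in the statement, and $Q$ is identified with $\sum_k G_k\underline u^{i-1}_k$, the previous-iterate controls mapped through the sensitivities. Hence $\lambda=P^{-1}(Q-\partial X^{i-1}_N)$, and plugging this into the stationarity relation gives the closed form \eqref{Eq:MPSP_obj_1_control}.

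\textbf{Main obstacle.} The substantive step is justifying the invertibility of $P$. $P$ is negative definite exactly when $[G_1\ \cdots\ G_{N-1}]$ has full row rank. This is a discrete controllability condition on the linearized left-trivialized deviation dynamics \eqref{eq:lin_dev_dyn} along the nominal trajectory, and I would state it as a standing assumption rather than try to prove it in general.

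\textbf{Notational and sign discrepancy.} I expect the remaining difficulty to be reconciling my derivation with the statement. Three points need fixing:
\begin{itemize}
\item The statement writes $G_k$ where $G_k^\top$ is required.
\item The statement defines $Q$ in terms of the optimal control itself, whereas my derivation gives $Q=\sum_k G_k\underline u^{i-1}_k$.
\item The overall sign of $\underline u^i_k$ depends on the multiplier convention. With the multiplier attached as above, the derivation gives $\underline u^i_k=-R_k^{-1}G_k^\top P^{-1}(Q-\partial X^{i-1}_N)$, which is the negative of the statement. Reversing the sign of the constraint term in $\mathcal L$ only relabels $\lambda\mapsto-\lambda$ and gives the same expression for $\underline u^i_k$.
\end{itemize}
So the signs do not match \eqref{Eq:MPSP_obj_1_control} as written. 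I would first recheck the sign of the update direction $\underline u^i_k=\underline u^{i-1}_k-\partial\underline u^i_k$ against the constraint in \eqref{Eq:Problem_MPSP} before concluding.
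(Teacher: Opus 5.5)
Your argument is the paper's own proof, carried out correctly: Lagrangian with a terminal multiplier, stationarity in $\partial\underline{u}^i_k$, and elimination of the multiplier through the constraint. The three discrepancies you flag are genuine defects of the statement, not of your work. The paper reaches the printed form only by writing $\stackrel{\circ}{\underline{u}_k}^{i}$ instead of $\underline{u}^{i-1}_k$ in $\partial\underline{u}^i_k=R_k^{-1}G_k^\top\mu+\underline{u}^{i-1}_k$, which is where its $Q$ comes from, and by dropping the minus sign in its last line, since $\underline{u}^i_k=\underline{u}^{i-1}_k-\partial\underline{u}^i_k=-R_k^{-1}G_k^\top\mu$. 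The update convention you want to recheck is exactly the one the paper states, so the correct conclusion is yours: $\underline{u}^i_k=-R_k^{-1}G_k^\top P^{-1}\bigl(Q-\partial X^{i-1}_N\bigr)$ with $Q=\sum_k G_k\underline{u}^{i-1}_k$.

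The invertibility of $P$ need not be assumed here. For $N\ge 3$, Lemma~\ref{lem:full-rank} shows that $[G_{N-2}\ G_{N-1}]$ is block lower-triangular with invertible diagonal blocks $h^2\mathbb{I}^{\sharp}$ and $h\mathbb{I}^{\sharp}$, so $P\prec 0$.
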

\begin{proof}
The optimal control problem stated is a quadratic programming problem which is convex. the optimality conditions are given by the Karush Kuhn Tucker (KKT) conditions. the Lagrangian $L: \mathbb{R}^{\left(N-1\right)m} \times \mathbb{R}^{2n} \to \mathbb{R}$ is defined as
	\begin{equation}
		L\left(\partial \underline{u}^{i}_1, \dots, \partial \underline{u}^{i}_{N-1}, \mu\right)=\frac{1}{2}\sum_{k=1}^{N-1}\left(\underline{u}^{i-1}_{k-1}-\partial \underline{u}^{i}_k\right)^\top R_k \left(\underline{u}^{i-1}_{k-1}-\partial \underline{u}^{i}_k\right)+\mu^\top \left(\partial X^{i-1}_N-\sum_{k=1}^{N-1}G_k \partial \underline{u}^{i}_k\right)
	\end{equation}
where $\mu \in \mathbb{R}^{2n \times 1}$ is the Lagrangian multiplier as dual variable. The KKT conditions are
\begin{enumerate}
	\item Primal feasibility
	\begin{equation*}
			\partial X^{i-1}_N =\sum_{k=1}^{N-1}G_k \partial \stackrel{\circ}{\underline{u}_k}^{i}
	\end{equation*}
\item First order condition
\begin{equation*}
	-R_k\left(\stackrel{\circ}{\underline{u}_k}^{i-1}- \stackrel{\circ}{\partial \underline{u}_k}^{i}\right)-G^\top_k \stackrel{\circ}{\mu}=0, ~k=1, \dots, N-1.
\end{equation*}
\end{enumerate}
Simplifying, we get
\begin{equation}\label{Eq:Control_input_int}
	\partial \stackrel{\circ}{\underline{u}_k}^{i}=R^{-1}_k G^\top_k \stackrel{\circ}{\mu}+\stackrel{\circ}{\underline{u}_k}^{i}
\end{equation}
Substituting the expression of $\partial \stackrel{\circ}{\underline{u}_k}^{i}$ in the primal feasibility equation we get,
\begin{equation*}
	\begin{split}
		\partial X^{i-1}_{N} &= \sum_{k=1}^{N-1}G_k \partial \stackrel{\circ}{\underline{u}_k}^{i}=\sum_{k=1}^{N-1}G_k\left(R^{-1}_k G^\top_k \stackrel{\circ}{\mu}+\stackrel{\circ}{\underline{u}_k}^{i}\right)=\left(\sum_{k=1}^{N-1}G_k R^{-1}_k G^\top_k\right)\stackrel{\circ}{\mu}+\sum_{k=1}^{N-1}\left(G_k \stackrel{\circ}{\underline{u}_k}^{i}\right)=-P \stackrel{\circ}{\mu}+Q
	\end{split}
\end{equation*}
where $P=-\sum_{k=1}^{N-1}G_k R^{-1}_k G^\top_k$ and $Q= \sum_{k=1}^{N-1}\left(G_k \stackrel{\circ}{\underline{u}_k}^{i}\right)$.
Now the dual variable $\stackrel{\circ}{\mu}$ is obtained as
\begin{equation}
\stackrel{\circ}{\mu}=P^{-1}\left(Q-\partial X^{i-1}_N\right)
\end{equation}
Substituting $\stackrel{\circ}{\mu}$ in \eqref{Eq:Control_input_int}, we get the optimal controller as
\begin{equation*}
\partial \stackrel{\circ}{\underline{u}_k}^{i}=R^{-1}_k G^\top_k\left(	P^{-1}\left(Q-\partial X^{i-1}_N\right)\right)+\stackrel{\circ}{\underline{u}_k}^{i}
\end{equation*}
Now the optimal control is given by
\begin{equation}
	\stackrel{\circ}{\underline{u}_k}^{i}=R^{-1}_k G^\top_k\left(P^{-1}\left(Q-\partial X^{i-1}_N\right)\right)
\end{equation}
It can be observed that the updated control history is obtained in closed form. 
\end{proof}

\noindent Next, we intend to minimize the deviation in the control input $\partial \underline{u}^{i}_k$ and the problem objective is stated as below 
\begin{equation}\label{Eq:Problem_MPSP_2}
	\begin{aligned}
		& \operatorname{minimize}_{\left\{\partial \underline{u}_k\right\}_{k=1}^{N-1}} \quad J:=\sum_{k=1}^{N-1} \frac{1}{2} \left(\partial \underline{u}^{i}_k\right)^\top R_k \left(\partial \underline{u}^{i}_k\right) \\
		& \text { subject to }\left\{\begin{array}{l}
			\partial X^{i-1}_N =\sum_{k=1}^{N-1}G_k\partial \underline{u}^{i}_k
		\end{array}\right.
	\end{aligned}
\end{equation}
The next theorem proposes the optimal control sequence $\left\lbrace \stackrel{\circ}{\underline{u}_k}^{i}\right\rbrace^{N-1}_{k=0}$ that will minimize the above problem objective.
\begin{theorem}\label{Th:MPSP_2}
	Let $\left\lbrace \stackrel{\circ}{\underline{u}_k}^{i}\right\rbrace^{N-1}_{k=1}$ be an optimal control sequence that solves the problem objective \eqref{Eq:Problem_MPSP_2}, then the necessary conditions for optimality are
	\begin{itemize}
		\item Primal feasibility
		\begin{equation}
			\partial X^{i-1}_N =\sum_{k=1}^{N-1}G_k \partial \stackrel{\circ}{\underline{u}_k}^{i}
		\end{equation}
		\item Optimal control
		\begin{equation}
			\stackrel{\circ}{\underline{u}_k}^{i}=\stackrel{\circ}{\underline{u}_k}^{i-1}-R^{-1}_k G_kP^{-1}\partial X^{i-1}_N, ~ k=1, \dots, N-1.
		\end{equation}
		where $P=-\sum_{k=1}^{N-1}G_k R^{-1}_k G^\top_k$
		\end{itemize}
\end{theorem}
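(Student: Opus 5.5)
The plan is to treat \eqref{Eq:Problem_MPSP_2} as an equality-constrained quadratic program in the stacked variable $(\partial \underline{u}^{i}_1,\dots,\partial \underline{u}^{i}_{N-1})$ and to write down its Karush--Kuhn--Tucker conditions, as in the proof of Theorem~\ref{Th:MPSP_1}. Each $R_k$ is taken symmetric positive definite, so the cost is strictly convex and the constraint is affine. The KKT conditions are therefore necessary, and also sufficient, provided the constraint is consistent. I introduce a multiplier $\mu\in\mathbb{R}^{2n}$ and the Lagrangian
\begin{equation*}
L=\frac{1}{2}\sum_{k=1}^{N-1}\left(\partial \underline{u}^{i}_k\right)^\top R_k\,\partial \underline{u}^{i}_k+\mu^\top\Big(\partial X^{i-1}_N-\sum_{k=1}^{N-1}G_k\partial \underline{u}^{i}_k\Big).
\end{equation*}
Primal feasibility is just the constraint itself, which gives the first bullet of the statement immediately.

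Next I set $\partial L/\partial(\partial \underline{u}^{i}_k)=0$. This gives $R_k\partial \underline{u}^{i}_k-G_k^\top\mu=0$, hence $\partial \underline{u}^{i}_k=R_k^{-1}G_k^\top\mu$ for each $k$. Substituting into the constraint yields
\[
\partial X^{i-1}_N=\Big(\sum_{k=1}^{N-1}G_kR_k^{-1}G_k^\top\Big)\mu=-P\mu,
\qquad\text{so}\qquad
\mu=-P^{-1}\partial X^{i-1}_N .
\]
It follows that $\partial \underline{u}^{i}_k=-R_k^{-1}G_k^\top P^{-1}\partial X^{i-1}_N$. Using the update rule $\underline{u}^{i}_k=\underline{u}^{i-1}_k-\partial \underline{u}^{i}_k$ then gives the closed-form control law. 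I will take care to match the sign conventions between the update rule and the constraint, and to track the transpose $G_k^\top$ (the statement writes $G_k$, which must be read as $G_k^\top$ for the dimensions to agree). Getting the overall sign in the stated formula is just a matter of checking these conventions carefully.

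The main obstacle is justifying that $P$ is invertible. Since $\sum_k G_kR_k^{-1}G_k^\top$ is positive semidefinite, it is positive definite exactly when the stacked matrix $[G_1\ \cdots\ G_{N-1}]$ has full row rank $2n$. This is a controllability condition on the linearized deviation dynamics \eqref{eq:lin_dev_dyn}, because $G_k=G_k^0B_k$ by \eqref{Eq:G_k_recursive_iteration}. I will state this as a standing assumption, namely that the linearized system is controllable over the horizon. I would argue that it holds for the fully actuated SMS, since $B_k$ acts on the velocity block and the $h\mathbb{I}_n$ coupling in $A_k$ propagates this to the $\eta$ block over two or more steps. Under this assumption the solution is unique, so the KKT point is the global minimizer.
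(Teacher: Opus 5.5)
Your route is the paper's route: form the Lagrangian with multiplier $\mu$, impose $R_k\partial\underline{u}^{i}_k-G_k^\top\mu=0$, substitute into the terminal constraint, solve for $\mu$, then apply $\underline{u}^{i}_k=\underline{u}^{i-1}_k-\partial\underline{u}^{i}_k$. Your algebra is correct up to
\[
\mu=-P^{-1}\partial X^{i-1}_N,\qquad \partial\underline{u}^{i}_k=-R_k^{-1}G_k^\top P^{-1}\partial X^{i-1}_N .
\]
You are also right that $G_k$ must be read as $G_k^\top$. Your handling of the invertibility of $P$ is more careful than the paper's proof, which uses $P^{-1}$ without comment. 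The full-row-rank mechanism you sketch is exactly Lemma~\ref{lem:full-rank}: $[G_{N-2}\ G_{N-1}]$ is block lower triangular with invertible diagonal blocks $h^2\mathbb{I}^{\sharp}$ and $h\mathbb{I}^{\sharp}$. So for the SMS model with $N\ge 3$ this is a proved fact, not an extra assumption.

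The step that fails is your last one: checking conventions will not produce the stated sign. With the paper's own conventions, your derivation gives
\[
\underline{u}^{i}_k=\underline{u}^{i-1}_k+R_k^{-1}G_k^\top P^{-1}\partial X^{i-1}_N ,
\]
which has the opposite sign to the statement. The conventions in question are the update $\underline{u}^{i}_k=\underline{u}^{i-1}_k-\partial\underline{u}^{i}_k$, the constraint $\partial X^{i-1}_N=\sum_k G_k\partial\underline{u}^{i}_k$, and $P=-\sum_k G_kR_k^{-1}G_k^\top$.

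In fact the statement's two bullets contradict each other. Read $G_k$ as $G_k^\top$; the stated law then gives $\partial\underline{u}^{i}_k=R_k^{-1}G_k^\top P^{-1}\partial X^{i-1}_N$. Hence $\sum_kG_k\partial\underline{u}^{i}_k=\bigl(\sum_kG_kR_k^{-1}G_k^\top\bigr)P^{-1}\partial X^{i-1}_N=-\partial X^{i-1}_N$. This violates primal feasibility whenever $\partial X^{i-1}_N\neq 0$.

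The paper reaches the stated sign only through a slip. From $\partial X^{i-1}_N=\bigl(\sum_kG_kR_k^{-1}G_k^\top\bigr)\mu$ it writes $\mu=P^{-1}\partial X^{i-1}_N$, dropping the minus sign built into the definition of $P$.

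So do not promise to recover the formula as written. Conclude with the corrected law above, and say explicitly that the statement is off by a sign. Equivalently, the statement becomes correct once $G_k$ is replaced by $G_k^\top$ and $P$ by $\sum_kG_kR_k^{-1}G_k^\top$.
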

\begin{proof}
The optimal control problem stated is also a quadratic programming problem which is convex. The optimality conditions are given by the KKT conditions. The Lagrangian  $L: \mathbb{R}^{\left(N-1\right)m} \times \mathbb{R}^{2n} \to \mathbb{R}$ is defined as
	\begin{equation*}
		L\left(\partial \underline{u}^{i}_1, \dots, \partial \underline{u}^{i}_{N-1}, \mu\right)=\frac{1}{2}\sum_{k=1}^{N-1}\left(\partial \underline{u}^{i}_k\right)^\top R_k \partial \underline{u}^{i}_k+\mu^\top \left(\partial X^{i-1}_N-\sum_{k=1}^{N-1}G_k \partial \underline{u}^{i}_k\right)
	\end{equation*}
where $\mu \in \mathbb{R}^{2n \times 1}$ is the Lagrangian multiplier as dual variable. The KKT conditions are
\begin{enumerate}
	\item Primal feasibility
	\begin{equation*}
		\partial X^{i-1}_N =\sum_{k=1}^{N-1}G_k \partial \stackrel{\circ}{\underline{u}_k}^{i}
	\end{equation*}
\item First order condition
\begin{equation*}
	R_k \partial \stackrel{\circ}{\underline{u}_k}^{i}-G^\top_k \stackrel{\circ}{\mu},~k=1, \dots, N-1.
\end{equation*}
\noindent From the above equation, we get
\begin{equation}\label{Eq:Control_input_int_2}
	\partial \stackrel{\circ}{\underline{u}_k}^{i}=R^{-1}_k G^\top_k\stackrel{\circ}{\mu}
\end{equation}
Substituting \eqref{Eq:Control_input_int_2} in the primal feasibility, we get
\begin{equation*}
	\partial X^{i-1}_{N}= \left(\sum_{k=1}^{N-1}G_k R^{-1}_k G^\top_k\right)\stackrel{\circ}{\mu}
\end{equation*} 
The optimal dual variable is obtained as $\stackrel{\circ}{\mu}= P^{-1}\partial X^{i-1}_N$. Substituting $\stackrel{\circ}{\mu}$ in \eqref{Eq:Control_input_int_2}, we get
\begin{equation*}
\stackrel{\circ}{\underline{u}_k}^{i}=\stackrel{\circ}{\underline{u}_k}^{i-1}-R^{-1}_k G_kP^{-1}\partial X^{i-1}_N, ~ k=1, \dots, N-1.
\end{equation*} 
\end{enumerate}
\end{proof}

\noindent We discussed the optimality conditions for the MPSP algorithm in Theorems \ref{Th:MPSP_1} and \ref{Th:MPSP_2}, which characterize, for a fixed iteration, how the control corrections solve the underlying quadratic subproblems and enforce the terminal constraint. However, these results are local in nature: they describe the structure of a single MPSP step but do not guaranty by themselves that the iterated control sequence $\{\underline{u}_k^{\,i-1}\}$ converges to a truly optimal control law as the iteration index $i$ increases. In particular, the previous analysis does not address whether repeated application of the MPSP update drives the terminal deviation to zero and the control history toward a KKT point of the finite‑horizon optimal control problem. In the next section, we close this gap by reformulating the MPSP iteration as a Sequential Quadratic Programming (SQP) scheme applied to the nonlinear program defined by the endpoint map and control‑effort cost, and we establish local convergence of the resulting iterates under standard regularity and second‑order sufficiency assumptions. This provides a rigorous justification for the use of MPSP as an iterative optimal control method on Lie groups, beyond the per‑iteration optimality characterized by Theorems \ref{Th:MPSP_1} and \ref{Th:MPSP_2}.

\section{Convergence of the MPSP algorithm.}

At first, we reformulate the MPSP algorithm as a nonlinear program with control history as the optimizing variable. Let the terminal deviation be $\partial X_N(U)\coloneqq
\begin{bmatrix}\eta_N(U) &      \partial v_N(U)\end{bmatrix}\in\mathbb{R}^{2n}$, where $\eta_k,\partial v_k$ are the left-trivialized deviation variables generated by the dynamics and deviation definitions
used throughout the manuscript. Denote the stacked control history by $U\coloneqq \mathrm{col}(u_1,\dots,u_{N-1})\in\mathbb{R}^{q}, q=(N-1)n,$ and define the endpoint map $H(U)\coloneqq\partial X_N(U)\in\mathbb{R}^{p},p=2n.$ 
At iterate $U^{i-1}$, let the (discrete-time) linear deviation model be
\begin{equation}\label{eq:lin_dev}
	\partial X_{k+1}=A_k\,\partial X_k + B_k\,\partial u_k,\qquad k=1,\dots,N-1,
\end{equation}
and define the terminal sensitivity matrices
\begin{equation}\label{eq:Gk_def_again}
	G_k \;\coloneqq\; A_{N-1}A_{N-2}\cdots A_{k+1}B_k,\qquad k=1,\dots,N-1,
\end{equation}
so that the stacked Jacobian is
\begin{equation}\label{eq:G_stack}
	\mathcal{G}^{\,i-1}\;\coloneqq\;[\,G_1\ \cdots\ G_{N-1}\,]\in\mathbb{R}^{2n\times (N-1)n}.
\end{equation}
With $\partial U^i\coloneqq\mathrm{col}(\partial u_1^i,\dots,\partial u_{N-1}^i)$, the MPSP step enforces the
linearized terminal equality
\begin{equation}\label{eq:linear_terminal}
	\partial X_N^{\,i-1} \;=\; \mathcal{G}^{\,i-1}\,\partial U^i,
\end{equation}
and computes $\partial U^i$ as the unique minimizer of the strictly convex quadratic program
\begin{equation}\label{eq:mpsp_qp_again}
	\min_{\partial U^i}\;\frac12\,(U^{i-1}-\partial U^i)^\top R\,(U^{i-1}-\partial U^i)
	\qquad \text{s.t.}\qquad \partial X_N^{\,i-1}=\mathcal{G}^{\,i-1}\partial U^i.
\end{equation}
where $R\coloneqq\mathrm{blkdiag}(R_1,\dots,R_{N-1})\succ 0$. The control update is then $U^{i}=U^{i-1}-\partial U^i$. The target terminal condition $\partial X_N=0$ can be cast as the equality-constrained nonlinear program
\begin{equation}\label{eq:nlp_terminal}
	\min_{U\in\mathbb{R}^q}\; J(U)\qquad \text{s.t.}\qquad H(U)=0.
\end{equation}
We now state the standing assumptions used in the local convergence analysis of the MPSP algorithm. These conditions formalize the required smoothness of the endpoint map and cost functional, as well as regularity and curvature properties at the optimal solution of the equality‑constrained nonlinear program associated with the MPSP step. In particular, we assume that the endpoint map is twice continuously differentiable with Lipschitz continuous second derivatives in a neighborhood of the optimal control history, that the Jacobian of the terminal constraint has full row rank (ensuring a valid linearization and satisfaction of the LICQ condition), and that a standard second‑order sufficient optimality condition holds via positive definiteness of the reduced Hessian of the Lagrangian on the constraint tangent space. Finally, we require that any globalization or line‑search strategy used in practice accepts asymptotically full steps near the solution. Collectively, these assumptions provide the technical foundation needed to apply classical SQP convergence results to the Lie‑group MPSP scheme.
\paragraph{Assumptions.}
Let $\stackrel{\circ}{U}$ be a KKT point of \eqref{eq:nlp_terminal} with associated multiplier $\stackrel{\circ}{\lambda}\in\mathbb{R}^{2n}$.
Assume:
\begin{enumerate}
	\item[\textbf{(A1)}] (\emph{Smooth endpoint map}) $H$ and $J$ are twice continuously differentiable in a neighborhood
	$\mathcal{N}$ of $\stackrel{\circ}{U}$  and their second derivatives are Lipschitz continuous on $\mathcal{N}$.
	\item[\textbf{(A2)}] (\emph{LICQ / full-row-rank terminal sensitivity})
	\begin{equation}\label{eq:licq_rank}
		\mathrm{rank}\big(\nabla H(\stackrel{\circ}{U} )\big)=2n.
	\end{equation}
	Equivalently, for all $U\in\mathcal{N}$ sufficiently close to $\stackrel{\circ}{U} $,
	$\mathrm{rank}(\mathcal{G}^{\,i-1})=2n$ (hence $\mathcal{G}^{\,i-1}R^{-1}(\mathcal{G}^{\,i-1})^\top\succ0$).
	\item[\textbf{(A3)}] (\emph{Second-order sufficiency})
	Let $\mathcal{L}(U,\lambda)\coloneqq J(U)+\lambda^\top H(U)$ be the Lagrangian of \eqref{eq:nlp_terminal}.
	Then the reduced Hessian is positive definite:
	\begin{equation}\label{eq:sosc}
		d\neq 0,\ \nabla H(\stackrel{\circ}{U})d=0 \quad\Longrightarrow\quad
		d^\top\nabla^2_{UU}\mathcal{L}(\stackrel{\circ}{U},\stackrel{\circ}{\lambda})\,d>0.
	\end{equation}
	\item[\textbf{(A4)}] (\emph{Full steps locally}) Any globalization used in the implementation accepts unit steps
	in $\mathcal{N}$, i.e., the step sizes satisfy $\alpha_i\to 1$.
\end{enumerate}
Next, we present a lemma establishing the full‑row‑rank property of the terminal sensitivity matrix, which plays a central role in the local convergence analysis of the MPSP algorithm. This result shows that, under mild assumptions on the integration step and inertia operator, the stacked sensitivity map from control perturbations to the terminal deviation is surjective onto the terminal state space. In particular, the lemma guarantees that  has rank, ensuring that any small desired correction in the terminal left‑trivialized state can be achieved by an appropriate control variation over the horizon. This property underpins the Linear Independence Constraint Qualification (LICQ) in the nonlinear programming formulation and is therefore instrumental in invoking standard SQP convergence results for the proposed Lie‑group MPSP scheme.
	\begin{lemma}[Full-Row-Rank of Terminal Sensitivity]
		\label{lem:full-rank}
		Let $h > 0$ be the integration step size, and suppose the inertia
		operator $\mathbb{I}^{\flat}: \mathfrak{g} \to \mathfrak{g}^{*}$ is
		invertible with $\mathbb{I}^{\sharp} := (\mathbb{I}^{\flat})^{-1}$.
		Then, for any $N \geq 3$ and any nominal trajectory
		$\{(\eta_{0,k}, v_{0,k})\}_{k=1}^{N-1}$, the stacked terminal
		sensitivity matrix
		\[
		\mathcal{G}^{i-1}
		\;=\; [G_1 \;\cdots\; G_{N-1}]
		\;\in\; \mathbb{R}^{2n \times (N-1)n}
		\]
		satisfies $\mathrm{rank}(\mathcal{G}^{i-1}) = 2n$, and consequently
		$\mathcal{G}^{i-1} R^{-1} (\mathcal{G}^{i-1})^{\top} \succ 0$
		for any $R \succ 0$.
		In particular, Assumption~\textnormal{(A2)} holds for all $N \geq 3$.
	\end{lemma}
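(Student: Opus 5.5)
The plan is to show that the last two columns of blocks, $[G_{N-2}\;\,G_{N-1}]\in\mathbb{R}^{2n\times 2n}$, already form an invertible matrix. Then $\mathcal{G}^{i-1}$, which contains them as a submatrix, has rank $2n$. This needs $N-1\ge 2$, which is exactly the hypothesis $N\ge 3$. The argument will use only the block structure of $A_k$ and $B_k$ in \eqref{Eq:A_k_B_k_def}. It does not depend on the particular blocks $A_\eta(\eta_{0,k},v_{0,k})$ and $\mathbf{I}_n+hA_v$, which is why it holds for an arbitrary nominal trajectory.

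First I fix the convention $G_{N-1}=B_{N-1}$, reading the ``$G_{N-1}=B_k$'' in \eqref{eq:Gk_def} this way, consistent with \eqref{eq:Gk_def_again} with an empty product. Write $M:=\left[\mathbb{I}^{\flat}\right]^{-1}$, which is invertible by hypothesis. Then
\begin{equation*}
G_{N-1}=\begin{bmatrix}\mathbf{0}_{n\times n}\\ hM\end{bmatrix},\qquad
G_{N-2}=A_{N-1}B_{N-2}=\begin{bmatrix}A_{\eta,N-1}\cdot \mathbf{0}+h\cdot hM\\ \left(\mathbf{I}_n+hA_{v,N-1}\right)hM\end{bmatrix}
=\begin{bmatrix}h^2M\\ h\left(\mathbf{I}_n+hA_{v,N-1}\right)M\end{bmatrix}.
\end{equation*}
Stacking these gives
\begin{equation*}
[G_{N-2}\;\,G_{N-1}]=\begin{bmatrix}h^2M & \mathbf{0}_{n\times n}\\ h\left(\mathbf{I}_n+hA_{v,N-1}\right)M & hM\end{bmatrix}.
\end{equation*}
This is block lower triangular, so its determinant is $\det(h^2M)\det(hM)=h^{3n}\det(M)^2\neq 0$ because $h>0$. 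Hence the matrix is invertible, and $\mathrm{rank}(\mathcal{G}^{i-1})\ge 2n$. Since $\mathcal{G}^{i-1}$ has only $2n$ rows, the rank equals $2n$.

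For the positive definiteness, take $x\in\mathbb{R}^{2n}$. Then $x^\top\mathcal{G}^{i-1}R^{-1}(\mathcal{G}^{i-1})^\top x=\|R^{-1/2}(\mathcal{G}^{i-1})^\top x\|^2$. Full row rank means $(\mathcal{G}^{i-1})^\top$ is injective, and $R^{-1/2}$ is invertible since $R\succ 0$, so this quantity is positive for $x\neq 0$. Because the argument holds at every nominal trajectory, it holds in particular at every $U\in\mathcal{N}$, including at $\stackrel{\circ}{U}$ where $\nabla H=\mathcal{G}$. This gives (A2).

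The only delicate point is the structural one: everything hinges on the coupling block $h\mathbf{I}_n$ in the upper right of $A_k$. This block transfers the control, which enters only the $\partial v$ channel, into the $\eta$ channel after one step. I will state explicitly that the rank conclusion relies on this block and on the invertibility of $\mathbb{I}^{\flat}$, and not on any property of $A_\eta$ or $A_v$.
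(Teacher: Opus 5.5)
Your proposal is correct and is essentially the paper's own proof: you isolate the $2n\times 2n$ submatrix $[G_{N-2}\;\,G_{N-1}]$ and compute it from the block structure of $A_{N-1}$ and $B_{N-2}$. It is block lower-triangular with diagonal blocks $h^2\mathbb{I}^{\sharp}$ and $h\mathbb{I}^{\sharp}$, and you conclude invertibility from $\det = h^{3n}(\det\mathbb{I}^{\sharp})^2\neq 0$, exactly as the paper does. Your injectivity argument for $\mathcal{G}^{i-1}R^{-1}(\mathcal{G}^{i-1})^\top\succ 0$ only spells out a step the paper states without detail, and, like the paper, you pass to (A2) via the identification $\mathcal{G}^{i-1}=\nabla H$, which the paper takes as holding by construction.
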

	\begin{proof}
		It suffices to show that the $2n \times 2n$ submatrix
		$[G_{N-2} \;\; G_{N-1}]$ has rank $2n$. From the discrete-time
		sensitivity recursion \eqref{Eq:G_k_recursive_iteration}, the last two blocks are
		\begin{equation}
			G_{N-1} = B_{N-1},
			\qquad
			G_{N-2} = A_{N-1}\, B_{N-2}.
		\end{equation}
		Using the block structure of $A_k$ and $B_k$ from \eqref{Eq:A_k_B_k_def},
		namely
	\begin{equation}
		A_k =
		\begin{bmatrix}
			A_{\eta}(\eta_{0,k}, v_{0,k}) & h I_n \\
			0_{n \times n} & I_n + h A_v(v_{0,k})
		\end{bmatrix},
		\qquad
		B_k =
		\begin{bmatrix}
			0_{n \times n} \\
			h\, \mathbb{I}^{\sharp}
		\end{bmatrix},
	\end{equation}
	a direct computation gives
	\begin{equation}
		G_{N-2} = A_{N-1} B_{N-2}
		=
		\begin{bmatrix}
			h^2\, \mathbb{I}^{\sharp} \\
			h\bigl(I_n + h A_v(v_{0,N-1})\bigr)\mathbb{I}^{\sharp}
		\end{bmatrix}.
	\end{equation}
	Concatenating $G_{N-2}$ and $G_{N-1}$ into a $2n \times 2n$ matrix
	yields
	\begin{equation}
		\bigl[G_{N-2} \;\; G_{N-1}\bigr]
		=
		\begin{bmatrix}
			h^2\, \mathbb{I}^{\sharp} & 0_{n \times n} \\
			h\bigl(I_n + h A_v(v_{0,N-1})\bigr)\mathbb{I}^{\sharp}
			& h\, \mathbb{I}^{\sharp}
		\end{bmatrix}.
		\label{eq:two-block-mat}
	\end{equation}
	The matrix \eqref{eq:two-block-mat} is block lower-triangular with
	diagonal blocks $h^2\mathbb{I}^{\sharp}$ and $h\mathbb{I}^{\sharp}$.
	Since $\mathbb{I}^{\flat}$ is invertible and $h > 0$, both diagonal
	blocks are invertible, so
	\begin{equation}
		\det\bigl[G_{N-2} \;\; G_{N-1}\bigr]
		= \det\!\bigl(h^2 \mathbb{I}^{\sharp}\bigr)
		\det\!\bigl(h\, \mathbb{I}^{\sharp}\bigr)
		= h^{3n}\!\bigl(\det \mathbb{I}^{\sharp}\bigr)^2
		\;\neq\; 0.
	\end{equation}
	Hence $[G_{N-2} \;\; G_{N-1}]$ is invertible, which implies
	$\mathrm{rank}(\mathcal{G}^{i-1}) = 2n$ since
	$[G_{N-2} \;\; G_{N-1}]$ is a submatrix of $\mathcal{G}^{i-1}$.
	Consequently, $\mathcal{G}^{i-1} R^{-1}(\mathcal{G}^{i-1})^{\top}
	\succ 0$ for any $R \succ 0$, which is equivalent to (A2).
	\end{proof}
    
   \noindent The next theorem establishes the local convergence of the MPSP algorithm under the regularity and curvature assumptions \textbf{(A1)}–\textbf{(A4)}. Under this framework, the theorem shows that, starting from an initial control history sufficiently close to a KKT point of the finite‑horizon optimal control problem, the sequence of control updates generated by the Lie‑group MPSP iterations is well defined and converges to the optimal control. Moreover, the associated terminal deviation converges to zero at least ‑linearly, thereby providing a rigorous justification for the practical effectiveness and reliability of the proposed MPSP scheme on Lie groups.
\begin{theorem}\label{thm:local_Qlinear_mpsp}
	Under the assumptions \textbf{(A1)}--\textbf{(A4)}, there exists $\varepsilon>0$ such that if $\|U-\stackrel{\circ}{U}\|<\varepsilon$, then:
	(i) the MPSP subproblem \eqref{eq:mpsp_qp_again} is feasible and has a unique solution for all $i$,
	(ii) the iterates $\{U^i\}$ are well-defined and converge to $\stackrel{\circ}{U}$, and
	(iii) the terminal deviation converges to zero $Q$-linearly:
	\begin{equation}\label{eq:qlinear_terminal}
		\|\partial X_N(U^{i})\|\;=\;\|H(U^{i})\|\ \le\ \rho\,\|H(U^{i-1})\| \qquad \text{for all $i$ sufficiently large},
	\end{equation}
	for some constant $\rho\in(0,1)$.
\end{theorem}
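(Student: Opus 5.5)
The plan is to recast one MPSP iteration as a (possibly inexact) SQP/Newton step for the equality-constrained program \eqref{eq:nlp_terminal}, and then invoke the classical local convergence theory for SQP. Concretely, with $J(U)=\tfrac12 U^\top R U$ and $H(U)=\partial X_N(U)$, the linearized terminal relation \eqref{eq:linear_terminal} is exactly $H(U^{i-1})-\nabla H(U^{i-1})\,\partial U^i\approx 0$, identifying $\mathcal{G}^{i-1}$ with $\nabla H(U^{i-1})$ up to the sign convention $U^i=U^{i-1}-\partial U^i$. Substituting $d=-\partial U^i$, the QP \eqref{eq:mpsp_qp_again} becomes
\[
\min_{d}\ \nabla J(U^{i-1})^\top d+\tfrac12 d^\top R\, d \quad\text{s.t.}\quad H(U^{i-1})+\mathcal{G}^{i-1}d=0 ,
\]
up to an additive constant. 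This is an SQP subproblem in which the Hessian of the Lagrangian, $\nabla^2_{UU}\mathcal{L}=R+\sum_j\lambda_j\nabla^2H_j$, is replaced by $R$. So MPSP is a quasi-Newton SQP method with fixed Hessian approximation $R$.

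Part (i) follows from Lemma~\ref{lem:full-rank}. Since $\mathrm{rank}(\mathcal{G}^{i-1})=2n$ for every nominal trajectory, the linear constraint is always consistent, so the QP is feasible. Because $R\succ0$ is positive definite on all of $\mathbb{R}^q$, the KKT matrix
\[
\begin{bmatrix} R & (\mathcal{G}^{i-1})^\top\\ \mathcal{G}^{i-1} & 0\end{bmatrix}
\]
is nonsingular, which gives a unique minimizer. Its closed form is the one in Theorem~\ref{Th:MPSP_2}, shifted by the $R U^{i-1}$ term as in Theorem~\ref{Th:MPSP_1}, and the Schur complement $\mathcal{G}R^{-1}\mathcal{G}^\top\succ0$ is invertible. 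By (A1), $\nabla H$ is Lipschitz near $\stackrel{\circ}{U}$, so the inverse of the KKT matrix is uniformly bounded on $\mathcal{N}$.

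For (ii)–(iii), I would write the step as the solution of $K(U^{i-1})(d,\lambda^+)=-(\nabla J,H)$ and compare it with the exact Newton–KKT system at $(\stackrel{\circ}{U},\stackrel{\circ}{\lambda})$. Taylor expansion with Lipschitz second derivatives (A1) gives the error recursion
\[
\|(U^i,\lambda^i)-(\stackrel{\circ}{U},\stackrel{\circ}{\lambda})\|\le \|K^{-1}\|\,\big\|\big(R-\nabla^2_{UU}\mathcal{L}(\stackrel{\circ}{U},\stackrel{\circ}{\lambda})\big)Z\big\|\,\|e^{i-1}\|+C\|e^{i-1}\|^2 ,
\]
with $Z$ spanning the null space of $\nabla H$. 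By (A3), the reduced Hessian is positive definite on the null space of $\nabla H$, and together with LICQ (A2) this makes the true KKT matrix nonsingular. This is the standard Dennis–Moré / Boggs–Tolle–Wang framework for SQP with a Hessian approximation.

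I expect the main obstacle to be making the linear rate constant $\rho<1$. A fixed $R$ does not satisfy the Dennis–Moré condition, so superlinear convergence is not available. Contraction requires the reduced-Hessian mismatch $\|Z^\top(R-\nabla^2_{UU}\mathcal{L})Z\,(Z^\top RZ)^{-1}\|<1$. I would first try to obtain this bound from (A3) by shrinking $\mathcal{N}$ so that the multiplier-weighted curvature term $\sum_j\lambda_j\nabla^2H_j$ is small. Near a solution of a near-feasible problem, $\stackrel{\circ}{\lambda}$ is controlled by $\nabla J(\stackrel{\circ}{U})=R\stackrel{\circ}{U}$. If that fails, I would state the bound explicitly as the definition of $\rho$ and absorb it into the "standard SQP" hypothesis.

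With a contraction of the primal–dual error in hand, convergence of $U^i$ to $\stackrel{\circ}{U}$ follows by induction, keeping the iterates in the $\varepsilon$-ball. The terminal estimate comes from $H(U^i)=H(U^{i-1})-\mathcal{G}^{i-1}\partial U^i+O(\|\partial U^i\|^2)=O(\|\partial U^i\|^2)$. Combined with $\|\partial U^i\|\le c\|H(U^{i-1})\|+c'\|e^{i-1}\|$, this gives \eqref{eq:qlinear_terminal} for large $i$. Assumption (A4) ensures that any line-search variant reduces to these unit steps eventually.
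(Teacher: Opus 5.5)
You follow the same route as the paper: read \eqref{eq:mpsp_qp_again} as an SQP subproblem for \eqref{eq:nlp_terminal} with Hessian $R$ and Jacobian $\mathcal{G}^{i-1}=\nabla H(U^{i-1})$, then invoke local SQP theory. You also correctly isolate the step that the paper's one-line appeal to Boggs--Tolle skips. With the fixed matrix $R$ in place of $\nabla^2_{UU}\mathcal{L}=R+C$, where $C:=\sum_j\stackrel{\circ}{\lambda}_j\nabla^2H_j(\stackrel{\circ}{U})$, a rate $\rho<1$ is not automatic.

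Your first plan for obtaining it cannot work. $C$ is evaluated at the KKT point, so shrinking $\mathcal{N}$ does not change it. Nor is $\stackrel{\circ}{\lambda}$ small: it solves $\nabla H(\stackrel{\circ}{U})^\top\stackrel{\circ}{\lambda}=-R\stackrel{\circ}{U}$, so it scales with the optimal control. Linearize the step at $(\stackrel{\circ}{U},\stackrel{\circ}{\lambda})$. The range-space error becomes second order after one step, while the null-space error is propagated by $M_Z=-(Z^\top RZ)^{-1}Z^\top CZ=I-(Z^\top RZ)^{-1}Z^\top\nabla^2_{UU}\mathcal{L}\,Z$. Local convergence therefore needs $0\prec Z^\top\nabla^2_{UU}\mathcal{L}\,Z\prec 2\,Z^\top RZ$, and (A3) supplies only the left inequality.

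In fact (A1)--(A4) do not imply the conclusion. Take $J(U)=\tfrac12\|U\|^2$ and $H(U)=u_1-1-\tfrac32u_2^2$ on $\mathbb{R}^2$. Then $\stackrel{\circ}{U}=(1,0)$ is the global minimizer, $\stackrel{\circ}{\lambda}=-1$, $\nabla H$ has full rank, and the reduced Hessian equals $4>0$. Yet the MPSP step, which is the projection of $0$ onto the linearized constraint, gives $u_2^{+}=-3u_2(1-\tfrac32u_2^2)/(1+9u_2^2)$. From any $(1,\delta)$ with small $\delta\neq0$, the iterates are repelled from $\stackrel{\circ}{U}$. They settle into a 2-cycle with $|u_2|\approx0.385$ and $H\approx-0.89$, so (ii) and (iii) both fail.

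Your fallback of adding the contraction bound as a hypothesis is the correct repair, but it proves a modified theorem. The bound is not ``standard SQP'' content implied by (A1)--(A4). Local linear-rate results for SQP with a Hessian approximation themselves require the approximation to be sufficiently close to $\nabla^2_{UU}\mathcal{L}$ on the tangent space, so the paper's proof has the same omission.

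Your derivation of (iii) also does not close, even granting the contraction. From $H(U^i)=O(\|\partial U^i\|^2)$ and $\|\partial U^i\|\le c\|H(U^{i-1})\|+c'\|e^{i-1}\|$ you only get $\|H(U^i)\|\le \kappa\big(\|H(U^{i-1})\|+\|e^{i-1}\|\big)^2$. The $\|e^{i-1}\|^2$ term is not dominated by $\rho\|H(U^{i-1})\|$. Asymptotically the error lies in the null space of $\nabla H(\stackrel{\circ}{U})$, where $\|H(U^{i-1})\|=O(\|U^{i-1}-U^{i-2}\|^2)$ has no matching lower bound. What follows directly is $\|H(U^i)\|\le \kappa\|e^{i-1}\|^2$, i.e.\ $R$-linear decay of the terminal deviation at rate $\rho^2$. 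A genuine $Q$-linear bound on $\|H\|$ needs either an extra nondegeneracy condition (nonvanishing constraint curvature along the dominant mode of $M_Z$) or a finer expansion along that mode. The paper's ``This yields \eqref{eq:qlinear_terminal}'' passes over the same point.
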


\begin{proof}
	The proof follows the standard sequential quadratic programming (SQP) local convergence argument. By \textbf{(A1)} the functions are sufficiently smooth.
	By \textbf{(A2)}, the Jacobian $\nabla H(\stackrel{\circ}{U})$ has full row rank; thus LICQ holds for the equality constraints and
	$\nabla H(U)$ remains full row rank for $U$ sufficiently close to $\stackrel{\circ}{U}$.
	Consequently, the linearized equality constraint model is regular, and the QP \eqref{eq:mpsp_qp_again} has a unique KKT solution
	because $R\succ0$ makes the objective strictly convex. Moreover, \eqref{eq:mpsp_qp_again} is precisely the SQP subproblem for \eqref{eq:nlp_terminal} at $U^{i-1}$, with Hessian $R$ and constraint Jacobian $\mathcal{G}^{\,i-1}$, which coincides with $\nabla H(U^{i-1})$ by construction from the variational recursion. Therefore, the sequence $\{U^i\}$ generated by solving \eqref{eq:mpsp_qp_again} exactly is an SQP sequence.
	Assumption \textbf{(A3)} provides the second-order sufficient condition required for local SQP convergence, and \textbf{(A4)}
	ensures that the unit step is taken asymptotically. Hence, by the Boggs--Tolle local SQP convergence theorem \cite{boggs1989strategy}, the iterates converge locally to $(\stackrel{\circ}{U},\stackrel{\circ}{\lambda})$
	and the convergence is at least $Q$-linear. This yields \eqref{eq:qlinear_terminal}.
\end{proof}

\noindent In the next section, we present the iterative Linear Quadratic Regulator (iLQR) method for simple mechanical systems evolving on a Lie group, formulated using the same left‑trivialized state representation and intrinsic linearization tools as in our MPSP development. This intrinsic iLQR formulation enables a fair, structure‑preserving comparison between the two approaches, since both operate directly on the Lie‑group configuration space rather than in an ad hoc Euclidean embedding. By outlining the iLQR dynamics linearization, Riccati‑based backward pass, and nonlinear forward rollout on the same class of systems, we can quantitatively contrast its convergence behavior, computational complexity, and robustness properties against the proposed Lie‑group MPSP algorithm.
\subsection*{Comparison with iLQR and DDP}
We present the iLQR algorithm for simple mechanical systems evolving on
Lie groups, formulated using the same left-trivialized deviation model
derived in Section~\ref{Sec:MPSP_on_Lie_groups}. This serves both as a methodological
comparison and as a baseline for the numerical experiments in
Section~\ref{Sec:Numerical_simulation}.
Retain the left-trivialized deviation state
$\partial X_k = [\eta_k^\top,\; \partial v_k^\top]^\top \in \mathbb{R}^{2n}$
and control $u_k \in \mathbb{R}^n$ from Section~\ref{Sec:MPSP_on_Lie_groups}. The
nonlinear dynamics are propagated as
\begin{equation}\label{eq:nonlinear_dyn_ilqr}
	X_{k+1} = F(X_k, u_k),
	\qquad k = 1, \ldots, N-1,
\end{equation}
where $F$ denotes the discrete-time Lie-group integrator. The iLQR cost
is chosen as
\begin{equation}\label{eq:ilqr_cost}
	J = \frac{1}{2}\,\partial X_N^\top Q_N\,\partial X_N
	+ \sum_{k=1}^{N-1} \frac{1}{2}\,u_k^\top R_k\,u_k,
\end{equation}
where $Q_N \in \mathbb{R}^{2n \times 2n}$, $Q_N \succeq 0$ is the terminal
state weight, $R_k \in \mathbb{R}^{n \times n}$, $R_k \succ 0$ is the
running control weight, and $\partial X_N$ is the left-trivialized terminal
deviation computed from the nonlinear rollout. The running cost penalizes
only control effort, consistent with the MPSP formulation, while the
terminal cost provides a soft enforcement of the endpoint condition in
place of MPSP's hard constraint. At the $i$-th iteration, let $\{u_k^0\}_{k=1}^{N-1}$ and
$\{X_k^0\}_{k=1}^{N}$ denote the nominal control and state sequences
obtained by forward simulation of~\eqref{eq:nonlinear_dyn_ilqr}.
The left-trivialized linearization about this nominal trajectory yields
the time-varying linear system
\begin{equation}\label{eq:ilqr_lin}
	\partial X_{k+1} = A_k\,\partial X_k + B_k\,\partial u_k,
\end{equation}
where $A_k$ and $B_k$ are the intrinsic sensitivity matrices
from~\eqref{Eq:A_k_B_k_def}, reproduced here for clarity:
\begin{equation}\label{eq:AkBk_ilqr}
	A_k =
	\begin{bmatrix}
		A_{\eta}(\eta_{0,k},\, v_{0,k}) & h I_n \\
		0_{n \times n} & I_n + h\,A_v(v_{0,k})
	\end{bmatrix},
	\qquad
	B_k =
	\begin{bmatrix}
		0_{n \times n} \\
		h\,\mathbb{I}^{\sharp}
	\end{bmatrix}.
\end{equation}
Define the quadratic value function
$V_k(\partial X_k) \approx \frac{1}{2}\partial X_k^\top P_k\,\partial X_k
+ p_k^\top \partial X_k$,
initialized at the terminal step as
\begin{equation}\label{eq:ilqr_terminal}
	P_N = Q_N,
	\qquad
	p_N = Q_N\,\partial X_N^0,
\end{equation}
where $\partial X_N^0$ is the terminal deviation of the current nominal rollout.
The action-value (Q-function) expansion at step $k$ is
\begin{equation}
    \begin{split}
        \mathbf{Q}_{xx,k} &= A_k^\top P_{k+1}\,A_k,~\mathbf{Q}_{uu,k} = R_k + B_k^\top P_{k+1}\,B_k,~\mathbf{Q}_{ux,k} = B_k^\top P_{k+1}\,A_k,~\mathbf{Q}_{x,k}  = A_k^\top p_{k+1},~\mathbf{Q}_{u,k}  = R_k\,u_k^0 + B_k^\top p_{k+1}.
    \end{split}
\end{equation}
Minimizing the Q-function over $\partial u_k$ yields the optimal
affine feedback policy
\begin{equation}\label{eq:ilqr_policy}
	\partial u_k^* = K_k\,\partial X_k + d_k,
\end{equation}
where the feedback gain and feedforward term are
\begin{equation}\label{eq:KkDk}
	K_k = -\mathbf{Q}_{uu,k}^{-1}\,\mathbf{Q}_{ux,k}
	= -\bigl(R_k + B_k^\top P_{k+1}B_k\bigr)^{-1}
	B_k^\top P_{k+1}A_k,
\end{equation}
\begin{equation}\label{eq:dk}
	d_k = -\mathbf{Q}_{uu,k}^{-1}\,\mathbf{Q}_{u,k}
	= -\bigl(R_k + B_k^\top P_{k+1}B_k\bigr)^{-1}
	\bigl(R_k\,u_k^0 + B_k^\top p_{k+1}\bigr).
\end{equation}
Substituting \eqref{eq:ilqr_policy} back into the Q-function gives
the Riccati recursion for the value function matrices:
\begin{align}
	P_k &= \mathbf{Q}_{xx,k}
	- \mathbf{Q}_{xu,k}\,\mathbf{Q}_{uu,k}^{-1}\,\mathbf{Q}_{ux,k}= A_k^\top P_{k+1}\bigl(A_k + B_k K_k\bigr),
	\label{eq:Pk_riccati}\\
	p_k &= \mathbf{Q}_{x,k}-\mathbf{Q}_{xu,k}\,\mathbf{Q}_{uu,k}^{-1}\,\mathbf{Q}_{u,k}
	= \bigl(A_k + B_k K_k\bigr)^\top p_{k+1}
	+ K_k^\top R_k\,u_k^0.
	\label{eq:pk_riccati}
\end{align}
The recursion \eqref{eq:Pk_riccati} is the discrete-time Riccati
equation for the time-varying system $(A_k,B_k)$, driven by the
inertia-weighted control penalty $R_k$. Note that $\mathbf{Q}_{uu,k}
\succ 0$ for all $k$ since $R_k \succ 0$ and $P_{k+1} \succeq 0$,
so the gain matrices $K_k$ and $d_k$ are well-defined at every step. After completing the backward sweep from $k = N-1$ down to $k = 1$,
the control sequence is updated by rolling out the full nonlinear
dynamics~\eqref{eq:nonlinear_dyn_ilqr} with a line-search parameter
$\alpha \in (0, 1]$. Starting from the fixed initial condition
$X_1^{\mathrm{new}} = X_1^0$, at each step the updated control is
\begin{equation}\label{eq:ilqr_update}
	u_k^{\mathrm{new}}
	= u_k^0
	+ K_k\,\partial X_k^{\mathrm{new}}
	+ \alpha\,d_k,
	\qquad k = 1, \ldots, N-1,
\end{equation}
where $\partial X_k^{\mathrm{new}}
= \partial X_k^{\mathrm{new}} - X_k^0$ is the deviation of the new
rollout from the nominal trajectory. The Armijo line-search condition
\begin{equation}\label{eq:armijo}
	\Delta J(\alpha)
	\;\geq\;
	c_1 \alpha \sum_{k=1}^{N-1} d_k^\top \mathbf{Q}_{uu,k}\,d_k,
	\qquad c_1 \in (0,1),
\end{equation}
is used to select $\alpha$, ensuring sufficient decrease in the cost at
each iteration. The expected cost reduction is computed during the
backward pass as $\Delta J(\alpha)
= -\alpha\sum_k d_k^\top \mathbf{Q}_{u,k}
- \tfrac{1}{2}\alpha^2 \sum_k d_k^\top\mathbf{Q}_{uu,k}\,d_k$,
which avoids the need for an additional forward simulation to evaluate the acceptance condition.

Trajectory–based optimal control methods such as iterative LQR (iLQR) and Differential Dynamic Programming (DDP) 
share with MPSP the common feature of forward propagation of the full nonlinear dynamics. 
However, the manner in which sensitivity information is used, and the resulting optimization structure, 
differ fundamentally across these approaches. 
In iLQR and DDP, the dynamics are linearized (iLQR) or expanded to second order (DDP) along the nominal trajectory, 
and the running cost is quadratically approximated. 
This leads to a dynamic programming recursion in the form of a backward Riccati sweep, 
which yields both feedforward and feedback control terms. 
Consequently, iLQR and DDP compute locally optimal control updates by solving a sequence of 
quadratic subproblems coupled through the linearized dynamics.

In contrast, the proposed Lie–group MPSP formulation avoids Riccati equations and costate propagation entirely. 
The key structural simplification is that the terminal state is enforced as a hard constraint, 
while the cost penalizes only the control effort. 
As a result, only the terminal sensitivity map is required, and the full horizon dynamics 
do not appear in the optimization constraints. 
Each MPSP iteration therefore reduces to solving a static quadratic program with a closed–form solution, 
rather than a dynamic programming recursion. 
This trades exact second–order optimality for substantial computational savings, 
making MPSP particularly attractive for aggressive finite–time maneuvers on Lie groups, 
where solving Riccati recursions or TPBVPs can be prohibitively expensive. In the next section, we discuss the application of MPSP algorithm for the attitude maneuver for variable pitch quadrotor and single main rotor helicopter. 

\section{Application: Attitude Maneuver of Variable Pitch Quadrotor and Single Main Rotor Helicopter}\label{Sec:VPQ_SMRH}
In this section, we discuss the application of the MPSP algorithm to the finite‑time optimal attitude maneuver problem for a Variable Pitch Quadrotor (VPQ) and a Single Main Rotor Helicopter (SMRH). The VPQ configuration evolves on the nonlinear manifold $SO(3) \times \mathbb{R}^3$, where $SO(3)$ represents the attitude and $\mathbb{R}^3$ the body angular velocity, while the SMRH configuration space is $SO(3) \times \mathbb{R}^3 \times \mathbb{R}^3$, incorporating both the fuselage attitude/velocity and additional rotor or moment states. These configuration spaces are not Euclidean and the associated dynamics are highly nonlinear, which makes conventional optimal control techniques—typically formulated in $\mathbb{R}^n$ and relying on standard linearizations—difficult to apply directly or numerically fragile when embedded coordinates are used. The proposed Lie‑group MPSP framework addresses this challenge by working intrinsically on the underlying manifolds, using left‑trivialized variations and structure‑preserving linearization to compute optimal maneuvers without resorting to local coordinate charts or solving nonlinear two‑point boundary value problems.

\subsection{Variable pitch quadrotor}
\begin{figure}
	\centering
	\begin{tikzpicture}[scale=0.4]
		\draw[->,ultra thick,blue](3,4)node[below left]{$O_G$}--+(3,0)node[below right]{$e_2$(East)};
		\draw[->,ultra thick,blue](3,4)--+(0,-3)node[below]{$e_3$(Down)};
		\draw[->,ultra thick,blue](3,4)--+(-1.75,1.75)node[above right]{$e_1$(North)};
		\node[align=center,above right,blue,ultra thick]at(3,4){Ground Frame$(\mathcal{F}^G)$};
		\draw[fill=blue](3,4)circle(0.15);
		
		\draw[->,ultra thick,domain=11:15.5,brown]plot(\x,\x-1.5)node[above left]{$b_1$};
		\draw[->,ultra thick,domain=11:17,brown]plot(\x,0.125*\x+8.1)node[below right]{$b_2$};
		\draw[->,ultra thick,brown,rotate around={14:(11,9.5)}](11,9.5)--(11,5)node[right]{$b_3$};
		\node[align=center,below,brown,ultra thick]at(15,7){Body Frame$(\mathcal{F}^B)$};
		
		\draw[line width=1](5,7)--(13,8)--(17,12)--(9,11)--cycle;
		\draw[line width=1](5,7)--(17,12);
		\draw[line width=1](13,8)--(9,11);
		\path[fill=gray,opacity=0.3](5,7)--(13,8)--(17,12)--(9,11)--cycle;

		\draw[line width=2](5,7)--(4.75,8);
		\draw[line width=2](13,8)--(12.75,9);
		\draw[line width=2](17,12)--(16.75,13);
		\draw[line width=2](9,11)--(8.75,12);
		
		\draw[ultra thick,rotate around={10:(4.75,8)}](4.75,8)ellipse(1 and 0.5);
		\path[fill=gray,opacity=0.5,rotate around={10:(4.75,8)}](4.75,8)ellipse(1 and 0.5);
		\draw[->,thick,red,rotate around={10:(4.75,8)}](5.35,8)to[out=270,in=270](4.2,8);
		\draw[ultra thick,rotate around={10:(12.75,9)}](12.75,9)ellipse(1 and 0.5);
		\path[fill=gray,opacity=0.5,rotate around={10:(12.75,9)}](12.75,9)ellipse(1 and 0.5);
		\draw[->,thick,red,rotate around={10:(12.75,9)}](12,9.1)to[out=270,in=270](13.35,9.1);
		\draw[ultra thick,rotate around={10:(16.75,13)}](16.75,13)ellipse(1 and 0.5);
		\path[fill=gray,opacity=0.5,rotate around={10:(16.75,13)}](16.75,13)ellipse(1 and 0.5);
		\draw[->,thick,red,rotate around={10:(16.75,13)}](17.45,13.1)to[out=270,in=270](16.1,13.1);
		\draw[ultra thick,rotate around={10:(8.75,12)}](8.75,12)ellipse(1 and 0.5);
		\path[fill=gray,opacity=0.5,rotate around={10:(8.75,12)}](8.75,12)ellipse(1 and 0.5);
		\draw[->,thick,red,rotate around={10:(8.75,12)}](8.1,12.1)to[out=270,in=270](9.35,12.1);
		
		\draw[->,line width=2,rotate around={13:(4.75,8)}](4.75,8.2)--(4.75,9.5)node[left]{$F_{T_4}$};
		\draw[->,line width=2,rotate around={14:(12.7,9.2)}](12.7,9.2)--(12.75,10.5)node[right]{$F_{T_3}$};
		\draw[->,line width=2,rotate around={14:(16.7,13.2)}](16.7,13.2)--(16.7,14.5)node[right]{$F_{T_2}$};
		\draw[->,line width=2,rotate around={14:(8.75,12.2)}](8.7,12.2)--(8.7,13.5)node[left]{$F_{T_1}$};
		
		\draw[->,dashed,ultra thick](11,9.5)--(11,5.5)node[left]{$mg$};
		\draw[fill=brown](11,9.5)circle(0.15);
		\node[left,brown]at(10.8,9.6){$O_B$};
		
	\end{tikzpicture}
	\caption{Frames of references}
	\label{fig:Frames_VPQ}
\end{figure}

First we discuss the attitude dynamics of a variable pitch quadrotor. Consider an inertial frame $\left(\mathcal{O}_G,e_1,e_2,e_3\right)$ and body frame $\left(\mathcal{O}_B,b_1,b_2,b_3\right)$ (see Fig. \ref{fig:Frames_VPQ}). The attitude of a variable pitch quadrotor is represented by a rotation matrix, which is an element of the special orthogonal group $\mathrm{SO}\left(3\right)$, defined as $\mathrm{SO}\left(3\right) \triangleq \left\lbrace R \in \mathbb{R}^{3 \times 3}|~RR^T=\mathbf{I}_3,~\text{det}\left(R\right)=1 \right\rbrace$. The rotation matrix $R \in \mathrm{SO}\left(3\right)$ represents the attitude of the VPQ. The body angular velocity of the VPQ is denoted by $\omega:\mathbb{R}\to \mathbb{R}^3$ given by $\mathfrak{so}\left(3\right) \ni \hat{\omega}\left(t\right) = R^\top\left(t\right)\dot{R}\left(t\right)$. The hat map $\hat{\cdot}:\mathbb{R}^3 \to \mathfrak{so}\left(3\right)$ defined as
$\hat{x}y=x \times y,~\forall x,y \in \mathbb{R}^3$, where `$\times$' is the cross product in $\mathbb{R}^3$. In order to derive the equations of motion from \eqref{Eq:Dynamics_SMS_on_Lie_Group}, we have the matrix representation of $\operatorname{ad}_{\omega_i}\in L\left(\mathbb{R}^3;\mathbb{R}^3\right)$ as $\left[\operatorname{ad}_{\omega_i}\right]=\hat{\omega}_i$. The dual-adjoint operator $\operatorname{ad}^{*}_{\omega_i}\in L\left(\left(\mathbb{R}^3\right)^{*};\left(\mathbb{R}^3\right)^{*}\right)$ admits the matrix representation $\left[\operatorname{ad}^{*}_{\omega_i}\right]=-\left[\operatorname{ad}_{\omega_i}\right]=-\hat{\omega}_i$. The Lie algebra of $\mathrm{SO}\left(3\right)$ is $\mathfrak{so}\left(3\right)$, i.e, tangent space at identity element. The tangent bundle $\mathrm{TSO}\left(3\right)$ is isomorphic to $\mathrm{SO}\left(3\right) \times \mathrm{T}_{\mathbf{I}_3}\mathrm{SO}\left(3\right)$ or in other words $\mathrm{TSO}\left(3\right) \simeq \mathrm{SO}\left(3\right) \times \mathfrak{so}\left(3\right)$. Finally, we have the equations of motion of the VPQ as
\begin{equation}\label{Eq:VPQ_dyn}
	\begin{split}
		\dot{R}\left(t\right) &= R\left(t\right)\hat{\omega}\left(t\right)\\
		\dot{\omega}\left(t\right)+J^{-1}\left(\hat{\omega}\left(t\right)J\omega\left(t\right)\right) &= J^{-1}u\left(t\right)
	\end{split}
\end{equation}
The term $J \in \mathbb{R}^{3 \times 3}$ represents the inertia matrix of the VPQ and the control input $u \in \mathbb{R}^3 \simeq \mathfrak{so}^{*}\left(3\right)$. Let the components of control input $u=\left[l,m,n\right]^\top$.  The thrust from individual rotors is varied by changing their collective pitch input. It is assumed that all rotors are operating at the same nominal speed and thrust actuation is achieved by changing their collective pitch input. Blade element theory along with momentum theory \cite{leishman2006principles} is used to calculate thrust and torque of each rotor as a function of thrust coefficient $C_{T_i}~~\forall~i=1,2,3,4$. The thrust and moment equilibrium equations for the H configuration (similar to "X" configuration) for the normal flight mode is given below
\begin{equation}\label{Eq:Thrust_Moment_Equation}
	\begin{split}
		T &= \gamma K\left(C_{T1}+C_{T2}+C_{T3}+C_{T4}\right)\\
		l &= \gamma Kd\left(C_{T1}-C_{T2}-C_{T3}+C_{T4}\right)\\
		m &= \gamma Kd\left(C_{T1}+C_{T2}-C_{T3}-C_{T4}\right)\\
		n &= \gamma \frac{Kr}{\sqrt{2}}\left(C_{T1}^{\frac{3}{2}}-C_{T2}^{\frac{3}{2}}+C_{T3}^{\frac{3}{2}}-C_{T4}^{\frac{3}{2}}\right)
	\end{split}
\end{equation}
where $K=\rho\pi r^{2}V^2_{\text{tip}}$, $\rho$ is the density of air, $r$ is the rotor radius, $d$ is the distance between a rotor and center of mass, $V_{\text{tip}}=\Omega r$, and $\Omega$ is the angular speed of the rotor and $\gamma$ is chosen as $+1$ for normal flight mode and $-1$ for inverted flight mode. The term $\gamma$ is introduced to ensure that thrust and moment equations properly captures the yaw dynamics while in the normal and inverted flight modes. For more details regarding rotor dynamics, one can refer to \cite{leishman2006principles,chipade2018advanced}. The optimal attitude maneuver from the initial condition $\left(R\left(0\right),\omega \left(0\right)\right)$ to the desired final state $\left(R\left(t_f\right),\omega \left(t_f\right)\right)$ while minimizing the control input $u$. The variation of a rotation matrix $R$ is represented using the exponential map $\operatorname{exp}: \mathfrak{so}\left(3\right) \to \mathrm{SO}\left(3\right)$ is defined as $R^{\epsilon}=R\operatorname{exp}\left(\epsilon \hat{\underline{\eta}}\right)$, where $\epsilon \in \left(-c,c\right)$ for $c \in \mathbb{R}_{>0},~\underline{\eta} \in \mathbb{R}^{3}$. The infinitesimal variation of the rotation matrix is found as
\begin{equation}
	\partial R =\left. \frac{\operatorname{d}}{\operatorname{dt}}\right\vert_{\epsilon=0} \left(R \operatorname{exp}\left(\epsilon \hat{\underline{\eta}}\right)\right)=R \hat{\underline{\eta}}
\end{equation}
Let the variation in angular velocity $\omega$ be $\partial \omega$. Now define the combined state vector $\partial X =\left[\underline{\eta}^\top,\partial \omega^\top\right]$. Now the discrete time evolution of the state $\partial X$ can be derived as
\begin{equation}
	\partial X_{k+1}=A_kX_k+B_k \partial u_k
\end{equation}
where
\begin{equation*}
	\begin{split}
		A_k &= \begin{bmatrix}
			\mathbf{I}_3-h\left(\hat{\omega}_{0k}+\hat{\partial \omega_k}\right) & h \mathbf{I}_3 \\
			\mathbf{0}_{3 \times 3} & \mathbf{I}_3+h\left(-J^{-1}\hat{\omega}_{0k}J+\hat{\omega}_{0k}\right)
		\end{bmatrix},~B_k=\begin{bmatrix}
		\mathbf{0}_{3 \times 3}\\ h J^{-1}
		\end{bmatrix}
	\end{split}
\end{equation*}
Since the end points are fixed, we have
\begin{equation}
	\partial X_N =\sum_{k=1}^{N-1}G_k \partial \underline{u}_k
\end{equation}
Let $\left\lbrace u^{i-1}_k \right\rbrace^{N-1}_{k=1}$ be the optimal control sequence found using MPSP algorithm at $\left(i-1\right)^{\operatorname{th}}$ iteration. Now we need to find the optimal control sequence $\left\lbrace u^{i}_k \right\rbrace^{N-1}_{k=1}$ for th current $i^{\operatorname{th}}$ iteration. Now the objective of MPSP is stated as below
\begin{equation}
	\begin{aligned}
		& \operatorname{minimize}_{\left\{\partial u_k\right\}_{k=1}^{N-1}} \quad J:=\sum_{k=1}^{N-1} \frac{1}{2} \left(u^{i-1}_k-\partial u^{i}_k\right)^\top R_k \left(u^{i-1}_k-\partial u^{i}_k\right) \\
		& \text { subject to }\left\{\begin{array}{l}
			\partial X^{i-1}_N =\sum_{k=1}^{N-1}G_k\partial u^{i}_k
		\end{array}\right.
	\end{aligned}
\end{equation}
Now from Theorem \ref{Th:MPSP_1}, we find the optimal control which solves the above optimal control problem.

\subsection{Single main rotor helicopter}
\begin{figure}[h!]
	\centering
	\includegraphics[width= 8cm, height = 3 cm]{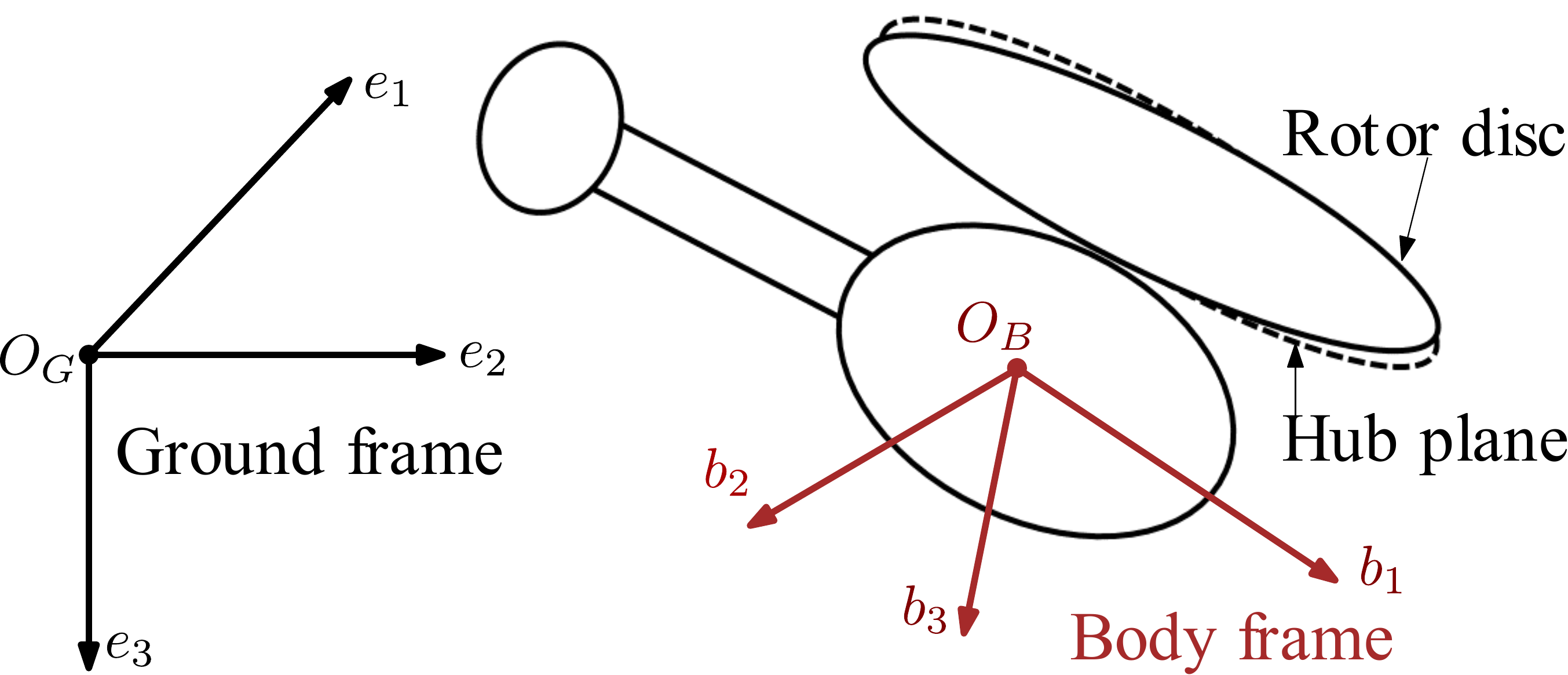}
	\caption{Frames}
	\label{fig:Frames_SMRH}
\end{figure}

In contrast to a VPQ, a single rigid body assumption is not applicable to a SMRH. Because of the presence of rotor and fuselage interaction, rotor and fuselage must be modeled as separate rigid bodies (see Fig. \ref{fig:Frames_SMRH}).  Unlike VPQ, where a single rigid-body assumption suffices and STA-based robust control has been successfully demonstrated \cite{krishna2022robust}, the SMRH requires explicit modeling of rotor–fuselage coupling. This motivates the extension of MPSP to handle actuator dynamics and parametric uncertainty. The rotor is modeled as a rigid disk with the flapping motion represented by two sets of first-order differential equations. Let the components of angular velocity of fuselage be $\omega=\left[p,q,r\right]^\top$. Next we discuss the flapping motion of the main rotor. Let $\beta_{1c} \in \mathbb{R}$ and $\beta_{1s} \in \mathbb{R}$ represent the longitudinal and lateral tilt of the main rotor measured from the hub plane. The control inputs be $\left[\theta_{\operatorname{lon}},\theta_{\operatorname{lat}}\right]^\top\in \mathbb{R}^2$ which represent the lateral and longitudinal cyclic blade pitch angles respectively. Then the flapping motion of the main rotor is described by
\begin{equation}\label{Eq:Flap_eqns}
	\begin{split}
		\dot{\beta}_{1c}&=-\frac{1}{\tau_m}\beta_{1c}-\frac{k_\beta}{2\Omega_{\text{mr}} I_\beta}\beta_{1s}-q-\frac{1}{\Omega_{\text{mr}}\tau_m}p+\frac{1}{\tau_m}\theta_{\operatorname{lon}}\\
		\dot{\beta}_{1s}&=-\frac{1}{\tau_m}\beta_{1s}+\frac{k_\beta}{2\Omega_{\text{mr}} I_\beta}\beta_{1c}-p+\frac{1}{\Omega_{\text{mr}}\tau_m}q+\frac{1}{\tau_m}\theta_{\operatorname{lat}}
	\end{split}
\end{equation}  

The parameters appearing in the above equations are main rotor time constant $\tau_m \in \mathbb{R}_{>0}$, main rotor angular speed $\Omega_{\text{mr}}\in \mathbb{R}_{>0}$, blade root stiffness $k_\beta \in \mathbb{R}_{>0}$ and the blade moment of inertia about the flap hinge $I_\beta \in \mathbb{R}_{>0}$. The net external rolling and pitching moments acting on the fuselage are caused due to the tilting of the main rotor thrust $T_{\operatorname{mr}}$ and the restoring moments acting on the flap hinge due to $k_{\beta}$. Hence the moments $l,m$ are expressed as $l=\left(k_{\beta}+h_{\operatorname{mr}}T_{\operatorname{mr}}\right)\beta_{1s},~m=\left(k_{\beta}+h_{\operatorname{mr}}T_{\operatorname{mr}}\right)\beta_{1c}$. The yawing moment on the fuselage $n$, applied by the tail rotor, is modeled by using a first-order differential equation as follows
\begin{equation}\label{Eq:n_tr_diff_eqn}
	\dot{n}_{\text{tr}}=-\frac{n_{\text{tr}}}{\tau_t}-K_tr+\frac{K_t}{\tau_t}\theta_{\operatorname{tail}}
\end{equation}
where $\tau_t, K_t \in \mathbb{R}_{>0}$ denote the time constant and control effectiveness of the tail rotor respectively. Let the control input be $\mathbb{R}^3 \ni  u= \left[ \theta_{\operatorname{lon}},~ \theta_{\operatorname{lat}},~ \theta_{\operatorname{tail}}\right]^\top$. From the rotor dynamics and moment expressions, we get the main rotor dynamics as
\begin{equation}\label{Eq:Moment_dynamics}
	\dot{M}=AM-K\omega+Bu
\end{equation}
where
\begin{equation*}
A \triangleq \begin{bmatrix}
	-\frac{1}{\tau_m} ~& \frac{k_\beta}{2\Omega I_\beta} & 0\\
	-\frac{k_\beta}{2\Omega I_\beta} ~& -\frac{1}{\tau_m} & 0\\
	0 & 0 & -\frac{1}{\tau_t}
\end{bmatrix},~K \triangleq \begin{bmatrix}
\left(hT+k_\beta\right)~ & -\frac{\left(hT+k_\beta\right)}{\Omega\tau_m} & 0\\
\frac{\left(hT+k_\beta\right)}{\Omega\tau_m}~ & \left(hT+k_\beta\right) & 0\\
0 & 0 & K_t
\end{bmatrix},~B \triangleq\text{diag}\left \lbrace \frac{\left(hT+k_\beta\right)}{\tau_m},~\frac{\left(hT+k_\beta\right)}{\tau_m},~\frac{K_t}{\tau_t}\right \rbrace
\end{equation*}

Compared to VPQ, the SMRH model introduces additional complexity because the rotor dynamics are coupled with fuselage motion through aerodynamic damping and flap stiffness. This coupling prevents the system from being treated as a simple mechanical system (SMS) on a Lie group. Prior robust control designs such as the Super Twisting Algorithm (STA) \cite{krishna2021super} have shown finite-time convergence and disturbance rejection for SMRH, but those approaches did not address optimality. Here, we extend the MPSP framework to explicitly handle the rotor–fuselage interaction while minimizing control effort.

From the above discussion, it can be concluded that the SMRH mathematical model does not have the structure of a SMS. However, following the procedure discussed in Sec \ref{Sec:MPSP_on_Lie_groups}, MPSP algorithm can be implemented which is discussed next. Let the variation in moment $M$ be $\partial M$. Now, we define the state vector $\partial X=\left[\eta^\top,\partial \omega^\top , \partial M^\top\right]^\top$. The discrete-time evolution of the state vector $\partial X$ can be derived as
\begin{equation*}
	\partial X_{k+1}=A_k \partial X_k+B_k \partial u_k
\end{equation*}
where
\begin{equation*}
	A_k \triangleq \begin{bmatrix}
		\mathbf{I}_3-h\left(\hat{\omega}_{0k}+\hat{\partial \omega_k}\right) & h \mathbf{I}_3 & \mathbf{0}_{3 \times 3}\\
		\mathbf{0}_{3 \times 3} & \mathbf{I}_3+h \left(-J^{-1}\hat{\omega}_{0k}J+\hat{\omega}_{0k}\right) & h J^{-1}\\
		\mathbf{0}_{3 \times 3} & \mathbf{I}_3-hK & \mathbf{I}_3+hA
	\end{bmatrix},~B_k=\begin{bmatrix}
	\mathbf{0}_{3 \times 3}\\
	\mathbf{0}_{3 \times 3}\\
	h B
	\end{bmatrix}
\end{equation*}

Now, the MPSP algorithm can be implemented using Theorem \ref{Th:MPSP_1}. This formulation highlights the novelty of applying MPSP to SMRH: despite the absence of SMS structure, the algorithm remains tractable and yields closed-form updates. In order to provide a principled benchmark for the discrete‑time Lie‑group MPSP algorithm on both VPQ and SMRH, we next derive the continuous‑time necessary conditions for finite‑time optimal attitude control directly on $\mathrm{SO}(3)$. Working at the level of the underlying rotation group allows us to formulate the optimal control problem in a coordinate‑free manner, using intrinsic rigid‑body dynamics and variational arguments rather than any particular choice of Euler angles or parameterization. The resulting Pontryagin‑type conditions yield a two‑point boundary value problem (TPBVP) on $\mathrm{SO}(3)$ that characterizes continuous‑time optimal attitude maneuvers for each vehicle. These continuous‑time TPBVP solutions serve as high‑fidelity references against which the discrete‑time MPSP trajectories can be compared in the numerical study, thereby quantifying the accuracy and near‑optimality of the proposed Lie‑group MPSP framework for both the VPQ and the SMRH.
\section{Continuous-Time Attitude Control of Variable Pitch Quadrotor and Single Main Rotor Helicopter }\label{Sec:VPQ_SMRH_CT} 
In this section, we solve the finite-time optimal attitude maneuver of VPQ and SMRH using TPBVP.
\subsection{Variable Pitch Quadrotor}

The optimal attitude maneuver objective is to transfer from an initial attitude $R_0 \in \mathrm{SO}\left(3\right)$ and angular velocity $\omega_0 \in \mathbb{R}^{3 \times 1}$ to desired values $\left(R_{F},\omega_{F}\right) \in \mathrm{SO}\left(3\right)$ and angular velocity $\omega_0 \in \mathbb{R}^{3 \times 1}$ to desired values $\left(R_{F},\omega_{F}\right) \in  \mathrm{SO}\left(3\right) \times \mathbb{R}^{3 \times 1}$ while minimizing the objective function
\begin{equation}\label{Eq:VPQ_CT_Obj}
	\stackrel{\text{minimize}}{u} \mathcal{J}=\frac{1}{2} \int_{0}^{t_f} \left(u^\top Q u\right)\operatorname{d}t
\end{equation}
such that $\left(R\left(t_f\right),\omega\left(t_f\right)\right)=\left(R_F,\omega_F\right)$ subject to the dynamics \eqref{Eq:VPQ_dyn}. In order to find the necessary conditions of optimality, we need to find the infinitesimal variation of $R^\top \dot{R}$, which can be derived as
	\begin{equation*}
	\begin{split}
		\partial \left(R^\top \dot{R}\right) &= \partial R \dot{R}+R^\top \partial \dot{R}=-\hat{\eta}R^\top \dot{R}+R^\top \left(\dot{R}\hat{\eta}+R \hat{\dot{\eta}}\right)=-\hat{\eta} \hat{\omega}+\hat{\omega}\hat{\eta}+\hat{\dot{\eta}}=\widehat{\left(\dot{\eta}+\hat{\omega}\eta\right)}
	\end{split}
\end{equation*}
The next theorem establishes the necessary conditions for the optimality for achieving the problem objective given in \eqref{Eq:VPQ_CT_Obj}.
\begin{theorem}
	Let $\stackrel{\circ}{u }\left(t\right),~\forall t \in \left[0,t_f\right]$ be the optimal control that solves the problem objective \eqref{Eq:VPQ_CT_Obj}, then the necessary conditions for optimality are
	\begin{itemize}
		\item Optimal control
		\begin{equation}\label{Eq:VPQ_CT_Opt_Cont}
			\begin{split}
				\stackrel{\circ}{u }\left(t\right)&=-Q^{-1}\stackrel{\circ}{\lambda}_{\omega}\left(t\right)
			\end{split}
		\end{equation}
		\item Multiplier equations
		\begin{equation}\label{Eq:VPQ_CT_ME}
			\begin{split}
				\frac{\operatorname{d}}{\operatorname{d}t} \left(\stackrel{\circ}{\lambda}_R \left(t\right)\right) &= -\widehat{\stackrel{\circ}{\omega}\left(t\right)} \stackrel{\circ}{\lambda}_R \left(t\right)\\
					\frac{\operatorname{d}}{\operatorname{d}t} \left(\stackrel{\circ}{\lambda}_\omega \left(t\right)\right) &= \left(J^{-1}\widehat{\left(J \stackrel{\circ}{\omega} \left(t\right)\right)}-\widehat{\stackrel{\circ}{\omega} \left(t\right)}\right)\stackrel{\circ}{\lambda}_\omega \left(t\right)
			\end{split}
		\end{equation}
		\item Boundary conditions
		\begin{equation}
			\left(\stackrel{\circ}{R}\left(0\right),\stackrel{\circ}{\omega} \left(0\right) \right)=\left(R_0,\omega_0\right),~~\left(\stackrel{\circ}{R}\left(t_f\right),\stackrel{\circ}{\omega} \left(t_f\right) \right)=\left(R_F,\omega_F\right)
		\end{equation}
	\end{itemize}
	where $\stackrel{\circ}{\lambda}_R \left(t\right) ,\stackrel{\circ}{\lambda}_{\omega}\left(t\right) \in \left(\mathbb{R}^{3 \times 1}\right)^2$ are the optimal Lagrange multipliers.
\end{theorem}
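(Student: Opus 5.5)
The plan is to derive the conditions by a variational argument on the augmented cost, using the intrinsic variation $R^{\epsilon}=R\exp(\epsilon\hat{\underline{\eta}})$ introduced above and the identity $\partial(R^\top\dot R)=\widehat{(\dot\eta+\hat\omega\eta)}$ just established. I would adjoin the two dynamical constraints of \eqref{Eq:VPQ_dyn} with multipliers $\lambda_R,\lambda_\omega:[0,t_f]\to\mathbb{R}^{3}$. I would scale the second constraint so that $u$ enters with unit coefficient, which is the choice that makes the stationarity condition come out in the stated form. This gives
\begin{equation*}
\mathcal{J}_a=\int_0^{t_f}\Big[\tfrac12 u^\top Q u+\lambda_R^\top\big((R^\top\dot R)^\vee-\omega\big)+\lambda_\omega^\top\big(u-J\dot\omega-\hat\omega J\omega\big)\Big]\operatorname{d}t .
\end{equation*}
The admissible variations are $\eta$, $\partial\omega$ and $\partial u$. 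Since both endpoints of $(R,\omega)$ are fixed, I take $\eta(0)=\eta(t_f)=0$ and $\partial\omega(0)=\partial\omega(t_f)=0$, while $\partial u$ is free.

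Next I would compute the first variation term by term. The $u$-terms give $(Qu+\lambda_\omega)^\top\partial u$. The kinematic term gives $\lambda_R^\top(\dot\eta+\hat\omega\eta)-\lambda_R^\top\partial\omega$. For the dynamics term I would use bilinearity of the cross product, $\partial(\hat\omega J\omega)=\hat\omega J\partial\omega-\widehat{J\omega}\,\partial\omega$, so that it contributes $-\lambda_\omega^\top\big(J\partial\dot\omega+(\hat\omega J-\widehat{J\omega})\partial\omega\big)$. I would then integrate by parts the terms containing $\dot\eta$ and $\partial\dot\omega$; the boundary contributions vanish because of the fixed endpoints. To move the skew matrices onto the multipliers I would use $\lambda^\top\hat a\,b=-(\hat a\lambda)^\top b$ for any $a,b,\lambda\in\mathbb{R}^3$.

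Collecting coefficients then gives the conditions. The coefficient of $\partial u$ yields $\stackrel{\circ}{u}=-Q^{-1}\stackrel{\circ}{\lambda}_\omega$, which is \eqref{Eq:VPQ_CT_Opt_Cont}. The coefficient of $\eta$ is $-\dot\lambda_R-\hat\omega\lambda_R$, and setting it to zero gives the first equation of \eqref{Eq:VPQ_CT_ME}. The coefficient of $\partial\omega$ gives a linear ODE for $\lambda_\omega$ of the form $J^\top\dot\lambda_\omega=\lambda_R+(\hat\omega J-\widehat{J\omega})^\top\lambda_\omega$. By the fundamental lemma of the calculus of variations, each of these coefficients must vanish on $[0,t_f]$. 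The boundary conditions are simply the imposed fixed endpoints.

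The step I expect to be the main obstacle is reconciling the $\lambda_\omega$ equation with the precise form stated in \eqref{Eq:VPQ_CT_ME}. There are two issues. First, the raw computation contains the forcing term $\lambda_R$ coming from the kinematic constraint. Second, the matrix multiplying $\lambda_\omega$ depends on how the multiplier is scaled, by $J$ or by $J^{-1}$. To address this, I would first fix the multiplier convention by the requirement that \eqref{Eq:VPQ_CT_Opt_Cont} holds exactly. I would then use $J=J^\top$, together with the identity $\widehat{J\omega}=\hat\omega J+J\hat\omega-\operatorname{tr}(J)\hat\omega$ if needed, to bring the coefficient into the form $J^{-1}\widehat{J\omega}-\hat\omega$. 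Finally, I would check whether the $\lambda_R$ coupling can be absorbed into a redefined multiplier; if it cannot, it should be recorded explicitly as part of the multiplier equation.
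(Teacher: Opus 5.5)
Your derivation follows the paper's own argument. You adjoin the kinematics and Euler's equation with multipliers, use $\partial(R^\top\dot R)=\widehat{(\dot\eta+\hat\omega\eta)}$, integrate by parts, move the hat maps onto the multipliers, and collect the coefficients of $\partial u$, $\eta$, $\partial\omega$. Every coefficient you write down is correct. Your $\lambda_R$ is the negative of the paper's, since the paper adjoins $\lambda_R^\top(\omega-(R^\top\dot R)^\vee)$.

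The gap is the step you leave open, and it cannot be closed in favour of the statement. You need no identity for the $\lambda_\omega$ coefficient. The one you quote also has the wrong sign: for symmetric $J$ it is $\widehat{J\omega}=\operatorname{tr}(J)\hat\omega-\hat\omega J-J\hat\omega$. Using only $J^\top=J$, you get $(\hat\omega J-\widehat{J\omega})^\top=\widehat{J\omega}-J\hat\omega$, so your equation is already
\[
\dot\lambda_\omega=\bigl(J^{-1}\widehat{J\omega}-\hat\omega\bigr)\lambda_\omega+J^{-1}\lambda_R ,
\]
which reads $-J^{-1}\lambda_R$ in the paper's sign convention.

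The forcing term cannot be absorbed into a redefined multiplier. Take $J=Q=I_3$, $\omega_0=\omega_F=0$ and $R_F\neq R_0$.
\begin{itemize}
\item The decoupled equation in the statement gives $\dot\lambda_\omega=0$, so $u=-\lambda_\omega$ is constant.
\item Since $\hat\omega\omega=0$, Euler's equation reduces to $\dot\omega=u$.
\item Then $\omega(t_f)=t_f u=0$ forces $u\equiv 0$, so $R$ never leaves $R_0$.
\end{itemize}
This problem does have a minimizer. The coupled system also admits the usual single-axis extremal, with $\lambda_R$ constant along the rotation axis and a linearly varying torque. So the decoupled equation is not a valid necessary condition.

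The defect therefore lies in the statement, not in your computation. The paper's proof meets exactly the same term: its final $\partial\omega$ coefficient contains $+\lambda_R$. It then asserts the decoupled equation anyway, whereas the companion SMRH theorem correctly keeps $-J^{-1}\lambda_R$. You should commit to your fallback rather than hedge. The correct result keeps the coupling, and the optimal control law, the $\lambda_R$ equation and the boundary conditions come out as stated.
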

\begin{proof}
	Consider the augmented performance index,
	\begin{equation}
		\mathcal{J}_a =\int_{0}^{t_f} \left\lbrace\frac{1}{2} u^\top Q u +\lambda^\top_R\left(\hat{\omega}-R^\top \dot{R}\right)^\vee+\lambda^\top_\omega\left(u-\hat{\omega}J \omega-J \dot{\omega}\right) \right\rbrace \operatorname{d}t
	\end{equation}
	where $\lambda_R,\lambda_\omega  \in \left(\mathbb{R}^{3 \times 1}\right)^2$ are the Lagrange multipliers. The variation of the augmented performance index is given by
	\begin{equation*}
		\partial \mathcal{J}_a =\int_{0}^{t_f} \left\lbrace u^\top Q \partial u +\lambda^\top_R \left(\partial \omega -\partial \left(R^\top \dot{R}\right)^\vee\right)+\lambda^\top_{\omega} \left(\partial u -\hat{\omega} J\partial \omega -\widehat{\partial \omega}J \omega -J \partial \dot{\omega}\right)\right\rbrace\operatorname{d}t
	\end{equation*}
	Substituting the infinitesimal variation of $R^\top \dot{R}$, we get
	\begin{equation*}
		\partial \mathcal{J}_a =\int_{0}^{t_f} \left\lbrace u^\top Q \partial u +\lambda^\top_R \left(\partial \omega -\dot{\eta} -\hat{\omega} \eta\right)+\lambda^\top_{\omega} \left(\partial u -\hat{\omega} J\partial \omega -\widehat{\partial \omega}J \omega -J \partial \dot{\omega}\right)\right\rbrace\operatorname{d}t
	\end{equation*}
	By applying integration by parts,
	\begin{equation*}
		\partial \mathcal{J}_a =\int_{0}^{t_f} \left\lbrace u^\top Q \partial u +\lambda^\top_R \left(\partial \omega -\hat{\omega} \eta\right) +\dot{\lambda}^\top_R \eta +\lambda^\top_{\omega} \left(\partial u -\hat{\omega} J\partial \omega -\widehat{\partial \omega}J \omega\right)+\dot{\lambda}^\top_\omega J \partial \omega \right\rbrace\operatorname{d}t-\left.\left\lbrace \lambda^\top_R \eta +\lambda^\top_\omega J\delta\omega\right\rbrace\right \vert^{t_f}_0
	\end{equation*}
	Since the initial attitude and angular velocity are fixed, we have $\eta\left(0\right)=\mathbf{0}_{3 \times 1}$ and $\partial \omega \left(0\right)=\mathbf{0}_{3 \times 1}$ 
	\begin{equation*}
		\partial \mathcal{J}_a =\int_{0}^{t_f} \left\lbrace u^\top Q \partial u +\lambda^\top_R \left(\partial \omega -\hat{\omega} \eta\right) +\dot{\lambda}^\top_R \eta +\lambda^\top_{\omega} \left(\partial u -\hat{\omega} J\partial \omega -\widehat{\partial \omega}J \omega\right)+\dot{\lambda}^\top_\omega J \partial \omega \right\rbrace\operatorname{d}t
	\end{equation*}
	 Using the scalar triple product of three vectors $a,b,c \in \mathbb{R}^3$, $[a,b,c]=a^\top\left(b \times c\right)=b^\top \left(c \times a\right)=c^\top \left(a \times b\right)$. The variation of $\mathcal{J}_a$ can be simplified as
	\begin{equation*}
			\partial \mathcal{J}_a =\int_{0}^{t_f} \left\lbrace  \partial u^\top \left(Q u+\lambda_\omega\right)+\eta^\top \left(\hat{\omega}\lambda_R +\dot{\lambda}_R\right)+\partial \omega^\top \left(-\widehat{J \omega}-J\hat{\lambda}_\omega \omega+J \dot{\lambda}_\omega +\lambda_R\right)\right\rbrace\operatorname{d}t
	\end{equation*}
	Choosing multiplier equations such that expressions multiplied with the variations are identically zero, we get the necessary conditions given in \eqref{Eq:VPQ_CT_Opt_Cont} and \eqref{Eq:VPQ_CT_ME}.
\end{proof}

\noindent In the next section, we derive the continuous‑time necessary conditions for the finite‑time optimal attitude maneuver of the SMRH. Building on the geometric rigid‑body model and rotor–fuselage coupling dynamics introduced earlier, we formulate a continuous‑time optimal control problem directly on $\mathrm{SO}(3)\times \mathbb{R}^3 \times \mathbb{R}^3$, where the rotation matrix, angular velocity, and rotor moment evolve according to the SMRH equations of motion. Using variational calculus and Pontryagin’s maximum principle in this Lie‑group setting, we obtain the associated boundary‑value problem linking the state and costate trajectories, along with the optimal control law expressed in terms of the adjoint variables. These continuous‑time necessary conditions provide a rigorous characterization of optimal SMRH attitude maneuvers and furnish a high‑accuracy reference solution that is later used to evaluate the performance and near‑optimality of the discrete‑time MPSP algorithm.
\subsection{Single Main Rotor Helicopter}

The optimal attitude maneuver objective is to transfer from an initial conditions $\left(R_0,\omega_0,M_0\right) \in \mathrm{SO}\left(3\right) \times \left(\mathbb{R}^{3 \times 1}\right)^2$ to the final conditions $\left(R_F,\omega_F,M_F\right) \in \mathrm{SO}\left(3\right) \times \left(\mathbb{R}^{3 \times 1}\right)^2 $ while minimizing the objective function
\begin{equation}\label{Eq:SMRH_CT_Obj}
	\stackrel{\text{minimize}}{u} \mathcal{J}=\frac{1}{2} \int_{0}^{t_f} \left(u^\top Q u\right)\operatorname{d}t
\end{equation}
such that $\left(R\left(t_f\right),\omega\left(t_f\right),M\left(t_f\right)\right)=\left(R_F,\omega_F,M_F\right)$ subject to the dynamics \eqref{Eq:Moment_dynamics}. The next theorem establishes the necessary conditions for the optimality for achieving the problem objective given in \eqref{Eq:SMRH_CT_Obj}.
\begin{theorem}
	Let $\stackrel{\circ}{u }\left(t\right),~\forall t \in \left[0,t_f\right]$ be the optimal control that solves the problem objective \eqref{Eq:SMRH_CT_Obj}, then the necessary conditions for optimality are
	\begin{itemize}
		\item Optimal control
		\begin{equation}\label{Eq:SMRH_CT_Opt_Cont}
			\begin{split}
				\stackrel{\circ}{u }\left(t\right)&=-Q^{-1}B^\top \stackrel{\circ}{\lambda}_{M}\left(t\right)
			\end{split}
		\end{equation}
		\item Multiplier equations
		\begin{equation}\label{Eq:VSMRH_CT_ME}
			\begin{split}
				\frac{\operatorname{d}}{\operatorname{d}t} \left(\stackrel{\circ}{\lambda}_R \left(t\right)\right) &= -\widehat{\stackrel{\circ}{\omega}\left(t\right)} \stackrel{\circ}{\lambda}_R \left(t\right)\\
				\frac{\operatorname{d}}{\operatorname{d}t} \left(\stackrel{\circ}{\lambda}_\omega \left(t\right)\right) &= \left(J^{-1}\widehat{\left(J \stackrel{\circ}{\omega} \left(t\right)\right)}-\widehat{\stackrel{\circ}{\omega} \left(t\right)}\right)\stackrel{\circ}{\lambda}_\omega \left(t\right)-J^{-1} \stackrel{\circ}{\lambda}_R -J^{-1} K^\top \stackrel{\circ}{\lambda}_M\\
				\frac{\operatorname{d}}{\operatorname{d}t} \left(\stackrel{\circ}{\lambda}_M \left(t\right)\right) &=- \stackrel{\circ}{\lambda}_\omega -A^\top \stackrel{\circ}{\lambda}_M
			\end{split}
		\end{equation}
		\item Boundary conditions
		\begin{equation}
			\left(\stackrel{\circ}{R}\left(0\right),\stackrel{\circ}{\omega} \left(0\right) , \stackrel{\circ}{M}\left(0\right) \right)=\left(R_0,\omega_0,M_0\right),~~\left(\stackrel{\circ}{R}\left(t_f\right),\stackrel{\circ}{\omega} \left(t_f\right), \stackrel{\circ}{M}\left(t_f\right) \right)=\left(R_F,\omega_F,M_F\right)
		\end{equation}
	\end{itemize}
	where $\stackrel{\circ}{\lambda}_R \left(t\right) ,\stackrel{\circ}{\lambda}_{\omega}\left(t\right),\stackrel{\circ}{\lambda}_{M}\left(t\right) \in \left(\mathbb{R}^{3 \times 1} \right)^3$ are the optimal Lagrange multipliers.
\end{theorem}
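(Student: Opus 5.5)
We follow the same variational route as in the preceding proof for the VPQ, now with the state $(R,\omega,M)$ and a third multiplier $\lambda_M$ attached to the rotor-moment dynamics \eqref{Eq:Moment_dynamics}. The fuselage equation is taken as $J\dot\omega+\hat\omega J\omega=M$, so the moment enters the angular-velocity dynamics exactly as the control did for the VPQ. The control now enters only through $\dot M = AM-K\omega+Bu$.

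\textbf{Augmented index.} The augmented performance index is
\begin{equation*}
\mathcal{J}_a=\int_0^{t_f}\Big\{\tfrac12 u^\top Q u+\lambda_R^\top\big(\hat\omega-R^\top\dot R\big)^\vee+\lambda_\omega^\top\big(J^{-1}M-J^{-1}\hat\omega J\omega-\dot\omega\big)+\lambda_M^\top\big(AM-K\omega+Bu-\dot M\big)\Big\}\,\operatorname{d}t .
\end{equation*}
Here $\lambda_\omega$ multiplies the $J^{-1}$-normalized equation. This normalization is what produces the $J^{-1}$ factors multiplying $\lambda_R$ and $K^\top\lambda_M$ in \eqref{Eq:VSMRH_CT_ME}.

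\textbf{First variation.} Take the variation with $R^\epsilon=R\exp(\epsilon\hat\eta)$. Reuse the identity $\partial(R^\top\dot R)=\widehat{(\dot\eta+\hat\omega\eta)}$ derived just before the VPQ theorem. Linearize the gyroscopic term as $\partial(\hat\omega J\omega)=\hat\omega J\partial\omega-\widehat{J\omega}\,\partial\omega$. The moment and control terms are already linear and contribute $\lambda_\omega^\top J^{-1}\partial M$, $-\lambda_M^\top K\partial\omega$, $\lambda_M^\top A\partial M$ and $\lambda_M^\top B\partial u$.

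\textbf{Integration by parts and boundary terms.} Integrate by parts the terms containing $\dot\eta$, $\partial\dot\omega$ and $\partial\dot M$. The boundary terms vanish because both endpoints are fixed, so $\eta$, $\partial\omega$ and $\partial M$ are zero at $t=0$ and at $t=t_f$. The multipliers are therefore left free at both ends, consistent with the stated boundary conditions.

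\textbf{Collecting coefficients.} Using the triple-product identities $a^\top\hat b c=-b^\top\hat a c$, i.e. transposes of hat maps, collect the coefficients of $\partial u$, $\eta$, $\partial\omega$ and $\partial M$, and set each to zero.
\begin{itemize}
\item The coefficient of $\partial u$ gives $Qu+B^\top\lambda_M=0$, hence \eqref{Eq:SMRH_CT_Opt_Cont}.
\item The coefficient of $\eta$ gives the same $\lambda_R$ equation as for the VPQ, since $R$ couples to nothing new.
\item The coefficient of $\partial\omega$ gives the $\lambda_\omega$ equation. Its gyroscopic part is identical to the VPQ case, and it picks up the extra forcing terms from $\lambda_R$ and from $-K^\top\lambda_M$.
\item The coefficient of $\partial M$ gives $\dot\lambda_M=-A^\top\lambda_M-J^{-1}\lambda_\omega$.
\end{itemize}

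\textbf{Main obstacle.} The hardest part is the bookkeeping in the $\partial\omega$ and $\partial M$ equations: getting the signs, the transposes, and the placement of $J^{-1}$ exactly as printed. With the normalization above, $\partial M$ yields $-J^{-1}\lambda_\omega$, whereas the statement has $-\lambda_\omega$ with $J^{-1}$ attached to $\lambda_R$ and $K^\top\lambda_M$ instead. To reconcile the two, I would first try rescaling $\lambda_\omega\mapsto J\lambda_\omega$, i.e. multiplying the unnormalized equation $J\dot\omega+\hat\omega J\omega-M=0$. Using $J=J^\top$, I would check that this reproduces precisely the printed form, including the term $J^{-1}\widehat{(J\omega)}-\hat\omega$. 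If it does not match exactly, I would choose the multiplier scaling that does and verify the gyroscopic term against the VPQ theorem as a consistency check.
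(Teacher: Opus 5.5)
Once you switch to the unnormalized multiplier $\lambda_\omega^\top(M-\hat\omega J\omega-J\dot\omega)$, your argument is the paper's argument: augmented index, left-trivialized variation $R^\epsilon=R\exp(\epsilon\hat\eta)$, integration by parts, then collecting coefficients. Your model $J\dot\omega+\hat\omega J\omega=M$ is the right one. The paper's printed index has $u$ in that slot, but its collected coefficients are those of $M$: there is no $\lambda_\omega$ in the $\partial u$ term and there is a $+\lambda_\omega$ in the $\partial M$ term.

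Your remark in the augmented-index step is backwards, though. The $J^{-1}$-normalized multiplier gives $\dot\lambda_\omega=(\widehat{J\omega}-J\hat\omega)J^{-1}\lambda_\omega-\lambda_R+K^\top\lambda_M$, with no $J^{-1}$ on the forcing terms. It is the unnormalized form that puts $J\dot\lambda_\omega$ into the $\partial\omega$ coefficient, and hence $J^{-1}$ in front of $\lambda_R$ and $K^\top\lambda_M$.

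The step that fails is your last one: ``check that this reproduces precisely the printed form, and otherwise choose the multiplier scaling that does.'' With the unnormalized multiplier the $\partial\omega$ coefficient is
\[
J\dot\lambda_\omega+(J\hat\omega-\widehat{J\omega})\lambda_\omega+\lambda_R-K^\top\lambda_M .
\]
This is the paper's collected coefficient, reading its $-\widehat{J\omega}$ as $-\widehat{J\omega}\lambda_\omega$ and using $-J\hat\lambda_\omega\omega=J\hat\omega\lambda_\omega$. Setting it to zero gives
\[
\dot\lambda_\omega=(J^{-1}\widehat{J\omega}-\hat\omega)\lambda_\omega-J^{-1}\lambda_R+J^{-1}K^\top\lambda_M ,
\]
with a plus sign on the $K$ term, not the printed minus. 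The control law, the $\lambda_R$ equation, the $\lambda_M$ equation, the gyroscopic part and the $-J^{-1}\lambda_R$ term all do match.

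No rescaling repairs this:
\begin{itemize}
\item Since $B$ is invertible, the printed control law $u=-Q^{-1}B^\top\lambda_M$ forces the multiplier of $AM-K\omega+Bu-\dot M$ to be exactly $\lambda_M$.
\item The printed $\dot\lambda_M=-\lambda_\omega-A^\top\lambda_M$ then forces the multiplier of $M-\hat\omega J\omega-J\dot\omega$ to be exactly $\lambda_\omega$.
\item Flipping the sign convention of $\lambda_R$ changes only the $\lambda_R$ term.
\end{itemize}
So the printed $-J^{-1}K^\top\lambda_M$ is incompatible with $\dot M=AM-K\omega+Bu$; it would be correct if the dynamics had $+K\omega$.

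The paper's proof reaches the same coefficient and then simply asserts the printed equation, so it carries the same sign slip. The right way to finish is to derive the $+J^{-1}K^\top\lambda_M$ equation and flag the discrepancy, rather than search for a scaling that does not exist.

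Also, the $\lambda_R$ forcing is not new relative to the VPQ. The VPQ proof collects $+\lambda_R$ in the same coefficient, even though the VPQ statement omits $-J^{-1}\lambda_R$. So use the VPQ theorem only as a check on the gyroscopic part.
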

\begin{proof}
	Consider the augmented performance index,
	\begin{equation}
		\mathcal{J}_a =\int_{0}^{t_f} \left\lbrace\frac{1}{2} u^\top Q u +\lambda^\top_R\left(\hat{\omega}-R^\top \dot{R}\right)^\vee+\lambda^\top_\omega\left(u-\hat{\omega}J \omega-J \dot{\omega}\right) +\lambda^\top_M \left(AM-K\omega+Bu-\dot{M}\right)\right\rbrace \operatorname{d}t
	\end{equation}
	where $\lambda_R,\lambda_\omega,\lambda_M  \in \left(\mathbb{R}^{3 \times 1}\right)^3$ are the Lagrange multipliers. The variation of the augmented performance index is given by
	\begin{equation*}
		\begin{split}
			\partial \mathcal{J}_a &=\int_{0}^{t_f} \left\lbrace u^\top Q \partial u +\lambda^\top_R \left(\partial \omega -\partial \left(R^\top \dot{R}\right)^\vee\right)+\lambda^\top_{\omega} \left(\partial u -\hat{\omega} J\partial \omega -\widehat{\partial \omega}J \omega -J \partial \dot{\omega}\right)+\lambda^\top_M \left(A\partial M-K\partial \omega +B \partial u-\partial \dot{M}\right)\right\rbrace\operatorname{d}t
		\end{split}
	\end{equation*}
	Applying integration by parts
	\begin{equation*}
		\begin{split}
			\partial \mathcal{J}_a &=\int_{0}^{t_f} \left\lbrace u^\top Q \partial u + \lambda^\top_R \left(\partial \omega -\hat{\omega} \eta\right) +\dot{\lambda}^\top_R \eta +\lambda^\top_{\omega} \left(\partial u -\hat{\omega} J\partial \omega -\widehat{\partial \omega}J \omega\right)+\dot{\lambda}^\top_\omega J \partial \omega  \right.\\
			&\qquad \left. +\lambda^\top_M \left(A\partial M-K\partial \omega+B \partial u-\partial \dot{M}\right)+\dot{\lambda}^\top_M \partial M  \right\rbrace\operatorname{d}t\\
			&=\int_{0}^{t_f} \left\lbrace \partial u^\top \left(Q u+B^\top \lambda_{M}\right)+\eta^\top \left(\hat{\omega} \lambda_R+\dot{\lambda}_R\right)+\partial \omega^\top \left(-\widehat{J \omega}-J\hat{\lambda}_\omega \omega +J \dot{\lambda}_\omega+\lambda_R-K^\top \lambda_M\right) \right.\\
			&\qquad \left. +\partial M^\top \left(A^\top \lambda_M+\lambda_\omega+\dot{\lambda}_M\right)\right\rbrace\operatorname{d}t
		\end{split}
	\end{equation*}
	By choosing the multiplier equations such that expressions multiplied with variations are identically zero, we get the necessary conditions given in \eqref{Eq:SMRH_CT_Opt_Cont} and \eqref{Eq:VSMRH_CT_ME}.
\end{proof}

\noindent In the next section, we numerically validate the proposed MPSP algorithm by simulating finite‑time optimal attitude flipping maneuvers for both the VPQ and the SMRH. Using the intrinsic Lie‑group formulation developed earlier, we generate discrete‑time trajectories on $\mathrm{SO}(3)$ (for VPQ) and on the extended configuration space including rotor dynamics (for SMRH), and apply MPSP to compute the corresponding control inputs that achieve aggressive 180° flip maneuvers within prescribed time horizons. The resulting state and control histories are then compared against benchmark solutions obtained from the continuous‑time TPBVP formulations and, where applicable, against trajectories produced by iLQR. This numerical study highlights the accuracy, convergence behavior, and computational efficiency of the Lie‑group MPSP scheme in realistic aerospace maneuvering scenarios.

\section{Numerical Simulation}\label{Sec:Numerical_simulation}

This section presents numerical simulations for the Variable Pitch Quadrotor (VPQ) and Single Main Rotor Helicopter (SMRH) flipping maneuvers to evaluate the proposed Lie-group MPSP algorithm. The results include maneuver trajectories, control histories, iteration-wise terminal deviation behavior, and robustness under randomized initialization. We additionally compare the MPSP-generated trajectories with continuous-time TPBVP solutions to assess accuracy. The numerical study highlights three distinct methods for solving aggressive finite-time attitude maneuvers on nonlinear configuration spaces such as SO(3): the proposed discrete-time Lie-group MPSP method, the continuous-time optimal control formulation based on necessary/sufficient conditions and TPBVP solution, and the trajectory-optimization framework iLQR. Although all three methods aim to compute control inputs that realize the same terminal maneuver, they differ fundamentally in how optimality, feasibility, and computational structure are handled. The following section shows that MPSP-generated trajectories closely match the continuous-time benchmark for both the variable-pitch quadrotor (VPQ) and the single-main-rotor helicopter (SMRH), with negligible deviation in state and control histories. In particular, for the SMRH case, despite the additional rotor-fuselage coupling and actuator-related complexity, the MPSP trajectory remains close to the TPBVP solution and converges in only a small number of iterations. All simulations are carried out using a fixed discrete time step of $h=1~\mathrm{ms}$.  
The maneuver horizon is chosen as VPQ: $T = 0.6~\mathrm{s}$ ($N=600$ steps), SMRH: $T = 1.2~\mathrm{s}$ ($N=1200$ steps).

\subsection{Variable Pitch Quadrotor (VPQ)}

The inertia matrix of the VPQ is chosen as $J=[J_{ij}]$, where $J_{xx}=0.0122,~J_{yy}=0.0266,~J_{zz}=0.0387,~J_{xy}=0.0003,~J_{xz}=0.00056,~J_{yz}=0.00032$. At each MPSP iteration, the terminal sensitivity matrices are computed using the intrinsic left-trivialized linearization on $SO(3)$. These matrices remain well-conditioned throughout the maneuver, confirming numerical stability of the Lie-group deviation model. The VPQ is commanded to perform a $180^{\circ}$ flip about the roll axis. The initial attitude and angular velocity are chosen as $[\phi(0),\theta(0),\psi(0)]^\top=[0,0,0]^\top,
[p(0),q(0),r(0)]^\top=[0,0,0]^\top$ and the desired terminal state at $t_f=0.6$~s is $[\phi(t_f),\theta(t_f),\psi(t_f)]^\top=[180^{\circ},0,0]^\top,
[p(t_f),q(t_f),r(t_f)]^\top=[0,0,0]^\top$. The VPQ results are shown in Figs.~\ref{fig:Euler_ang_VPQ}--\ref{fig:VPQ_ErrNorm_itr}. Since the VPQ attitude dynamics can be represented as a rigid-body model on SO(3), the continuous-time TPBVP solution serves as a strong benchmark for assessing the quality of the discrete-time approximation. The MPSP algorithm converges in six iterations for the nominal VPQ flip. The terminal deviation norm $\|\partial X_N\|$ exhibits a non-monotonic pattern (Fig.~\ref{fig:VPQ_ErrNorm_itr}), which is expected because MPSP performs full-step feasibility corrections based on first-order sensitivity information rather than a line-search or merit-function-based update. When the iterate lies outside the local linearization region, the measured terminal deviation may temporarily increase; once inside the contraction region predicted by Theorem~III.3, rapid reduction occurs. The MPSP trajectory (solid) closely matches the continuous-time TPBVP benchmark (dashed), demonstrating the accuracy of the Lie-group sensitivity model and yielding smooth, physically realizable control inputs. The close overlap between the MPSP and continuous-time responses suggests that the intrinsic linearized deviation model and the recursively computed terminal sensitivity matrices are sufficiently accurate for aggressive flip maneuvers. In other words, the discretization used by MPSP does not introduce any substantial deviation in the control input profile, and the final control history is nearly indistinguishable from the continuous-time optimal solution. From a numerical viewpoint, this is significant because MPSP avoids the repeated forward-backward shooting required by the TPBVP while still recovering essentially the same maneuver. Thus, for the VPQ, MPSP appears to provide an excellent balance between computational tractability and near-optimal performance. The continuous-time solution is the most faithful representation of the original optimal control problem, but it comes at the cost of solving a nonlinear two-point boundary value problem on a Lie group. Such shooting-based solvers are well known to be initialization-sensitive and computationally intensive. By contrast, the proposed MPSP update reduces each iteration to a static quadratic program with a closed-form update law based only on terminal sensitivity information. Thus, in the numerical section, the continuous-time method should be interpreted as a benchmark for optimality, whereas MPSP should be interpreted as a computationally efficient near-optimal solver. The simulations indicate that the loss in exact optimality is very small, while the gain in numerical simplicity is substantial. As already stated earlier in the manuscript, iLQR and DDP use local linearization or second-order expansion of the dynamics together with a backward Riccati recursion, producing both feedforward and feedback terms. Therefore, iLQR is fundamentally a cost-descent method: it seeks a locally optimal trajectory by iteratively improving a quadratic approximation of the value function along the horizon. MPSP, in contrast, does not propagate costates and does not solve a backward dynamic-programming recursion. Its optimization structure is simpler because only the terminal sensitivity map appears in the subproblem, and the terminal condition is imposed as a hard constraint. This makes MPSP cheaper per iteration and especially attractive for short-horizon aggressive maneuvers where terminal accuracy is the main requirement. In a numerical comparison, one would therefore expect MPSP to reach terminal feasibility faster and with a simpler update structure, whereas iLQR would typically provide a more standard local optimal-control framework with explicit feedback gains but at higher computational cost due to the repeated backward Riccati sweep.
\begin{figure*}[!tbp]
  \centering
  \begin{subfigure}{0.48\textwidth}
      \centering
      \includegraphics[width=\textwidth]{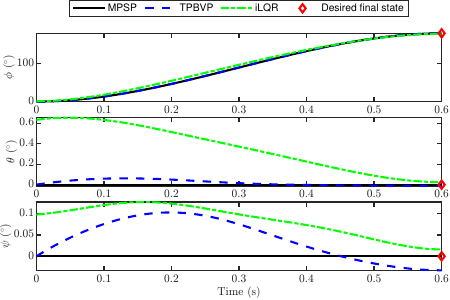}
      \caption{VPQ Euler angle trajectories}
      \label{fig:Euler_ang_VPQ}
  \end{subfigure}
  \hfill
  \begin{subfigure}{0.48\textwidth}
      \centering
      \includegraphics[width=\textwidth]{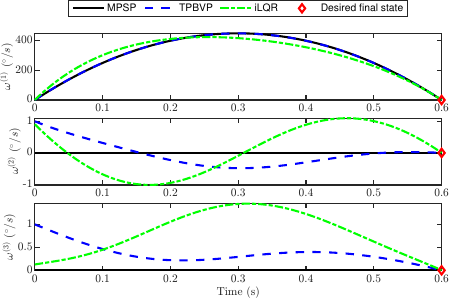}
      \caption{VPQ angular velocity trajectories}
      \label{fig:Ang_velocity_VPQ}
  \end{subfigure}
  \caption{VPQ attitude and angular velocity responses for the $180^\circ$ flip maneuver.}
  \label{fig:VPQ_attitude_two_panel}
\end{figure*}

\begin{figure*}[!tbp]
  \centering
  \begin{subfigure}{0.48\textwidth}
      \centering
      \includegraphics[width=\textwidth]{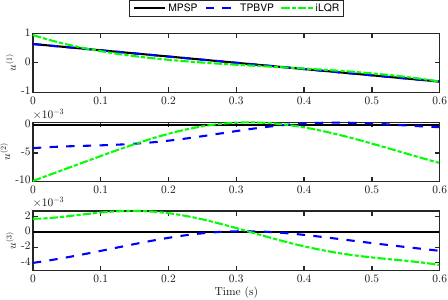}
      \caption{VPQ control inputs}
      \label{fig:Control_input_VPQ}
  \end{subfigure}
  \hfill
  \begin{subfigure}{0.48\textwidth}
      \centering
      \includegraphics[width=\textwidth]{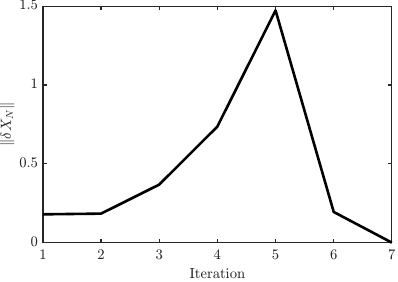}
      \caption{VPQ terminal deviation norm vs iteration}
      \label{fig:VPQ_ErrNorm_itr}
  \end{subfigure}
  \caption{VPQ control inputs and terminal deviation convergence behavior.}
  \label{fig:VPQ_control_two_panel}
\end{figure*}

\subsection{Single Main Rotor Helicopter (SMRH)}

For the SMRH simulations, the discrete time step is $h=0.001$~s and the maneuver horizon is $t_f=1.2$~s, resulting in $N=1200$ steps. The SMRH is commanded to perform the same $180^{\circ}$ roll flip. The initial and final rotor moments are $[l(0),m(0),n(0)]^\top=[0,0,0]^\top$, $[l(t_f),m(t_f),n(t_f)]^\top=[0,0,0]^\top$. Figures~\ref{fig:SRH_Euler_ang}--\ref{fig:SRH_Control_input} show the SMRH results.   The SMRH dynamics include rotor--fuselage coupling, making the continuous-time TPBVP significantly more challenging to solve. Despite this, the MPSP algorithm achieves terminal feasibility within a small number of iterations. As in the VPQ case, the terminal deviation norm may increase in early iterations before entering the local contraction region. The MPSP-generated trajectory aligns closely with the TPBVP benchmark, and the control inputs remain smooth despite the coupled dynamics, demonstrating the capability of the Lie-group MPSP framework to handle more complex actuator–body interactions.


\begin{figure*}[!tbp]
  \centering
  \begin{subfigure}{0.48\textwidth}
      \centering
      \includegraphics[width=\textwidth]{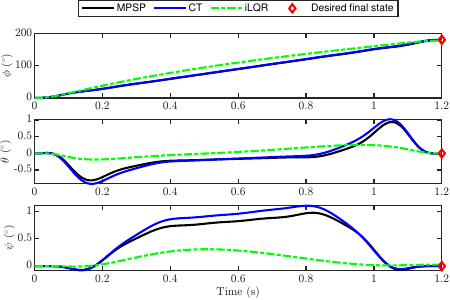}
      \caption{SMRH Euler angle trajectories}
      \label{fig:SRH_Euler_ang}
  \end{subfigure}
  \hfill
  \begin{subfigure}{0.48\textwidth}
      \centering
      \includegraphics[width=\textwidth]{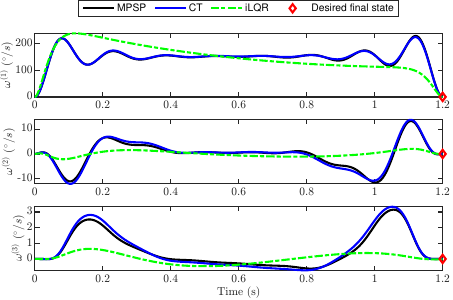}
      \caption{SMRH angular velocity trajectories}
      \label{fig:SRH_Ang_vel}
  \end{subfigure}
  \caption{SMRH attitude and angular velocity responses for the $180^\circ$ flip maneuver.}
  \label{fig:SMRH_attitude_two_panel}
\end{figure*}

\begin{figure*}[!tbp]
  \centering
  \begin{subfigure}{0.48\textwidth}
      \centering
      \includegraphics[width=\textwidth]{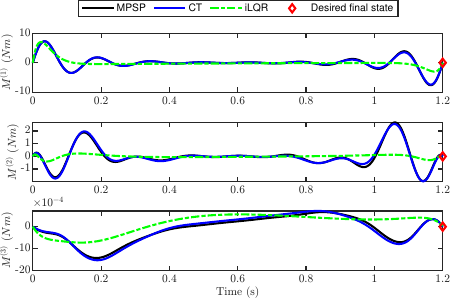}
      \caption{SMRH control inputs}
      \label{fig:SRH_Control_input}
  \end{subfigure}
  \hfill
  \begin{subfigure}{0.48\textwidth}
      \centering
      \includegraphics[width=\textwidth]{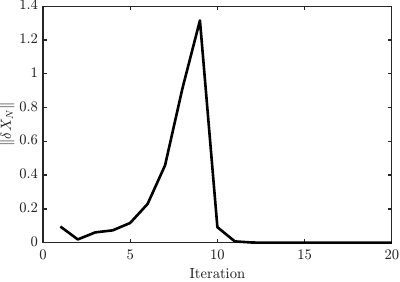}
      \caption{SMRH terminal deviation norm vs iteration}
      \label{fig:SRH_ErrNorm_vs_iter}
  \end{subfigure}
  \caption{SMRH control inputs and terminal deviation convergence behavior.}
  \label{fig:SMRH_control_two_panel}
\end{figure*}

\subsection{Monte Carlo Robustness (Random Initial Control Guess)}

To assess robustness with respect to initialization, Monte Carlo (MC) trials were performed for both VPQ and SMRH. In each trial, the initial state was held fixed and the initial control sequence was randomized (bounded perturbations) to generate diverse initial guesses. Each trial was then solved using the same MPSP settings as the nominal case. Figures~\ref{fig:VPQ_MC_TermErr}--\ref{fig:VPQ_MC_JuBox} summarize the VPQ MC results. All trials converge and exhibit the characteristic feasibility-driven behavior: the terminal deviation may temporarily increase, while the control-increment cost $J_{\Delta u}$ peaks in intermediate iterations and collapses to zero once feasibility is achieved. The final control effort $J_u$ shows very low dispersion across trials, indicating consistent solutions. Figures~\ref{fig:SRH_MC_TermErr}--\ref{fig:SRH_MC_JuBox} show the SMRH MC results. Again, all trials converge successfully, and the dispersion in final $J_u$ remains small, demonstrating that the feasibility-driven MPSP updates lead to repeatable solutions despite randomized initial control guesses. Overall, the MC results confirm robust convergence of the Lie-group MPSP algorithm, numerical stability of the intrinsic sensitivity model, and repeatability of the final control solution across diverse initializations.
\subsection{ Monte Carlo Robustness (Random Initial Conditions)}

To assess robustness with respect to initial condition uncertainty, we perform a Monte Carlo study by randomizing the initial states of both VPQ and SMRH and evaluating the convergence behavior of iLQR and MPSP across multiple trials. For the VPQ, the initial Euler angles are perturbed uniformly within $\pm 10^\circ$ and the initial angular velocities within $\pm 10^\circ/\mathrm{s}$. Under these perturbations, the iLQR method typically drives the terminal deviation norm $\Vert \partial  X_N \Vert$ close to zero within approximately five iterations for many realizations. However, in a non-negligible subset of trials, we observe a steady-state terminal error of about $0.5$, reflecting the fact that $\Vert \partial X_N \Vert = 0$ is not enforced as a hard constraint in the iLQR formulation but only penalized via the terminal cost. In addition, for some initial conditions the terminal error initially increases over a few iterations before eventually decreasing and reaching small values after roughly 10 iterations, indicating sensitivity to the initial conditions and the controller gains. In contrast, for MPSP applied to the same VPQ Monte Carlo samples, the terminal deviation $\Vert \partial X_N \Vert$ converges to machine-zero within at most 10 iterations for all tested initial conditions, and the evolution of $\Vert \partial X_N \Vert$ across iterations exhibits a consistent, nearly monotone pattern. This uniform behavior is a direct consequence of enforcing the terminal condition as a hard equality constraint in each MPSP subproblem, and it translates into more reliable and predictable performance over random initializations.

For the SMRH, we adopt the same randomization range for the Euler angles and angular velocities as in the VPQ case, and additionally perturb the initial rotor/fuselage moment $M(0)$ within $\pm 0.1\,\mathrm{Nm}$. Under these conditions, the iLQR algorithm reduces $\Vert \partial X_N \Vert$ to near zero in roughly five iterations for all tested realizations, while MPSP typically requires about fifteen iterations to achieve the same terminal accuracy. In this setting, both methods ultimately converge, but their qualitative behavior differs: MPSP exhibits similar, repeatable convergence patterns across trials, whereas the iLQR convergence curve is more sensitive to the choice of gains (e.g., terminal and running weights, line-search parameters) and can show more variability from one randomized initial condition to another. Overall, the Monte Carlo results suggest that, while iLQR can converge faster in terms of iteration count when appropriately tuned, MPSP offers more consistent and robust convergence properties under random initial conditions, especially when strict enforcement of the terminal constraint is critical.

\begin{figure*}[!tbp]
  \centering
  \begin{subfigure}{0.48\textwidth}
      \centering
      \includegraphics[width=\textwidth]{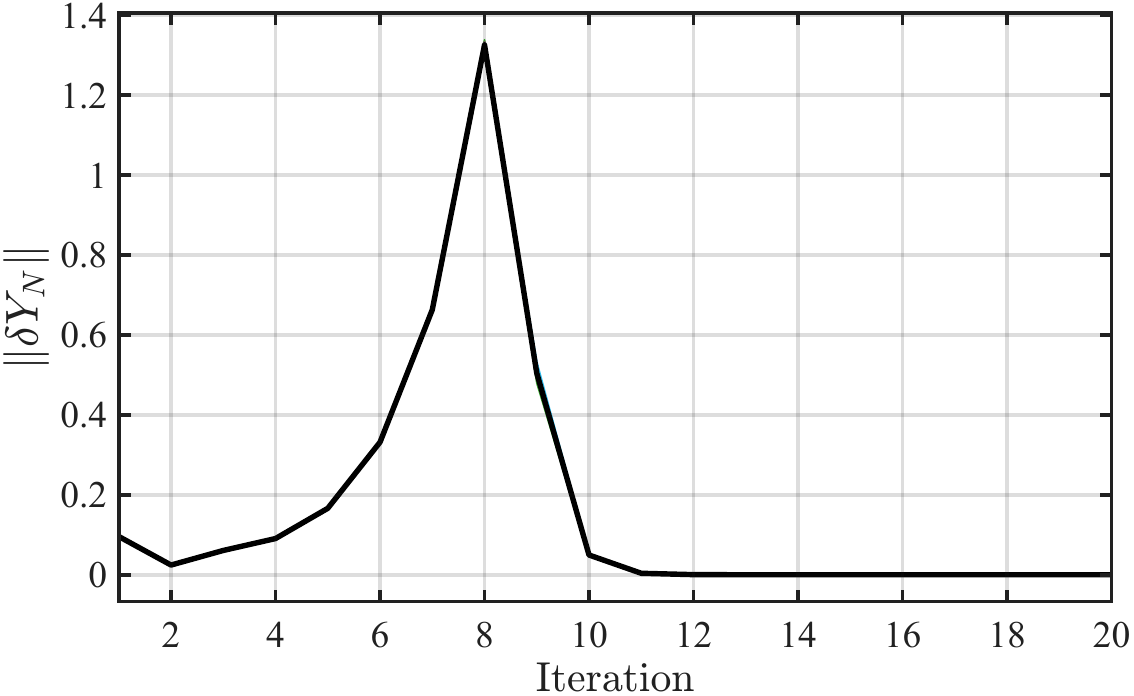}
      \caption{VPQ MC: terminal deviation norm $\|\delta X_N\|$ vs iteration}
      \label{fig:VPQ_MC_TermErr}
  \end{subfigure}
  \hfill
  \begin{subfigure}{0.48\textwidth}
      \centering
      \includegraphics[width=\textwidth]{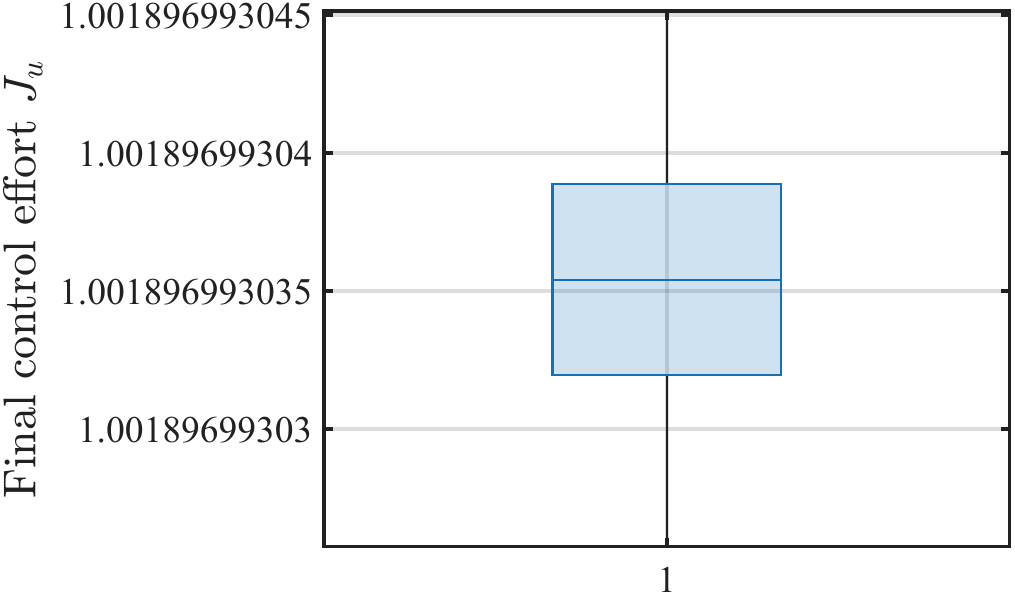}
      \caption{VPQ MC: control increment cost $J_{\Delta u}$ vs iteration}
      \label{fig:VPQ_MC_JdU}
  \end{subfigure}
  \caption{VPQ Monte Carlo results: terminal deviation and control-increment cost across randomized initial guesses.}
  \label{fig:VPQ_MC_two_panel}
\end{figure*}

\begin{figure*}[!tbp]
  \centering
  \begin{subfigure}{0.55\textwidth}
      \centering
      \includegraphics[width=\textwidth]{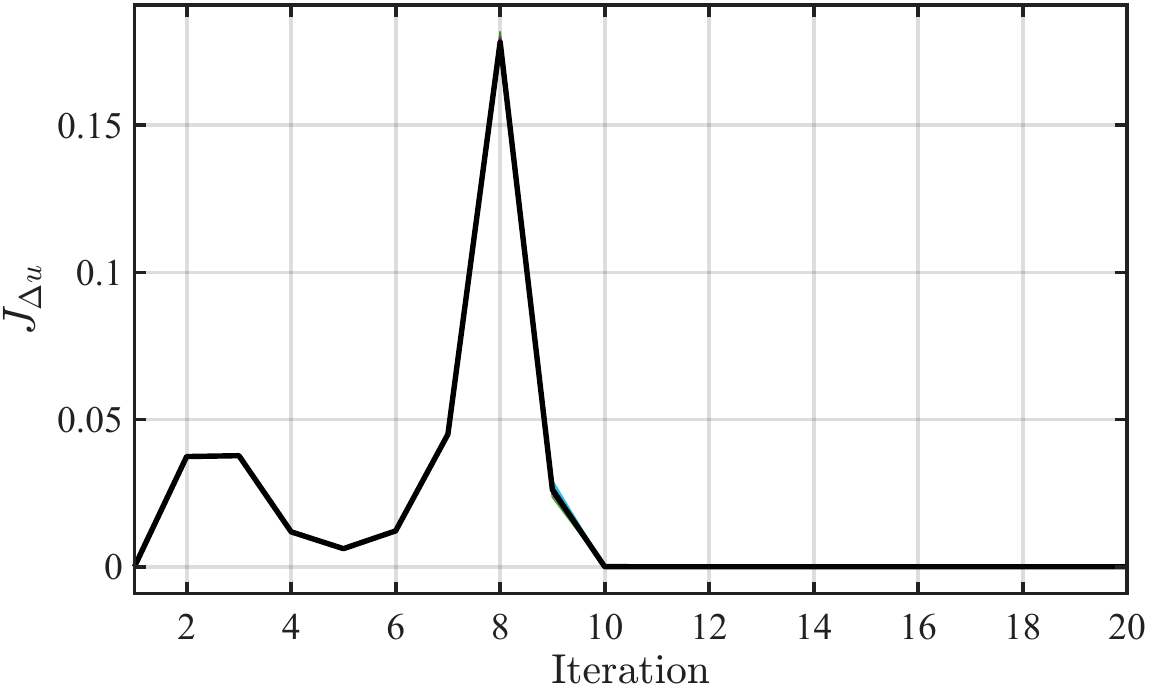}
      \caption{VPQ MC: boxplot of final control effort $J_u$}
      \label{fig:VPQ_MC_JuBox}
  \end{subfigure}
  \caption{VPQ Monte Carlo: distribution of final control effort across randomized initial guesses.}
  \label{fig:VPQ_MC_box_only}
\end{figure*}


\begin{figure*}[!tbp]
  \centering
  \begin{subfigure}{0.48\textwidth}
      \centering
      \includegraphics[width=\textwidth]{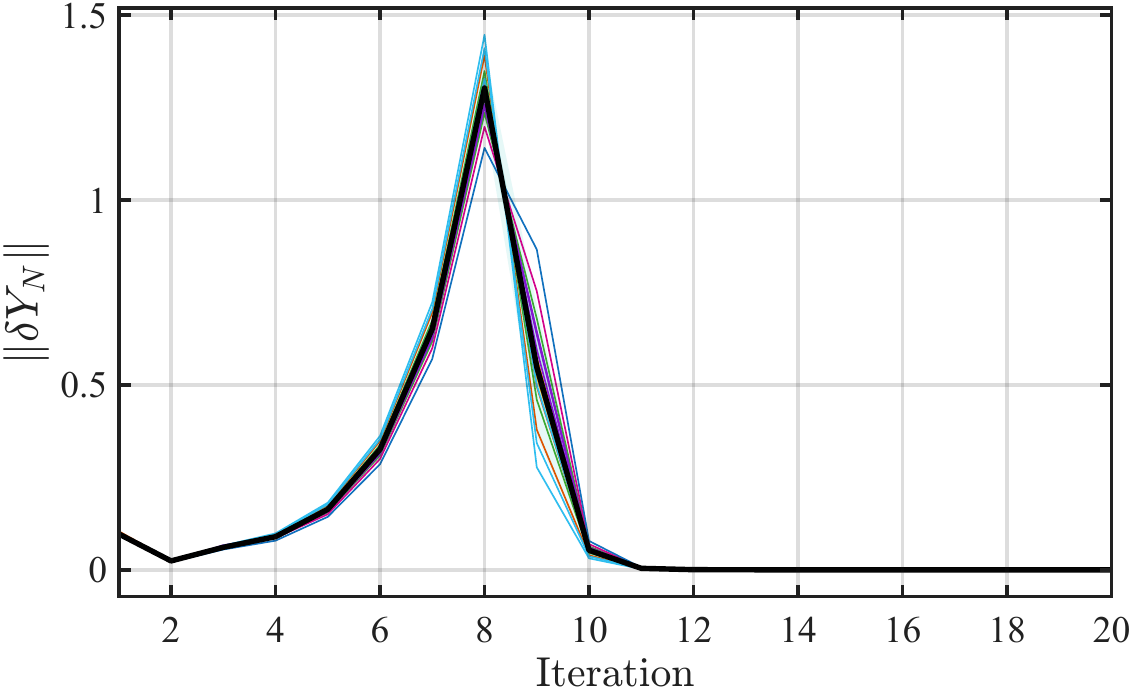}
      \caption{SMRH MC: terminal deviation norm $\|\delta X_N\|$ vs iteration}
      \label{fig:SRH_MC_TermErr}
  \end{subfigure}
  \hfill
  \begin{subfigure}{0.48\textwidth}
      \centering
      \includegraphics[width=\textwidth]{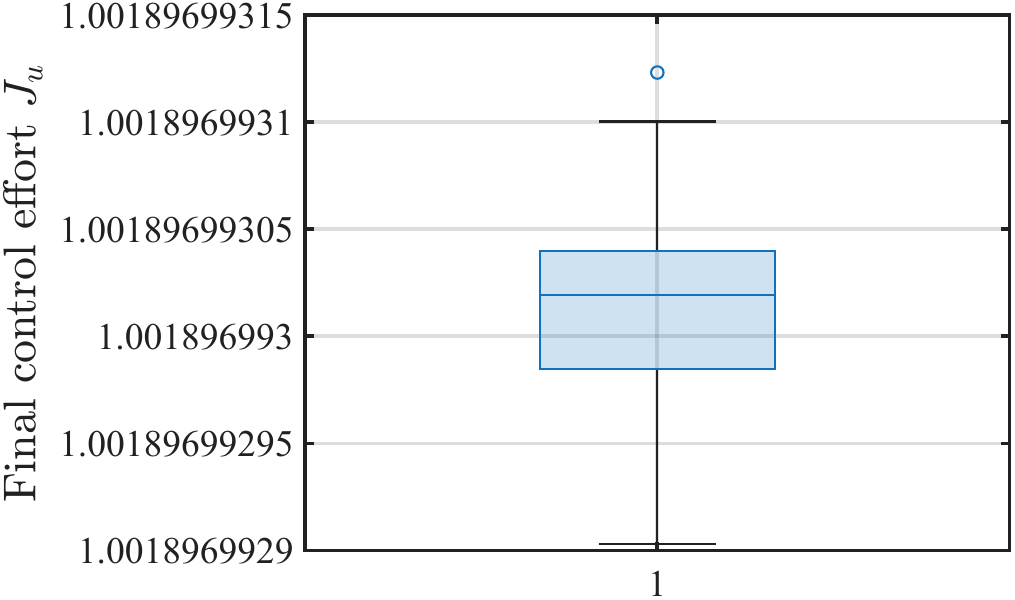}
      \caption{SMRH MC: control increment cost $J_{\Delta u}$ vs iteration}
      \label{fig:SRH_MC_JdU}
  \end{subfigure}
  \caption{SMRH Monte Carlo results: terminal deviation and control-increment cost across randomized initial guesses.}
  \label{fig:SRH_MC_two_panel}
\end{figure*}

\begin{figure*}[!tbp]
  \centering
  \begin{subfigure}{0.55\textwidth}
      \centering
      \includegraphics[width=\textwidth]{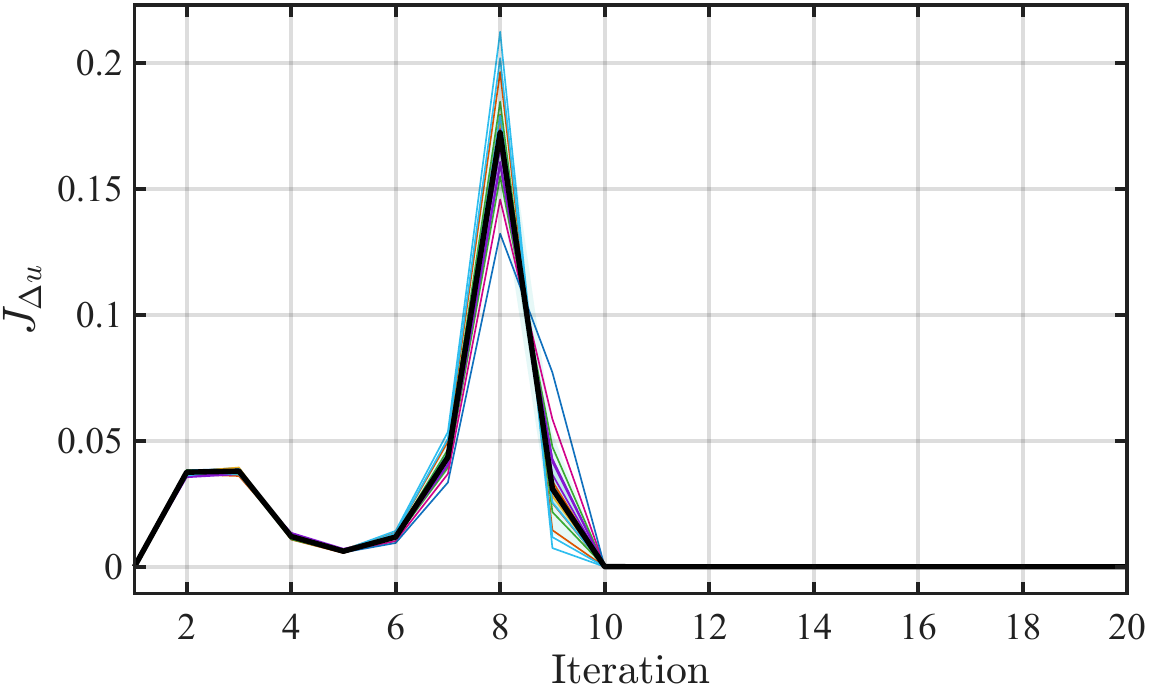}
      \caption{SMRH MC: boxplot of final control effort $J_u$}
      \label{fig:SRH_MC_JuBox}
  \end{subfigure}
  \caption{SMRH Monte Carlo: distribution of final control effort across randomized initial guesses.}
  \label{fig:SRH_MC_box_only}
\end{figure*}
\begin{figure*}[!tbp]
  \centering
  \begin{subfigure}{0.48\textwidth}
      \centering
      \includegraphics[width=\textwidth]{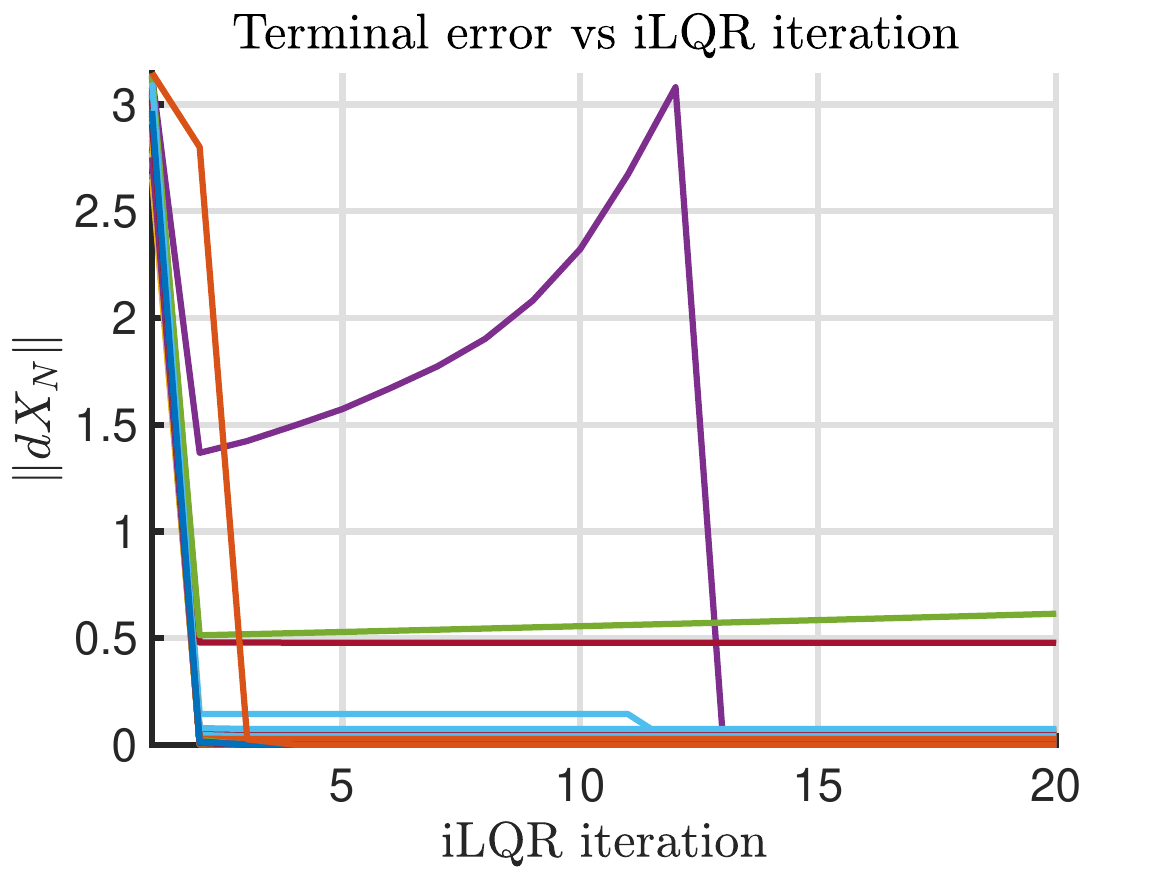}
      \caption{VPQ MC iLQR: terminal deviation norm $\|\delta X_N\|$ vs iteration}
      \label{fig:VPQ_MC_iLQR_IC}
  \end{subfigure}
  \hfill
  \begin{subfigure}{0.48\textwidth}
      \centering
      \includegraphics[width=\textwidth]{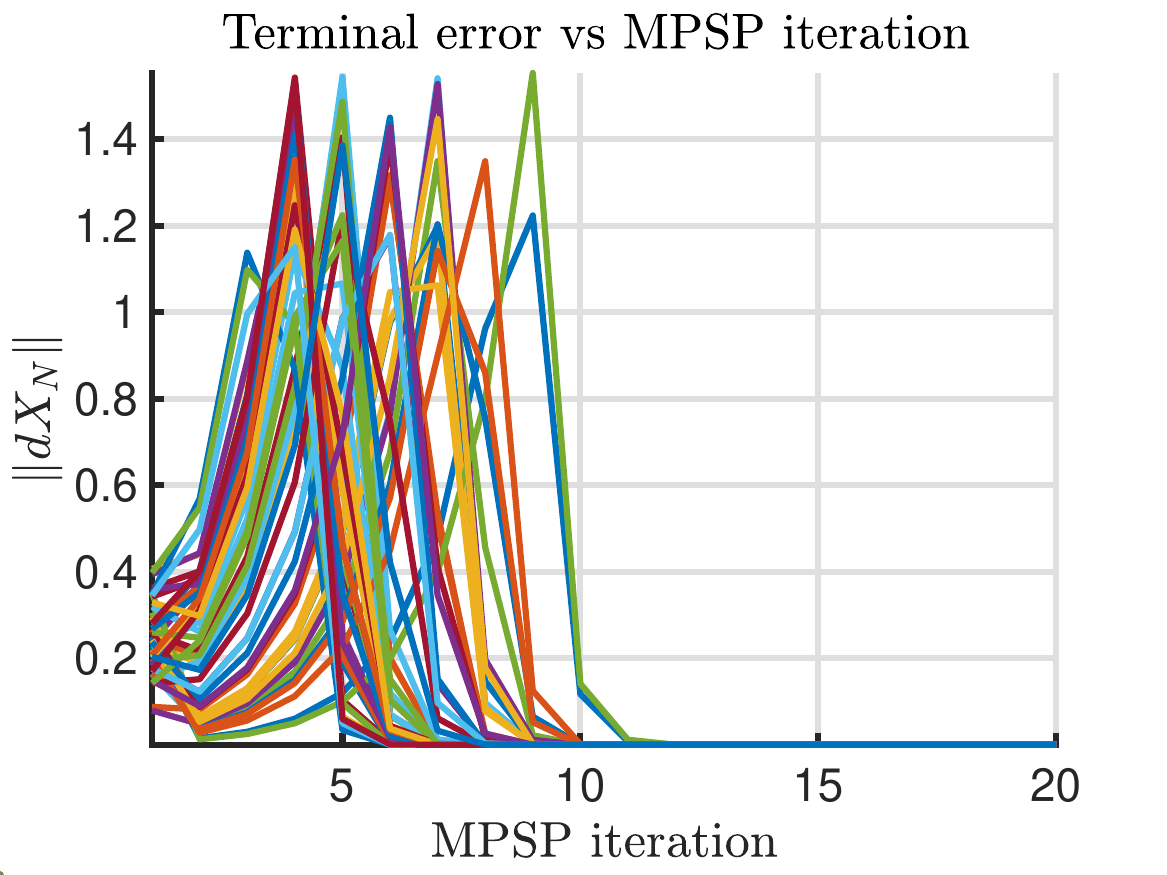}
      \caption{VPQ MC MPSP: terminal deviation norm $\|\delta X_N\|$ vs iteration}
      \label{fig:VPQ_MC_MPSP_IC}
  \end{subfigure}
\end{figure*}
\begin{figure*}[!tbp]
  \centering
  \begin{subfigure}{0.48\textwidth}
      \centering
      \includegraphics[width=\textwidth]{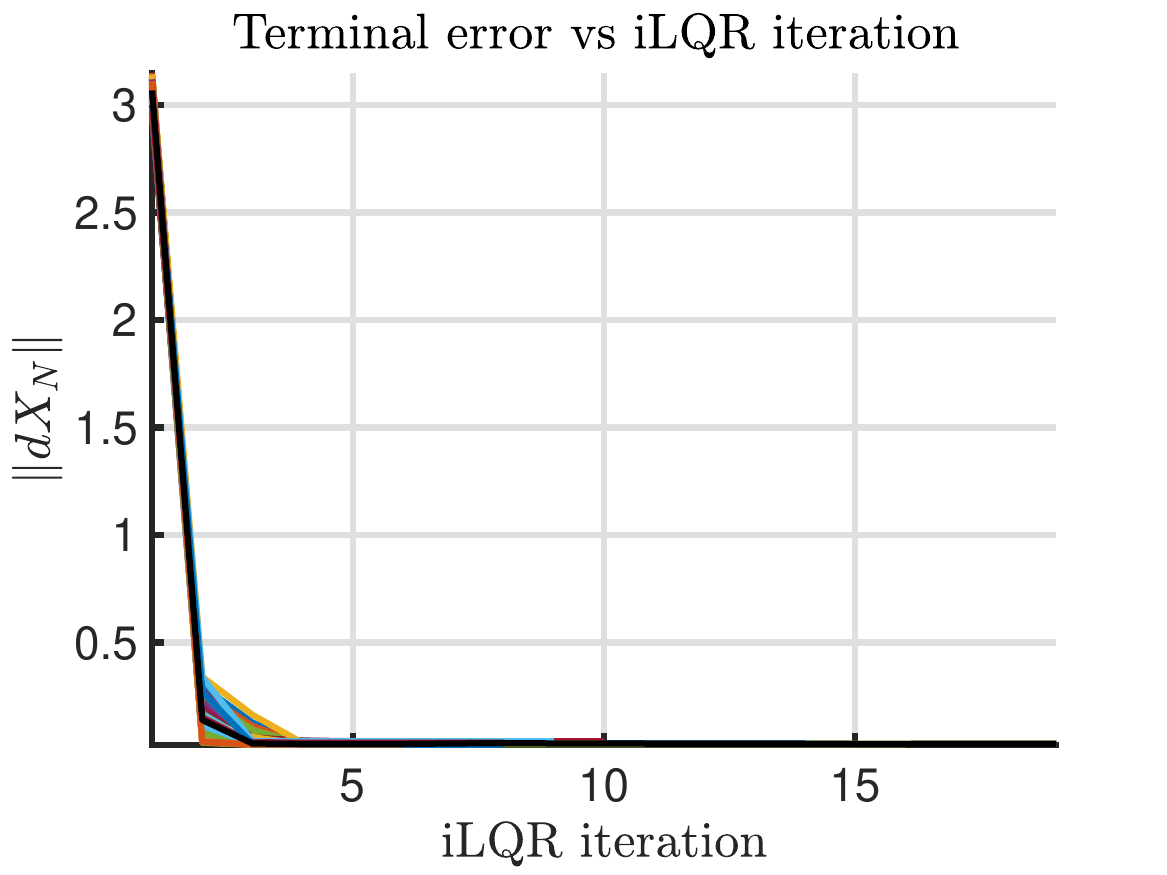}
      \caption{SMRH MC iLQR: terminal deviation norm $\|\delta X_N\|$ vs iteration}
      \label{fig:SMRH_MC_iLQR_IC}
  \end{subfigure}
  \hfill
  \begin{subfigure}{0.48\textwidth}
      \centering
      \includegraphics[width=\textwidth]{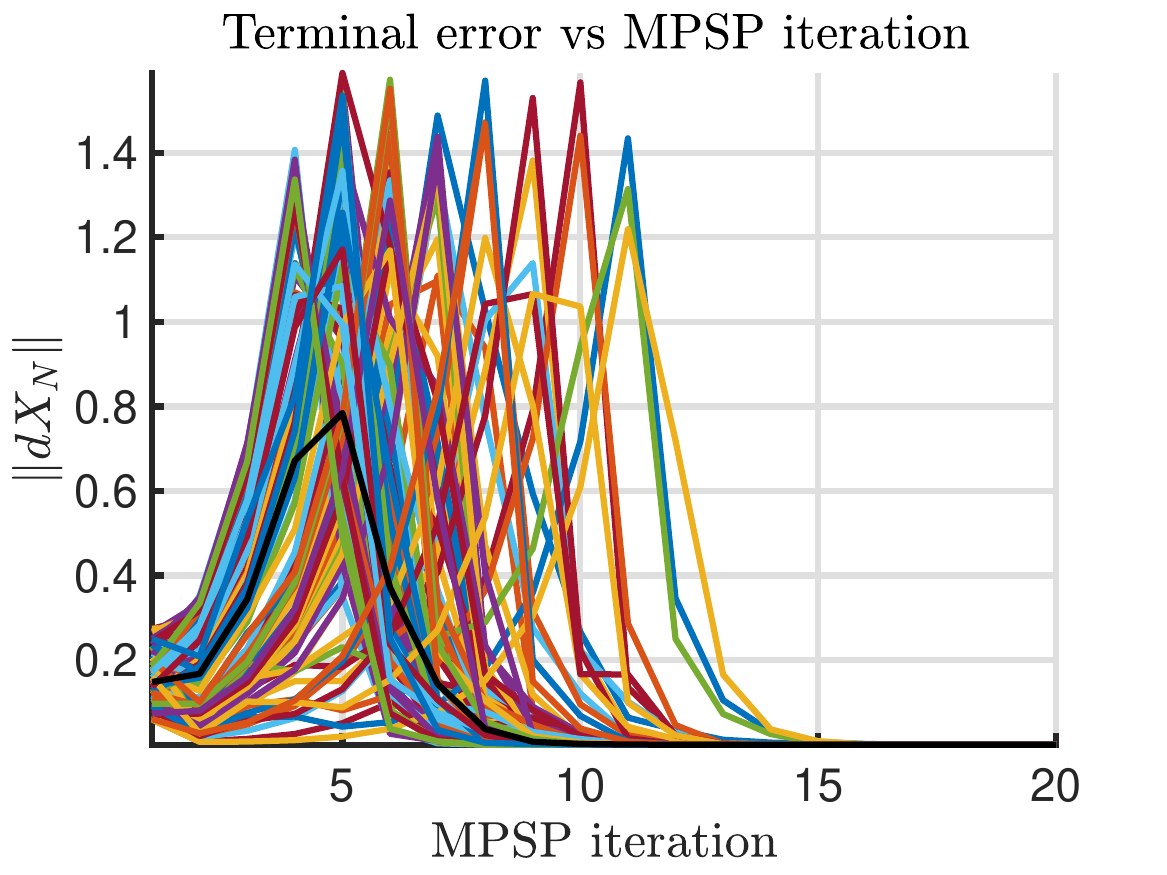}
      \caption{SMRH MC MPSP: terminal deviation norm $\|\delta X_N\|$ vs iteration}
      \label{fig:SMRH_MC_MPSP_IC}
  \end{subfigure}
\end{figure*}
It is worth noting that in several earlier works, MPSP has been successfully employed as an efficient initializer for solving nonlinear two‑point boundary value problems arising from the Pontryagin Maximum Principle. Because MPSP provides a dynamically feasible control sequence with small terminal error at very low computational cost, it often serves as a high‑quality initial guess that significantly improves the convergence of shooting‑based TPBVP solvers. In the present work, although our primary goal is to demonstrate the standalone performance of the Lie‑group MPSP algorithm, the numerical results also indicate that the generated trajectories are sufficiently close to the continuous‑time optimal solution to serve as reliable initializations for PMP‑based methods.


\section{Conclusion}\label{Sec:Conclusion}
In this paper, an extension of the computationally efficient Model Predictive Static Programming (MPSP) algorithm for simple mechanical systems evolving on Lie groups has been presented. The algorithm provides a closed-form update for the optimal control input by converting the finite-time optimal control problem into a static optimization problem, leading to a static Lagrange multiplier and avoiding the need to solve any two-point boundary value problem. The proposed formulation leverages intrinsic linearization, left-trivialized variations, and recursive computation of terminal sensitivity matrices to enable real-time implementation on nonlinear configuration spaces such as Lie groups. By avoiding the numerical difficulties associated with shooting-based TPBVP solvers, the method offers a practical alternative for aggressive finite-time maneuvers. The algorithm was numerically validated through optimal flipping maneuvers of a variable-pitch quadrotor (VPQ) and a single-main-rotor helicopter (SMRH), with the resulting state and control trajectories compared against those obtained from the continuous-time sufficiency conditions and iLQR technique. The simulation results demonstrate that the MPSP-generated trajectories closely match the continuous-time optimal solutions, with negligible deviation in both attitude and control histories. This agreement confirms that the intrinsic deviation model accurately captures the local behavior of the dynamics on Lie groups and that the MPSP updates remain stable even for highly nonlinear, fast maneuvers. Furthermore, the results reinforce a trend observed in earlier literature: MPSP not only serves as a standalone real-time controller but also provides high-quality initial guesses for solving the nonlinear TPBVPs arising from the Pontryagin Maximum Principle. The near-coincidence of MPSP and continuous-time solutions in our simulations suggests that the proposed Lie-group extension can similarly be used to initialize PMP-based solvers for improved convergence and robustness.  Overall, the study establishes MPSP as an effective, computationally efficient, and geometrically consistent framework for finite-time optimal control of mechanical systems evolving on Lie groups.

\bibliography{main.bbl}

\end{document}